\documentclass[11pt]{article} 

\usepackage{amsmath,amsfonts,bm}
\usepackage{amsthm}
\usepackage{mathtools}

\newtheorem{theorem}{Theorem}[section]

\newtheorem{fact}[theorem]{Fact}
\newtheorem{lemma}[theorem]{Lemma}

\newtheorem{definition}{Definition}[section]

        {\hspace*{\fill}$\Box$\par}

\DeclarePairedDelimiterX{\infdivx}[2]{(}{)}{%
  #1\;\delimsize\|\;#2%
}









\def\eqref#1{equation~\ref{#1}}









\def\1{\bm{1}}

\def\eps{{\varepsilon}}




\def\rvv{{\mathbf{v}}}





\def\vv{{\bm{v}}}
\def\vw{{\bm{w}}}


\def\mA{{\bm{A}}}
\def\mB{{\bm{B}}}

\def\mG{{\bm{G}}}
\def\mH{{\bm{H}}}

\def\mK{{\bm{K}}}

\def\mQ{{\bm{Q}}}

\def\mV{{\bm{V}}}
\def\mW{{\bm{W}}}
\def\mX{{\bm{X}}}

\DeclareMathAlphabet{\mathsfit}{\encodingdefault}{\sfdefault}{m}{sl}
\SetMathAlphabet{\mathsfit}{bold}{\encodingdefault}{\sfdefault}{bx}{n}


\def\gN{{\mathcal{N}}}

\def\gW{{\mathcal{W}}}










\newcommand{\E}{\mathbb{E}}

\newcommand{\R}{\mathbb{R}}

\newcommand{\tr}{\mathrm{tr}}


\DeclareMathOperator{\Tr}{Tr}

\newcommand{\tail}{\operatorname{tail}}
 \renewcommand{\epsilon}{\varepsilon}

\usepackage{paper}

\title{Optimal Query Complexities for Dynamic Trace Estimation}

\author{David P. Woodruff\footnote{Work done while at Google Research in Pittsburgh.} \\ Carnegie Mellon University \\ \texttt{dwoodruf@cs.cmu.edu} \and Fred Zhang\footnote{Work done while the author was a resaerch intern at Google Brain.} \\ UC Berkeley \\ \texttt{z0@berkeley.edu} \\ \and Qiuyi (Richard) Zhang  \\ Google Brain \\ \texttt{qiuyiz@google.com}}

\date{}

\begin{document}

\maketitle

\begin{abstract}
We consider the problem of minimizing the number of matrix-vector queries needed for accurate trace estimation in the dynamic setting where our underlying matrix is changing slowly, such as during an optimization process. Specifically, for any $m$ matrices $\mA_1,...,\mA_m$ with consecutive differences bounded in Schatten-$1$ norm by $\alpha$, we provide a novel binary tree summation procedure that simultaneously estimates all $m$ traces up to $\epsilon$ error with $\delta$ failure probability with an optimal query complexity of $\widetilde{O}(m \alpha\sqrt{\log(1/\delta)}/\epsilon + m\log(1/\delta))$, improving the dependence on both $\alpha$ and $\delta$ from Dharangutte and Musco (NeurIPS, 2021). Our procedure works without additional norm bounds on $\mA_i$ and can be generalized to a bound for the $p$-th Schatten norm  
for $p \in [1,2]$, giving a complexity of $\widetilde{O}(m \alpha(\sqrt{\log(1/\delta)}/\epsilon)^p +m \log(1/\delta))$. 

By using novel reductions to communication complexity and information-theoretic analyses of Gaussian matrices, we provide matching lower bounds for static and dynamic trace estimation in all relevant parameters, including the failure probability. Our lower bounds (1) give the first tight bounds for Hutchinson's estimator in the matrix-vector product model with Frobenius norm error {\it even in the static setting}, and (2) are the first unconditional lower bounds for dynamic trace estimation, resolving open questions of prior work.
\end{abstract}

\newpage
\section{Introduction}
\label{sec:intro}
Implicit matrix trace estimation is ubiquitous in numerical linear algebra and arises naturally in a wide range of applications, see, e.g.,~\cite{ubaru2018trace_est_applications}. 
In this problem, we are given an oracle which gives us matrix-vector products $\mA x_1,\mA x_2,\cdots ,\mA x_m$ for an unknown $n\times n$ square matrix $\mA$ and queries $x_1, \ldots, x_m$ of our choice, that may be chosen adaptively. In typical applications, one cannot afford to compute the diagonal entries of $\mA$ explicitly, due to $\mA$ being implicitly represented and computational constraints. The goal is to efficiently estimate   $\Tr \mA$ using only matrix-vector products.

In machine learning and data science, applications of trace estimation include training Gaussian Processes~\cite{dong2017log_determinant_gp, fgcofr2017entropy_logdet},   triangle counting \cite{avron2010triangle_counting},    computing the Estrada Index~\cite{eh2008graph_estrada_index, estrada2000characterization}, and studying   optimization landscapes  of deep neural networks from Hessian matrices~\cite{gkx2019investigation, ygkm2020pyhessian}. 
In these applications, it is common that $\mA$ is represented implicitly due to its large memory footprint. For example, while it is possible to compute Hessian-vector products via Pearlmutter's trick~\cite{pearlmutter1994hv_trick}, it is prohibitive to compute or store the   Hessian matrix $\mH$, see, e.g., ~\cite{gkx2019investigation}. 

Moreover, $\mA$ may be a matrix function $f$ of another matrix $\mB$ in some applications. Since computing $f(\mB)$ is expensive, it is desirable to apply implicit trace estimation. 
For example, during the training of Gaussian Processes, 
the marginal log-likelihood contains a heavy-computation
term, i.e., the log of the determinant of the covariance matrix, $\log(\det(\mK))$, where $\mK \in \mathbb{R}^{n\times n}$ and $n$ is the number of data points. The canonical way of computing $\log(\det(\mK))$ is via a Cholesky factorization on $\mK$, which  takes   $O(n^3)$ time. Instead, implicit trace estimation methods   provide fast algorithms for approximating $\log(\det(\mK)) = \sum_{i=1}^{n}\log(\lambda_i) = \tr(\log(\mK))$ on large-scale data. Therefore, it is important to understand the fundamental limits of implicit trace estimation as the {\it query complexity}, i.e., the minimum number of matrix-vector multiplications required to achieve a desired accuracy and success rate.

\paragraph{Static trace estimation and Hutchinson's method}
On the algorithmic side, Hutchinson's method~\cite{hut1990hutchisons_method} is a simple and  widely used method for trace estimation. Let $\mQ = [q_1, \dots, q_\ell] \in \mathbb{R}^{n \times \ell}$ be $\ell$ vectors with i.i.d.\ standard Gaussian or Rademacher random variables. Given matrix-vector multiplication access to $\mA$, Hutchinson's method estimates $\tr(\mA)$ by $t= \frac{1}{q}\sum_{i=1}^{q} q_i^T \mA q_i = \frac{1}{q} \tr(\mQ^T \mA \mQ)$.  It is known \cite{at2011trace_est_psd} that the   estimator satisfies that for any $\eps,\delta \in (0,1)$,
\begin{equation}\label{eqn:static}
    |t - \Tr \mA| \leq \eps \|\mA\|_F, \text{ with probability at least }  1-\delta,
\end{equation}
 provided the number $\ell$ of queries satisfies $\ell \geq C \log(1/\delta)/\eps^2$ for some fixed constant $C$. 
 
For Hutchinson's method, there is also previous work which showed for queries of the form $x^\top \mA x$, $\Omega(1/\epsilon^2)$ queries are required \cite{roosta2015improved}; however, this does not imply even a lower bound for non-adaptive algorithms that use matrix-vector queries. Though stronger algorithmic results and matching lower bounds are known for the important case of PSD matrices in the non-adaptive setting \cite{mmmw2020hutch_pp, jiang2021optimal}, the optimality of Hutchinson's estimator as an trace estimator for general square matrices in the matrix-vector product model still remains an open problem. Notably, Hutchinson's method chooses the query vectors non-adaptively and it is furthermore unclear whether adaptivity could help. 

More generally, there has been a flurry of recent work that gives trace estimators with $o(1/\epsilon^2)$ query complexity but with a different error guarantee. Specifically, let us consider a Schatten-$p$ norm error guarantee, where the goal is to provide an estimate $t$ such that 
\begin{equation}\label{eqn:static-p}
    |t - \Tr \mA| \leq \eps \|\mA\|_p, \text{ with probability at least }  1-\delta,
\end{equation}
where $\|\mA\|_p$ denotes the Schatten-$p$ norm. 

For $p=1$, a previous work~\cite{mmmw2020hutch_pp} 
proposes a variance-reduced version of Hutchinson's method that uses only $O(1/\epsilon)$ matrix-vector product queries  to achieve a nuclear norm error of $\eps\|\mA\|_*$, in contrast to the $O(1/\epsilon^2)$   queries used when the error is in the Frobenius norm. When the matrix is positive semidefinite (PSD), the nuclear norm error is equivalent to a $(1+\eps)$ multiplicative approximation to the trace. Their work, along with a subsequent work \cite{jiang2021optimal}, shows that $\Omega(1/\eps)$ queries are therefore sufficient and necessary to achieve a $(1+\eps)$ multiplicative trace approximation in this setting. While this line of work mainly focuses on PSD matrices and nuclear norm error, we consider trace estimation on general square matrices with  Schatten-$p$ norm error for any $p \in [1,2]$.

Furthermore, we note that the variance-reduced Hutchinson's method splits the queries between approximating the top $O(1/\epsilon)$ eigenvalues, i.e., by computing a rank-$O(1/\epsilon)$ approximation to $\mA$, and performing Hutchinsons's method on the remainder. Due to the low rank approximation subroutine, the query complexity's dependence on the failure probability is more concretely $O(\sqrt{\log(1/\delta)}/\epsilon + \log(1/\delta))$ for additive $\eps \|\mA\|_*$ error. The additive $\log(1/\delta)$ rate is shown to be necessary when non-adaptive queries are used, but it is an open problem whether adaptive queries can remove the additive $\log(1/\delta)$ term for trace estimation with Schatten-$p$ norm error \cite{jiang2021optimal}.

This motivates the natural question:

\begin{quote}
   \emph{Question 1: Is Hutchinson's method optimal in terms of $\epsilon$ and $\delta$ for static trace estimation of general square matrices, even when adaptivity is allowed? How do we generalize Hutchinson's method for error in general Schatten-$p$ norms?}
\end{quote}



\paragraph{Dynamic trace estimation}  
In various applications the input matrix is not fixed. For example, during model training, we need to estimate the trace of a dynamically changing Hessian matrix  with respect to some loss function. One may assume that the change at each step is not very large. Motivated by such a scenario, a recent work by Dharangutte and Musco \cite{dharangutte2021dynamic} studies dynamic trace estimation. 

Formally, let $p\in [1,2]$ and $\mA_1,\mA_2,\cdots, \mA_m$ be $n\times n$ matrices in a stream such that (1) $\|\mA_i\|_p \leq 1 $ for all $i > 1$, where $\|\cdot\|_p$ denotes the Schatten-$p$ norm, and (2) $\|\mA_{i+1} -\mA_{i}\|_p \leq \alpha < 1$ for all $i \leq m-1$.  The goal is to output a sequence of estimates $t_1,\cdots, t_m$ such that for each $i \in [m]$,
\begin{equation}\label{eqn:dynamic}
    |t_i - \Tr \mA_i| \leq \eps , \text{ with probability at least }  1-\delta,
\end{equation}
via matrix-vector multiplication query access to the first $i$ matrices $(\mA_j)_{j=1}^i$. 
Na\"ively, one could estimate each $\Tr \mA_i$ independently using Hutchinson's method. This, however, does not exploit that the changes are bounded at each step. Alternatively, one can rewrite $\Tr \mA_i$ as $\Tr \mA_1 + \sum_{i=2}^{i} \Tr (\bm{\Delta}_i)$, where $\bm{\Delta}_i = \mA_i - \mA_{i-1}$, by linearity of the trace, and apply Hutchinson's method on each term. Unfortunately, this scheme suffers from an accumulation of errors  over the steps.

The prior work \cite{dharangutte2021dynamic}  is focused on $p=\{1,2\}$ and improves upon the na\"ive ideas above.
For $p=1$, the authors give a method that uses $O\left(m\sqrt{\alpha/\delta}/\eps + \sqrt{1/\delta}/\eps\right)$ queries. For $p=2$, they provide an algorithm with query complexity $O(m\alpha\log(1/\delta)/\eps^2 + \log(1/\delta)/\eps^2)$ and a \textit{conditional} lower bound showing that this is tight. This leaves open the question: 
\begin{quote}
   \emph{Question 2: Can we design improved algorithms for dynamic trace estimation under a general Schatten norm assumption? Can we prove an  unconditionally optimal lower bound?}
\end{quote}
 
\subsection{Our Results}

\begin{table}[t]
    \centering
    \scalebox{0.8}{
    \begin{tabular}{|c|c|c|c|c|}
    \hline
    \multicolumn{5}{|c|}{\textbf{Upper Bounds}}\\
    \hline
        Prior Work & Query Complexity & Matrix Type & Failure Rate & Algorithm Type \\
    \hline
        \cite{at2011trace_est_psd, ra2015implicit_trace_est} & $O(\log(1/\delta)/\epsilon^2)$ & general square & $\delta$ & non-adaptive, $p = 2$\\
    \hline
        \cite{mmmw2020hutch_pp}  & $O(\sqrt{\log(1/\delta)}/\epsilon + \log(1/\delta))$ &  PSD & $\delta$ & adaptive, $p=1$\\
    \hline
        \cite{mmmw2020hutch_pp} & $O(\log(1/\delta)/\epsilon)$ & PSD & $\delta$ & non-adaptive, $p = 1$\\
    \hline
        \cite{jiang2021optimal} & $O(\sqrt{\log(1/\delta)}/\epsilon + \log(1/\delta))$ & PSD & $\delta$ & non-adaptive, $p = 1$\\
    \hline
    \textbf{This work \textsuperscript{1}} & $O((\sqrt{\log(1/\delta)}/\epsilon)^p + \log(1/\delta))$ & PSD & $\delta$ & non-adaptive, general $p$ \\
    \hline
    \hline
        \multicolumn{5}{|c|}{\textbf{Lower Bounds (Adaptive)}}\\
    \hline
        \cite{mmmw2020hutch_pp} & $\Omega(1/(b+\eps\log(1/\eps)))$ & general square, bit & constant & adaptive, $p = 1$\\
    
    \hline
    \textbf{This work   \textsuperscript{2}}  & $\Omega\left(\frac{1}{\epsilon^p (b + \log(1/\epsilon)) }   + \frac{\log (1/\delta)}{(b + \log\log(1/\delta))} \right)$  & general square, bit & $\delta$ & adaptive, general $p$ \\
    \hline
     \textbf{This work   \textsuperscript{3}} & $\Omega\left((\sqrt{\log(1/\delta)}/\eps)^p \right)$  & general square, ram & $\delta$ & adaptive, general $p$\\
    \hline
    \multicolumn{5}{|c|}{\textbf{Lower Bounds (Non-Adaptive)}}\\
    \hline
        \cite{mmmw2020hutch_pp} &
        $\Omega(1/\eps)$ & PSD, ram & constant & non-adaptive, $p = 1$\\
    \hline
        \cite{jiang2021optimal}&
        $\Omega(\sqrt{\log(1/\delta)}/\epsilon + \frac{\log(1/\delta)}{\log\log(1/\delta)})$ & PSD, ram &  $\delta$ & non-adaptive, $p = 1$\\
    \hline
     \textbf{This work  \textsuperscript{4}} & $\Omega\left({\log^{p/2} (1/\delta)}/({\epsilon^p(b + \log(1/\epsilon))})  \right)$  & general square, bit & $\delta$ & non-adaptive, general $p$  \\
    
    \hline
     \textbf{This work  \textsuperscript{5}}   & $\Omega\left((\sqrt{\log(1/\delta)}/\eps)^p + \frac{\log(1/\delta)}{\log\log(1/\delta)} \right)$  & general square, ram & $\delta$ & non-adaptive, general $p$\\
    \hline
    
    \end{tabular}}\vspace*{0.3em}
    \caption{Upper and lower bounds on the query complexity for static trace estimation.   In the bit complexity model, each entry of the query vector is specified by $b$ bits, and the dependence on $b$ is necessary. \\
    \textsuperscript{\bf 1}:  A static upper bound generalizing     Hutch++ \cite{mmmw2020hutch_pp} to   Schatten-$p$ norm error  (\cref{thm:hpp-general}).  \\
      \textsuperscript{\bf 2}: An adaptive lower bound via communication complexity of the {Gap Equality} and {Approximate Orthogonality} problem (\cref{thm:main-lowerbound}), which combines \cref{thm:adap-lower} and \cref{thm:adap-lower-2}, resolving an open problem that $\log(1/\delta)$ queries are required in the adaptive setting. \\
       \textsuperscript{\bf 3}: An adaptive lower bound via information-theoretic analysis of Gaussian Wigner matrices (\cref{thm:lb_small}), showing optimal dependence on $\log(1/\delta)$.\\
       \textsuperscript{\bf 4}: A non-adaptive lower bound via communication complexity of {Augmented Indexing} (\cref{thm:non-adaptive}), optimal in all parameters up to the bit complexity term. \\
        \textsuperscript{\bf 5}: A non-adaptive lower bound combining our \cref{thm:lb_small} and the prior result from \cite{jiang2021optimal}.}
    \label{tab:bounds_query_complexity}
\end{table}

    

Our work resolves the proposed questions (nearly) optimally, and we next discuss our main results.

\paragraph{Static trace estimation} For Question 1, we prove query complexity lower bounds for implicit trace estimation in both bit complexity and real RAM models of computation, resolving the open problem of establishing unconditional lower bounds for the optimality of Hutchinson's method even in the adaptive setting. 

To do so, we provide new reductions from classic communication complexity problems,  including \textsc{Gap-Equality} and \textsc{Approximate-Orthogonality}, to matrix trace estimation. Our main lower bounds demonstrate that $\log(1/\delta)$ queries are always needed even with adaptivity and for general $p$, there is an additional $1/\epsilon^p$ dependence. A key idea is a communication protocol simulation using the product of two matrices rather than the sum, as was used in prior work on PSD lower bounds \cite{mmmw2020hutch_pp}. 


\begin{theorem}[Informal; see \cref{thm:main-lowerbound}]
In the bit complexity model, where each entry of each query vector is specified using $b$ bits, 
$$\Omega\left(\frac{1}{\epsilon^p(b + \log(1/\epsilon)) }  + \frac{\log (1/\delta)}{b + \log\log(1/\delta)} \right)$$ 
number of adaptive queries is necessary to achieve $\eps\|\mA\|_p$ error with probability at least $1-\delta$.
\end{theorem}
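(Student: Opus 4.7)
The overall plan is to establish the two summands of the lower bound by two separate reductions from communication complexity problems and to combine them by taking the maximum, since each problem witnesses hardness in a different regime. The $\frac{1}{\epsilon^p(b+\log(1/\epsilon))}$ term will come from \textsc{Approximate-Orthogonality}, and the $\frac{\log(1/\delta)}{b+\log\log(1/\delta)}$ term from \textsc{Gap-Equality}. In each reduction, the strategy is to build a matrix $\mA$ whose Schatten-$p$ norm is normalized and whose trace distinguishes the two communication instances to within $\epsilon$, so that any $q$-query $\epsilon\|\mA\|_p$-accurate trace estimator yields a two-party protocol of cost $O(q\cdot B)$ bits, where $B$ is the per-query simulation overhead.

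For the $\frac{\log(1/\delta)}{b+\log\log(1/\delta)}$ term, the plan is to reduce from $\textsc{Gap-Equality}$, whose private-coin randomized communication complexity is $\Omega(\log(1/\delta))$ when the allowed failure probability is $\delta$ (obtained from the constant-error $\Omega(1)$ bound by a standard direct-sum / hardness-amplification argument across $\Theta(\log(1/\delta))$ independent copies). Given Alice's input $x$ and Bob's input $y$, I form $\mA = \mU(x)^\top \mV(y)$ where $\mU$ and $\mV$ are sign/indicator encodings chosen so that $\tr(\mA)=\langle u(x),v(y)\rangle$ encodes the equality test while $\|\mA\|_p = 1$. Following the product-decomposition idea flagged in the introduction, a matrix-vector query $\mA z = \mU(x)^\top(\mV(y)z)$ can be simulated by Alice transmitting her discretized portion of $z$ to Bob, who returns the intermediate vector; this yields per-query communication $O(b+\log\log(1/\delta))$ after truncating to $O(\log\log(1/\delta))$ bits of magnitude precision, which is what matches the communication lower bound against the query complexity.

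For the $\frac{1}{\epsilon^p(b+\log(1/\epsilon))}$ term, the plan is to reduce from \textsc{Approximate-Orthogonality}, which encodes $k=\Theta(1/\epsilon^p)$ direction pairs that are either exactly orthogonal or have one nearly-aligned pair. The construction plants a rank-$k$ perturbation so that the Schatten-$p$ norm of $\mA$ is normalized to $1$ but its trace jumps by $\Theta(\epsilon)$ between the two cases; the exponent $p$ enters naturally because $k$ perturbations of magnitude $\epsilon$ contribute $\epsilon\cdot k^{1/p}$ to $\|\mA\|_p$. The same bilinear simulation as above bounds the per-query cost at $O(b+\log(1/\epsilon))$ bits, where the $\log(1/\epsilon)$ is the extra precision needed to preserve the $\Theta(\epsilon)$ trace gap through discretization. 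Dividing the communication lower bound by the per-query cost yields the claimed term.

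The main technical obstacle is the product-rather-than-sum simulation: prior PSD lower bounds used $\mA = f(x)+g(y)$, for which a single matrix-vector query naively forces each party to transmit an entire $n$-dimensional vector and thus gives a vacuous bound once divided by $q$. Expressing $\mA$ as $f(x)^\top g(y)$ lets the parties communicate only the intermediate product $g(y)z$, but requires careful verification of three points: (i) keeping $\|\mA\|_p$ exactly normalized under the product structure, (ii) discretizing real entries to $b$ bits without collapsing the $\Theta(\epsilon)$ trace gap, which is what forces the additive $\log(1/\epsilon)$ and $\log\log(1/\delta)$ terms inside the denominator, and (iii) handling adaptivity, which is automatic for communication protocols but needs an induction on rounds to justify in the simulation. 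A secondary subtlety is that the $\textsc{Gap-Equality}$ step requires the error-$\delta$ communication lower bound rather than the constant-error version, which I would obtain by running independent constant-error copies and arguing via a direct-sum / majority-boost lower bound.
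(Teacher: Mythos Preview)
Your high-level scaffolding --- two separate reductions, \textsc{Approximate-Orthogonality} for the $\epsilon$-term and \textsc{Gap-Equality} for the $\delta$-term, with the product factorization $\mA=f(x)g(y)$ --- matches the paper. But the \textsc{Gap-Equality} reduction as you describe it has a genuine gap. You plan to invoke an $\Omega(\log(1/\delta))$ \emph{randomized} $\delta$-error lower bound (via ``direct sum over $\Theta(\log(1/\delta))$ copies,'' which is not a standard theorem) and then run the two-way product simulation at cost $O(b+\log\log(1/\delta))$ per query. That per-query accounting cannot be right: in a two-way adaptive simulation each round must transmit an $n$-dimensional vector, so the cost is $\Theta(n)$ times the bit precision, and dividing $\Omega(\log(1/\delta))$ by that gives only $r=\Omega(1)$. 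The paper avoids this by a different mechanism entirely. It sets $n=\Theta(\log(1/\delta))$, takes $\mA=(x-y)(x-y)^\top$, and first union-bounds over the $2^{O(n)}$ promise inputs to fix the estimator's coins so it is correct on \emph{every} instance. The crucial observation is then that when $x=y$ the matrix is $\mathbf{0}$ and every query answer is $0$; hence the sequence of queries the algorithm would issue while repeatedly seeing $0$ is predetermined, and the algorithm is effectively \emph{non-adaptive} on these inputs. Now Alice can send the $r$-dimensional vector $\mQ x$ in one shot, Bob compares to $\mQ y$, and this is a \emph{deterministic} protocol for \textsc{Gap-Equality} of cost $r\cdot(b+\log n)=r\cdot(b+\log\log(1/\delta))$, which must be $\Omega(n)=\Omega(\log(1/\delta))$. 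The zero-matrix $\Rightarrow$ non-adaptive $\Rightarrow$ deterministic chain is the idea you are missing, and without it the $\delta$-term does not go through.

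Your \textsc{Approximate-Orthogonality} sketch is also off, though less fatally. $\mathrm{ORT}_{b,m}$ is not a problem about ``$k=\Theta(1/\epsilon^p)$ direction pairs''; it is a \emph{single} inner product $\langle x,y\rangle$ on $\{-1,1\}^m$ with communication $\Omega(m)$. The paper takes $n=\Theta(1/\epsilon^p)$ and $m=n^2$, packs $x$ into the rows of an $n\times n$ matrix and $y$ into the columns of another so that $\Tr(\mA\mB)=\langle x,y\rangle$, and controls $\|\mA\mB\|_p$ not by exact normalization but in probability via $\|\cdot\|_p\le n^{1/p-1/2}\|\cdot\|_F$ together with a Markov bound on $\E\|\mA\mB\|_F^2=n^3$ under the uniform input distribution. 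The per-query simulation cost is $O(n\cdot(b+\log(1/\epsilon)))$ --- an $n$-dimensional vector each way --- and dividing $\Omega(m)=\Omega(n^2)$ by this yields $r=\Omega(n/(b+\log(1/\epsilon)))$, which is the claim. Your stated per-query cost of $O(b+\log(1/\epsilon))$ would give a bound a factor $n$ too strong, which signals that the ``rank-$k$ perturbation'' construction you have in mind is not the right one.
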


When adaptivity is not allowed, we give a stronger lower bound (\cref{thm:non-adaptive}) of $$\Omega(\log^{p/2}(1/\delta)/\eps^p).$$
This matches the guarantee of  Hutchinson's non-adaptive estimator up to a constant factor, for which random sign vectors suffice and so one can take $b = O(1)$. 


We also provide a query complexity lower bound in the real RAM model for general Schatten-$p$ norms with $p \in [1,2]$ by using Gaussian ensembles and controlling the remaining entropy of the distribution conditioned on prior queries. In the special case of $p=2$ (i.e., Frobenius norm error guarantee), our bound again matches the classic Hutchinson's method up to a constant factor for $p = 2$, and an additive $\log(1/\delta)$ factor for $p < 2$. Note that in the non-adaptive setting, our lower bound in the RAM model can also be improved for $p < 2$ to include a $\log(1/\delta)/\log\log(1/\delta)$ factor. Therefore, this lower bound emphasizes that our dependence on $\log(1/\delta)$ in the $\epsilon$-dependent term is tight, even in the adaptive setting.

\begin{theorem}[Informal; see \cref{thm:lb_small}]
In the real RAM model, where the queries are real-valued, for sufficiently small $\eps$ and any $p \in [1,2]$,
 $\Omega\left( \left({\sqrt{\log(1/\delta)}}/{\epsilon}\right)^p\right)$
 number of adaptive queries is necessary to  achieve $\eps \|\mA\|_p$ error with probability at least $1-\delta$. 
\end{theorem}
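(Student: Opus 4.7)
The plan is to use a Gaussian Orthogonal Ensemble (GOE) $\mA$ of dimension $n \times n$ as the hard instance, and argue via rotational invariance that any adaptive matrix--vector algorithm leaves enough residual Gaussian mass in $\Tr \mA$ to force failure with probability at least $\delta$. Concretely, draw $\mA_{ii} \sim N(0, 2\sigma^2)$ and $\mA_{ij} = \mA_{ji} \sim N(0, \sigma^2)$ for $i < j$, with $\sigma$ and $n$ to be chosen. Standard Gaussian matrix tail bounds give $\|\mA\|_{\mathrm{op}} \le 3\sigma\sqrt{n}$ with probability at least $0.99$, and since $\|\mA\|_p^p = \sum_i |\lambda_i|^p \le n \|\mA\|_{\mathrm{op}}^p$,
\[
\|\mA\|_p \;\le\; 3\sigma\, n^{1/p + 1/2} \qquad \text{w.p.\ at least } 0.99,
\]
uniformly in $p \in [1, 2]$.

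\textbf{Reduction to a block decomposition and posterior trace.} An adaptive algorithm produces queries $v_1, \ldots, v_k$ with each $v_i$ a measurable function of $\mA v_1, \ldots, \mA v_{i-1}$. Online Gram--Schmidt is an invertible transformation of the transcript, so we may assume the $v_i$ are orthonormal. Letting $V = [v_1, \ldots, v_k]$ and extending to an orthonormal basis $U = [V, V_\perp]$, the observed matrix $\mA V$ determines $\mA_{VV} = V^\top \mA V$ and $\mA_{V_\perp V} = V_\perp^\top \mA V$, hence by symmetry also $\mA_{V V_\perp}$. Because the GOE is rotationally invariant, applying this invariance conditionally along each branch of the algorithm's decision tree shows that the remaining block $\mA_{V_\perp V_\perp}$ is, conditional on the transcript, still GOE-distributed on the $(n-k)$-dimensional orthogonal subspace and independent of everything the algorithm has seen. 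Since $\Tr \mA = \Tr \mA_{VV} + \Tr \mA_{V_\perp V_\perp}$, where the first term is determined by the transcript and the second is an independent $N(0, 2(n-k)\sigma^2)$, any (possibly randomized) estimator $t$ satisfies, conditional on the transcript,
\[
\Pr\!\left[\,|\Tr \mA - t| > \tau\,\right] \;\ge\; \delta \quad \text{whenever} \quad \tau \;\le\; c\, \sigma \sqrt{(n-k) \log(1/\delta)},
\]
by Gaussian anti-concentration with a universal $c > 0$.

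\textbf{Parameter choice and main obstacle.} For the algorithm to succeed with probability at least $1 - \delta$ on the event $\{\|\mA\|_p \le 3\sigma n^{1/p+1/2}\}$, the above two estimates force
\[
n - k \;\le\; C'\, \frac{\epsilon^2\, n^{2/p + 1}}{\log(1/\delta)}
\]
for a universal $C'$. Choosing $n = \lfloor c_0 (\sqrt{\log(1/\delta)}/\epsilon)^p \rfloor$ with $c_0$ a small enough absolute constant, the right-hand side is at most $n/2$, so $k \ge n/2 = \Omega((\sqrt{\log(1/\delta)}/\epsilon)^p)$, proving the theorem. The ``sufficiently small $\epsilon$'' hypothesis is what guarantees $n \ge 2$. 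The main technical obstacle is formalizing the conditional rotational-invariance step above: one must argue carefully, along the adaptive decision tree, that after each query the $(V_\perp, V_\perp)$ block remains a fresh GOE independent of the transcript, even though $V$ is itself a random function of $\mA$. This is standard but requires care in bookkeeping; the remainder reduces to the Wigner operator-norm tail bound and the elementary Gaussian anti-concentration computation already sketched.
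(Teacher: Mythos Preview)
Your approach is the same as the paper's: a GOE/Wigner hard instance, the conditional rotational-invariance decomposition (which the paper invokes as a black-box lemma of Simchowitz et al.), an operator-norm bound to control $\|\mA\|_p$, and Gaussian anti-concentration on the residual $\Tr \mA_{V_\perp V_\perp}$. There is, however, one quantitative slip that as written would lose the $\log(1/\delta)$ factor and collapse the bound to $\Omega(1/\epsilon^p)$.

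You control $\|\mA\|_{\mathrm{op}} \le 3\sigma\sqrt n$ only with probability $0.99$. When $\delta \ll 0.01$ this is not enough: on the complementary event $\{\|\mA\|_p > 3\sigma n^{1/p+1/2}\}$ (probability up to $0.01$) the allowed error $\epsilon\|\mA\|_p$ is large and the algorithm may succeed there for free. Combining your two estimates therefore only yields $\Pr[|t-\Tr\mA|\le\tau]\ge 1-\delta-0.01$, and anti-concentration against a Gaussian of standard deviation $\sigma\sqrt{2(n-k)}$ then forces merely $\tau \ge c\,\sigma\sqrt{n-k}$ rather than $\tau \ge c\,\sigma\sqrt{(n-k)\log(1/\delta)}$; plugging this into your displayed inequality removes the $\log(1/\delta)$ from the denominator. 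The fix is exactly what the paper does: use the sub-Gaussian operator-norm tail $\Pr[\|\mA\|_{\mathrm{op}}>C\sigma\sqrt n]\le e^{-cn}$ to obtain the norm event with probability $1-\delta/10$, which requires $n \ge c'\log(1/\delta)$. With your choice $n = \Theta\bigl((\sqrt{\log(1/\delta)}/\epsilon)^p\bigr)$ this is precisely the constraint $\epsilon<(\log(1/\delta))^{1/2-1/p}$ that appears in the formal theorem; your ``sufficiently small $\epsilon$'' needs to encode this, not just $n\ge 2$.
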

On the algorithmic front, we give a matching upper bound for static trace estimation for general Schatten-$p$ norm error for $p\in [1,2]$. The argument requires a careful balancing of the $\epsilon$ and $\delta$ parameters in the low rank approximation of the Hutch++ procedure from \cite{mmmw2020hutch_pp}. See \cref{thm:hpp-general} for a full statement.




\paragraph{Dynamic trace estimation} 
To answer Question 2, we first give an improved algorithm for dynamic trace estimation that uses a binary tree-based decomposition to estimate all matrix traces with only a small logarithmic overhead. The algorithm improves upon the previous work \cite{dharangutte2021dynamic} and gets an optimal dependence on $0  < \alpha, \delta < 1$, up to logarithmic factors. Specifically, for $p=1$, the prior work gives a method that uses $O\left(m\sqrt{\alpha/\delta}/\eps\right)$ queries for small $\epsilon$, while our algorithm gives an improved $O(m\alpha \sqrt{\log(1/\delta)}/\epsilon)$ bound with a linear dependence on $\alpha$ and square root dependence on $\log(1/\delta)$. For $p=2$, our algorithm matches the query complexity of $O(m\alpha\log(1/\delta)/\eps^2)$ given by previous work. Furthermore, our algorithm works under a general Schatten-$p$ norm  assumption for any $p\in [1,2]$:

\begin{theorem}[Informal; see \cref{thm:alg} and \cref{thm:alg-p}]
For any $p\in [1,2]$, there is a dynamic trace estimation algorithm that achieves error $\eps$ and failure rate $\delta$ at each step.
The algorithm uses a total of 
\begin{equation}\label{eqn:alg-query-p3} 
     \widetilde O \left ((m\alpha+1)  \left ( {\sqrt{\log(1/(\alpha\delta))}}/{\eps}\right)^p + m  \log(1/(\alpha\delta))  \right)
\end{equation}
matrix-vector product queries. Furthermore, for $p = 1$,   it can be improved to
\begin{equation}\label{eqn:alg-query-pequals1} 
     \widetilde O \left ((m\alpha+1)  \left ( {\sqrt{\log(1/(\alpha\delta))}}/{\eps}\right) + m \min(1, \alpha/\epsilon)  \log(1/(\alpha\delta))  \right)
\end{equation}
\end{theorem}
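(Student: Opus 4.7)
The plan is to exploit a binary-tree summation that amortizes queries across the stream, applying the generalized Hutch++ of \cref{thm:hpp-general} at each node. Build a dyadic tree over $[1,m]$ with $\log_2 m + 1$ levels: at level $\ell$, nodes partition the indices into intervals of length $s = 2^\ell$. For each node $v$ covering indices $\{l+1,\ldots,r\}$, run Hutch++ on the matrix $\mA_r - \mA_l$---whose matrix--vector products are realized by applying the same Gaussian probe to both $\mA_r$ and $\mA_l$---producing an estimate $\widehat{T}_v$ of $\Tr(\mA_r - \mA_l)$ with target absolute error $\eps_\ell := \eps/(3\log m)$ and per-node failure probability $\delta' := \delta/(3m)$. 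Separately, produce a Hutch++ estimate $\widehat{T}_1$ of $\Tr\mA_1$ to error $\eps/3$ and failure $\delta'$. At time $i$, output $\widehat{t}_i := \widehat{T}_1 + \sum_{v \in P_i} \widehat{T}_v$, where $P_i$ is the canonical dyadic decomposition of $\{2,\ldots,i\}$ into at most $\log m$ intervals.

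Correctness follows by the telescoping identity $\Tr\mA_i = \Tr\mA_1 + \sum_{v \in P_i}\Tr(\mA_{r_v} - \mA_{l_v})$ and the triangle inequality: a union bound over the $O(m)$ nodes holds every per-node estimate within $\eps_\ell$ of the truth with probability at least $1-\delta$, whence the per-prefix error is at most $\eps/3 + |P_i|\cdot\eps_\ell \le \eps$. The query-complexity analysis rests on the two-sided bound
\[
    \|\mA_r - \mA_l\|_p \le \min(s\alpha,\ 2),
\]
in which the first bound is the $\alpha$-Lipschitz assumption summed by triangle inequality and the second uses $\|\mA_i\|_p \le 1$. By \cref{thm:hpp-general}, the per-node cost is $\widetilde{O}\bigl((\sqrt{\log(1/\delta')}\,N_\ell/\eps_\ell)^p + \log(1/\delta')\bigr)$ with $N_\ell := \min(2^\ell\alpha,\ 2)$.

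Set $L := \log_2(1/\alpha)$ and sum the Hutch++ term across levels. For $\ell \le L$ the contribution of an entire level is $(m/2^\ell)(2^\ell\alpha)^p = m\cdot 2^{\ell(p-1)}\alpha^p$; for every $p \in [1,2]$ this sums over $\ell \le L$ to $\widetilde{O}(m\alpha)$ (for $p>1$ the geometric series is dominated by $\ell = L$, yielding $m\alpha$; for $p=1$ each of the $L$ levels contributes $m\alpha$). For $\ell > L$, $N_\ell$ saturates at $2$ and the per-level cost is $O(m/2^\ell)$, telescoping to $\widetilde{O}(m/2^L) = \widetilde{O}(m\alpha)$. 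Multiplying by $\log^{p/2}(1/\delta')/\eps^p$ and absorbing $\log(1/\delta') = O(\log(m/\delta)) = O(\log(1/(\alpha\delta)))$ up to polylog overhead yields the stated $(m\alpha+1)(\sqrt{\log(1/(\alpha\delta))}/\eps)^p$ term, with the ``$+1$'' covering the cost of $\widehat T_1$. The additive $\log(1/\delta')$ summand in the per-node cost aggregates to $\widetilde{O}(m\log(1/\delta'))$, dominated by the leaves. For the special case $p = 1$, any node with $s\alpha \le \eps_\ell$ may be skipped by setting $\widehat T_v := 0$: then $|\Tr(\mA_r - \mA_l)| \le \|\mA_r - \mA_l\|_1 \le s\alpha \le \eps_\ell$ deterministically. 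This excises all levels $\ell < \log(\eps_\ell/\alpha)$ and shrinks the additive contribution to $\widetilde{O}((m\alpha/\eps)\log(1/\delta'))$, producing the $\min(1,\alpha/\eps)$ improvement.

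The main obstacle will be the careful accounting across the $O(\log m)$ levels, rather than any single technical step. One must verify that the uniform budget $\eps_\ell = \eps/O(\log m)$ stays within $\eps$ after summing over $|P_i|$ intervals at every time $i$ and that $\log(1/\delta')$ remains within $O(\log(1/(\alpha\delta)))$ after the union bound over $O(m)$ nodes, with all $\log m$ overhead absorbed into $\widetilde{O}$. The skipping argument for $p=1$ also needs care: each skip contributes an additive $\eps_\ell$ to the error, but since $P_i$ contains at most one interval per level, the total skipped error across all levels is still $O(\log m)\cdot \eps_\ell = O(\eps)$, keeping the guarantee intact. Finally, the two-regime geometric sum must balance exactly at $\ell = L$ to yield the clean $m\alpha$ dependence rather than a looser $m$ bound.
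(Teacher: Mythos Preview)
Your proposal is correct and follows the same core idea as the paper: a dyadic decomposition combined with Hutch++ at each node, with the per-node absolute error budget set uniformly across levels so that the $O(\log)$ contributions along any root-to-leaf path sum to $\eps$. There are two organizational differences worth noting.

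First, the paper partitions $[1,m]$ into blocks of length $s=\lceil 1/(2\alpha)\rceil$ and builds a separate depth-$\log(1/\alpha)$ tree inside each block, restarting with a fresh Hutch++ estimate of $\Tr\mA^{(g)}_0$ at each block boundary. You instead build a single depth-$\log m$ tree and rely on the saturation bound $\|\mA_r-\mA_l\|_p\le 2$ for levels $\ell>\log(1/\alpha)$. Both analyses give the same leading term; the paper's grouping keeps the hidden polylog as $\log(1/\alpha)$ rather than your $\log m$, and its per-step union bound is over only $\log(1/\alpha)$ nodes (yielding $\delta'=\alpha\delta$) rather than your global $O(m)$-node union bound. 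Your single-tree variant is in fact what the paper uses for its relaxed-assumption result (\cref{thm:alg-p-relaxed}), so you have essentially rediscovered that version.

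Second, for the $p=1$ refinement you prune all levels with $2^\ell\alpha\le\eps_\ell$, whereas the paper achieves the same $\min(1,\alpha/\eps)$ factor by observing that when $\alpha<\eps$ one only needs a fresh estimate every $\eps/\alpha$ steps and may simply subsample the stream. Your argument is equally valid (and arguably more uniform with the rest of the construction); both exploit $|\Tr X|\le\|X\|_1$ in the same way.
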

Furthermore, since our algorithm avoids the variance reduction technique from \cite{dharangutte2021dynamic}, we may relax the assumptions of dynamic trace estimation and   require only the first matrix to have   norm $\|\mA_1\|\leq 1$, instead of asking the entire sequence $\mA_i$ to be bounded in such a way. While the norm bound on all $\mA_i$ is crucial for the algorithm in \cite{dharangutte2021dynamic} (re-running the analysis na\"ively would give a worse query complexity of $O(m^3 \alpha^3/\epsilon)$), our tree-based algorithm achieves a nearly optimal query complexity even when the norm of $\mA_i$ grows, and we suffer  only a $\log m$ overhead in that case. Moreover, in our experiments, we find that our algorithm significantly outperforms previous algorithms on real and synthetic datasets. See \cref{sec:experiment} for our experimental results.  

To complement our algorithms, we give  unconditional lower bounds showing that our algorithm is nearly optimal. Our lower bounds rely on a reduction from dynamic trace estimation to static matrix trace estimation from \cite{dharangutte2021dynamic} and make use of our new lower bounds in the static setting.  In particular, the reduction shows that if for a fixed set of parameters $\eps, \delta, p$,  a static trace estimation scheme requires $\Omega(r)$ queries, then $\Omega(m\alpha r)$ queries are necessary for any dynamic algorithm. Combining this observation with our static trace estimation lower bounds, we get:

\begin{theorem}[Informal; see \cref{thm:dy-lb-fro} and \cref{thm:dy-lb-nu}]
For any $p=[1,2)$, our algorithm attains the optimal query complexity, up to bit complexity and logarithmic terms.
\end{theorem}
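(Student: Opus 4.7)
The strategy is to combine the reduction from dynamic trace estimation to multiple independent instances of static trace estimation, established in \cite{dharangutte2021dynamic}, with the new static lower bounds proved earlier (notably \cref{thm:lb_small} and the classical Frobenius bound). Together these will match the upper bound from \cref{thm:alg-p} up to logarithmic and bit-complexity factors.

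First, I would re-state the reduction in the form needed here: given a hard distribution on static matrices $\mB$ with $\|\mB\|_p \leq 1$ for which $\Omega(r)$ matrix-vector queries are needed to estimate $\Tr \mB$ to $\eps$ error with failure probability $\delta$, one embeds $k = \Theta(m\alpha)$ independent copies $\mB_1,\dots,\mB_k$ at widely separated positions $i_1 < \dots < i_k$ in the dynamic stream, spacing them $\Theta(1/\alpha)$ apart, and linearly interpolating $\mA_i = (1-\lambda)\mB_j + \lambda\, \mB_{j+1}$ between consecutive embedded matrices. Each consecutive difference then has Schatten-$p$ norm at most $\|\mB_{j+1}-\mB_j\|_p / (i_{j+1}-i_j) \leq 2\alpha$, so the stream is feasible. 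Since the dynamic algorithm's output at time $i_j$ must also be a valid estimator of $\Tr \mB_j$ using only queries seen so far (and queries made strictly before position $i_j$ are information-theoretically independent of $\mB_j$), an averaging argument over the $k$ embedded instances forces the total query count to be $\Omega(m\alpha\, r)$.

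Next, I would plug in the two static lower bounds: for $p=2$ (yielding \cref{thm:dy-lb-fro}), the classical Hutchinson-style lower bound gives $r = \Omega(\log(1/\delta)/\eps^2)$, so the dynamic bound becomes $\Omega(m\alpha \log(1/\delta)/\eps^2)$; for general $p \in [1,2)$ (yielding \cref{thm:dy-lb-nu}), applying \cref{thm:lb_small} gives $r = \Omega((\sqrt{\log(1/\delta)}/\eps)^p)$, so the dynamic bound becomes $\Omega(m\alpha (\sqrt{\log(1/\delta)}/\eps)^p)$. Finally, an elementary $\Omega(m)$ bound follows because any algorithm that makes fewer than $m$ queries in total has at least one time step whose estimate is completely independent of $\mA_i$; combined with the $\delta$-dependent bit-complexity argument from \cref{thm:main-lowerbound} this recovers the additive $m\log(1/(\alpha\delta))$ term of the upper bound up to the $\log\log$ slack.

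\textbf{Main obstacle.} The delicate point is verifying that the hard distributions underlying \cref{thm:lb_small}, which are scaled Gaussian Wigner matrices, can actually be placed inside the $\|\cdot\|_p \leq 1$ dynamic budget after interpolation, and that adaptive queries cannot be reused across the independently drawn embedded instances. Both issues are handled by choosing the scale of the Wigner matrices so that $\|\mB_j\|_p$ and $\|\mB_j - \mB_{j'}\|_p$ concentrate at a constant (a standard Schatten-$p$ concentration computation for random matrices), and by noting that the static information-theoretic lower bound of \cref{thm:lb_small} is robust to side information consisting only of earlier-embedded blocks, since those blocks are drawn independently. The rest of the proof is an averaging/counting calculation whose only slack is exactly the logarithmic and bit-complexity terms already declared in the statement.
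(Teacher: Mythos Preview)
Your high-level strategy—reduce dynamic trace estimation to static trace estimation and then plug in the static lower bounds \cref{thm:lb_small} and \cref{thm:main-lowerbound}—is exactly what the paper does. The difference is in the reduction itself, and yours has a gap.

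The paper's reduction (\cref{lem:cond-lb}) is far simpler than yours: it embeds a \emph{single} static hard instance $\mA$ by setting $\mA_i = (i-1)\alpha\,\mA$ with $\alpha = 1/(m-1)$, so that $\mA_1 = 0$ and $\mA_m = \mA$. Every matrix-vector query to any $\mA_i$ is literally a (scaled) query to $\mA$, so the dynamic algorithm's final output $t_m$ is a static trace estimator built from all of its queries; the static lower bound then forces at least $r$ queries in total, which equals $\Omega(r\alpha m)$ in this regime. There is no need for multiple independent instances, no interpolation between different random matrices, and no independence or charging argument.

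Your multi-instance embedding is more ambitious but, as written, the key independence claim fails. With linear interpolation $\mA_i = (1-\lambda)\mB_{j-1} + \lambda \mB_j$ on the segment $[i_{j-1},i_j]$, every matrix $\mA_i$ with $i_{j-1} < i < i_j$ already depends on $\mB_j$, so it is \emph{not} true that ``queries made strictly before position $i_j$ are information-theoretically independent of $\mB_j$.'' A query made at, say, the midpoint of that segment is essentially half a query to $\mB_j$. The argument can be repaired—either interpolate each $\mB_j$ from and back to $0$ (as the paper does in the separate block-diagonal construction for \cref{thm:dynamic-hard-instance}), or run a more careful charging argument assigning queries in $(i_{j-1},i_j]$ to instance $j$ after revealing $\mB_{j-1}$ as side information—but the fix is not the one-liner you present it as. Your ``elementary $\Omega(m)$'' argument for the additive term has the same defect: if fewer than $m$ queries are made, the estimate at a skipped step is \emph{not} independent of $\mA_i$, since earlier matrices are $\alpha$-close to it; the paper handles this term via a separate explicit hard instance (\cref{thm:dynamic-hard-instance}) with independent diagonal blocks, not via a counting argument.
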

More specifically, we prove lower bounds that match the first term in our upper bound (\ref{eqn:alg-query-p3}) for all $p \in [1,2]$. For $p=1$, we give a lower bound (\cref{thm:dynamic-hard-instance}) matching the the second term in (\ref{eqn:alg-query-pequals1}) as well, showing that the $m(\log(1/\delta))$ additive dependence is necessary. 

For $p=2$, the prior work \cite{dharangutte2021dynamic} gives a upper bound of $O(m\alpha \log(1/\delta)/\eps^2 + \log(1/\delta)/\eps^2 )$. Our lower bounds are unconditional and show that the first term is tight. Moreover, the second term is necessary due to the static lower bound when $m=1$.   This result is not contradicted by the claim  of  \cref{thm:dynamic-hard-instance}. In particular, when $\alpha \geq  \eps^2$,   \cref{thm:dynamic-hard-instance} is weaker than the   $\Omega (m \alpha \log(1/\delta)/\eps^2)$ lower bound; and when $\alpha <  \eps^2$, the construction by itself requires $\eps/\alpha$ update steps to change the trace by $\eps$, which leads to a lower bound of $\Omega( m \alpha \log(1/\delta)/\eps)$, again weaker than $\Omega (m \alpha \log(1/\delta)/\eps^2)$.

\subsection{Related work}\label{sec:more-work}
 
We summarize prior work on static trace estimation in \cref{tab:bounds_query_complexity}.
The seminal work of \cite{at2011trace_est_psd} gives the first  analysis of Hutchinson's estimator, which was improved by \cite{ra2015implicit_trace_est}.  For PSD matrices, the query complexity can be sharpened, and this was shown recently in  \cite{mmmw2020hutch_pp, jiang2021optimal}. These two papers also give matching lower bounds. The study of dynamic trace estimation was initiated by \cite{dharangutte2021dynamic}, and our work improves upon their results. 

Other applications of implicit trace estimation include inference of Determinantal Point Processes \cite{dong2017log_determinant_gp},  approximating the generalized rank of a matrix \cite{zhang2015distributed}, computing network centrality measures \cite{bergermann2022fast}, matrix
spectrum estimation \cite{han2016spectral_sum,musco2018spectrum}, and eigenvalue counting \cite{di2016efficient}. See \cite{ubaru2018trace_est_applications} for a recent survey.

 \subsection{Organization}
 The remainder of the paper is organized as follows. We give preliminaries in \cref{sec:prelim}. In \cref{sec:algo}, we describe and analyze our improved algorithm for dynamic trace estimation. We study adaptive query lower bounds for static trace estimation in \cref{sec:lower} and show their implications for the dynamic version in \cref{sec:lower-dyn}. 
 Finally, We experimentally validate our algorithm in \cref{sec:experiment}.  
 
\section{Preliminaries}\label{sec:prelim}
 A matrix $\mA \in \R^{n\times n}$ is symmetric positive semi-definite (PSD) if it is real, symmetric and has non-negative eigenvalues. Hence, $x^\top A x \geq 0$ for all $x \in \R^n$. Let $\tr(\mA) = \sum_{i=1}^{n} \mA_{ii}$ denote the trace of $\mA$. Let $\|\mA\|_F = (\sum_{i=1}^{n}\sum_{j=1}^{n} \mA_{ij}^2)^{1/2}$ denote the Frobenius norm and $\|\mA\|_{op} =\sup_{\|\rvv \|_2 = 1}\|\mA \rvv\|_2$ denote the operator norm of $\mA$. We   let $\|\mA\|_p = \left(\sum_i \sigma_i^p\right)^{1/p}$ be the Schatten-$p$ norm, where $\sigma_i$ are the singular values of $\mA$. Two special cases are the Frobenius norm, which   equals   the Schatten-$2$ norm ($\|\mA\|_F = \|\mA\|_2$) and the nuclear norm,  equals   the Schatten-$1$ norm ($\|\mA\|_\star = \|\mA\|_1)$.

\section{Algorithm for Dynamic Trace Estimation}
\label{sec:algo}
We give an  algorithm for dynamic trace estimation under a general Schatten-$p$ norm assumption, for $p\in[1,2]$.  For $p=1$, our algorithm provides an improved guarantee upon the DeltaShift++ procedure from \cite{dharangutte2021dynamic}. In a later section we complement the result by showing that it is indeed near-optimal. Specifically, we give an algorithm that achieves the following guarantees:
\begin{theorem}[Improved dynamic trace estimation]\label{thm:alg}
 Let $\mA_1,\mA_2,\cdots, \mA_m$ be $n\times n$ matrices such that (1) $\|\mA_i\|_\star \leq 1 $ for all $i$, and (2) $\|\mA_{i+1} -\mA_{i}\|_\star \leq \alpha$ for all $i \leq m-1$. Given matrix-vector multiplication access to the matrices, a failure rate $\delta >0$ and error bound $\eps$, there is an algorithm that outputs a sequence of estimates $t_1,\cdots, t_m$ such that for each $i \in [m]$,
\begin{equation}\label{eqn:dynamic-thm}
    |t_i - \Tr \mA_i| \leq \eps  , \text{ with probability at least }  1-\delta.
    \end{equation}
The algorithm uses a total of 
\begin{equation}\label{eqn:alg-query}
  O \left( (m\alpha+1) \log^2 (1/\alpha)\sqrt{\log(1/
    (\alpha\delta))} /\eps + m \min(1, \alpha/\epsilon)\log(1/(\alpha\delta)) \right)
\end{equation}
matrix-vector multiplication queries to $\mA_1,\mA_2,\cdots, \mA_m$.
\end{theorem}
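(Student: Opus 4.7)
The plan is to construct a binary-tree summation over the index range $[1, m]$ and apply the Hutch++ estimator of \cite{mmmw2020hutch_pp} to precomputed difference matrices at each tree node. Specifically, partition $[1, m]$ into $\lceil m\alpha \rceil$ blocks of length $B = \lceil 1/\alpha \rceil$, chosen so that any cumulative difference within a block has Schatten-$1$ norm $O(1)$. At each block boundary $b$, directly estimate $\tr(\mA_b)$ by running Hutch++ on $\mA_b$, exploiting $\|\mA_b\|_\star \leq 1$. Within each block, build a balanced binary tree over its $B$ indices; at each internal node $v$ covering $[l_v, r_v]$, precompute a Hutch++ estimate of $\tr(\mA_{r_v} - \mA_{l_v})$. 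The triangle inequality applied to the consecutive-difference hypothesis gives $\|\mA_{r_v} - \mA_{l_v}\|_\star \leq (r_v - l_v)\alpha \leq 2^{d_v}\alpha$, where $d_v$ is the depth of $v$ from the leaf level, so the nuclear-norm scale of the target decays geometrically down the tree.

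To produce $t_i$, combine the boundary estimate at the nearest preceding block boundary $b$ with the telescoping sum of $O(\log(1/\alpha))$ canonical difference estimates that exactly cover $(b, i]$, which exists by the standard segment-tree decomposition of a prefix. By linearity of the trace, this gives a sum of independent Hutch++ estimators. Distribute the $\epsilon$ error budget as $\epsilon/2$ for the boundary estimate and $\epsilon/(2 \log(1/\alpha))$ per node along the path, and choose the per-estimate failure probability $\delta' = \delta/c\log(1/\alpha)$ so that a union bound over the $O(\log(1/\alpha))$ estimates feeding $t_i$ keeps its total failure at $\delta$; this yields $\log(1/\delta') = O(\log(1/(\alpha\delta)))$.

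For the query count, invoke the Hutch++ guarantee that estimating $\tr(\mC)$ to error $\eta$ with failure probability $\delta'$ costs $O(\|\mC\|_\star \sqrt{\log(1/\delta')}/\eta + \log(1/\delta'))$ queries. At level $k$ inside a block there are $B/2^k$ nodes, each with difference nuclear norm at most $2^k\alpha$ and assigned error $\epsilon/\log(1/\alpha)$, so each costs $O(2^k\alpha \log(1/\alpha)\sqrt{\log(1/(\alpha\delta))}/\epsilon + \log(1/(\alpha\delta)))$ queries. Summing over $k = 0, \ldots, \log(1/\alpha)$ within a block, then over the $\lceil m\alpha \rceil$ blocks, and adding the $O(m\alpha)$ boundary calls, yields $O((m\alpha + 1)\log^2(1/\alpha)\sqrt{\log(1/(\alpha\delta))}/\epsilon + m\log(1/(\alpha\delta)))$. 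The improved second additive term $m \min(1, \alpha/\epsilon)\log(1/(\alpha\delta))$ follows by observing that when $\alpha \leq \epsilon$ a single leaf-level node already achieves sufficient accuracy, so the additive $\log(1/\delta')$ cost per leaf can be amortized against the leading $1/\epsilon$ term.

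The main obstacle will be bookkeeping: obtaining a dependence of $\log^2(1/\alpha)$ rather than $\log^2 m$ requires blocking at scale $1/\alpha$ (so the tree depth is $\log(1/\alpha)$ rather than $\log m$), and making the Hutch++ error bounds compose cleanly when precomputed estimates are reused across many queries $t_i$ requires using independent Gaussian sketches at each node together with a per-$i$ (rather than global) union bound along the $O(\log(1/\alpha))$ contributing path. I do not expect a conceptually deep step beyond these; once the tree structure and error budgeting are fixed, the final complexity is immediate from summing a geometric series over levels.
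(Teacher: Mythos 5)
Your proposal is correct and follows essentially the same route as the paper: block the sequence at scale $1/\alpha$, run Hutch++ on the block-boundary matrix using $\|\mA_b\|_\star\le 1$, build a binary tree of Hutch++ difference estimates within each block with error parameter scaled by $2^{\ell}\alpha\log(1/\alpha)$ at level $\ell$, telescope along the $O(\log(1/\alpha))$ canonical segments with a per-index union bound, and sum the geometric series over levels. The only cosmetic differences are your choice $\delta'=\delta/(c\log(1/\alpha))$ versus the paper's $\delta'=\alpha\delta$ (both give $\log(1/\delta')=O(\log(1/(\alpha\delta)))$) and your slightly looser justification of the $\min(1,\alpha/\epsilon)$ factor, which the paper derives by noting that for $\alpha<\epsilon$ the trace moves by at most $\alpha$ per step so a fresh estimate is needed only every $\epsilon/\alpha$ steps.
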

Compared with DeltaShift++ in \cite{dharangutte2021dynamic}, this guarantee provides  an exponential improvement in $\delta$ and a polynomial improvement in $\alpha$ for $p \neq 2$, while maintaining the optimal dependence on $m$ and $\eps$. 

\subsection{Algorithm}
We  now describe our algorithm.
The first idea is to partition the $m$ updates into groups of size $s = \lceil 1/(2\alpha) \rceil$.  Each group will be treated independently, and we will use 
\begin{equation}\label{eqn:per-group}
     O \left( \log^2 (1/\alpha)\sqrt{\log(1/
    (\alpha\delta))} /\eps + \frac{1}{ \alpha} \log(1/(\alpha\delta)) \right).
\end{equation}
queries on each group. This leads to our claimed query complexity, as there are  $O(m\alpha) $ groups. Note that if $\alpha < \epsilon$,   since $|\Tr(\mA_{j} - \mA_{j-1})| \leq \|\mA_{j} - \mA_{j-1}\|_* \leq \alpha$, the trace can change by at most an additive $\alpha$, so we can simply ignore every subsequence of length $\epsilon/\alpha$. Therefore, we only need to apply our estimators to $m\alpha/\epsilon$ matrices.

Without loss of generality, consider a group of matrices $\mA_1,\cdots, \mA_{1/2\alpha}$. As the first step, we estimate     $ \Tr(\mA_j -\mA_{j-1})$ for each $j\ge 2$ by using the Hutch++ static trace estimator \cite{mmmw2020hutch_pp} as a black box. Then, for each even integer $j=2k$ (for an integer $2\le k \le s/2$), we also estimate  $\Tr (\mA_{2k} - \mA_{2(k-1)})$ in the same way. 
More generally, for each integer  $j = 2^{\ell}k$, for $0 \le \ell < \log_2 s$, we use Hutch++ to approximate     $\Tr\left(\mA_{2^{\ell}k}  -\mA_{2^{\ell}(k-1)}\right)$. 
We view this scheme as     a binary tree: the bottom level consists of leaves corresponding to the  trace  difference of neighboring matrices, and nodes at level $\ell$ correspond to the trace difference of matrices that are $2^\ell$ apart in their indices.

To output an estimate of $\Tr \mA_i$, we will write $i$ in its binary representation and approximate  it by $\Tr(\mA_1)$ plus a sequence of $O(\log(1/\alpha))$ differences, at most one for each level in the binary tree.  By setting the success rates and errors bounds at each level carefully, we can achieve the desired error guarantee of  \cref{eqn:dynamic-thm}.

To formalize the  construction, we first cite the following guarantee of the Hutch++ algorithm:
\begin{lemma}[Hutch++, nuclear norm, Theorem 5 of \cite{mmmw2020hutch_pp}]\label{lem:hutch-pp}
The Hutch++ estimator  uses $$N=O\left(\sqrt{\log (1 / \delta')} / \varepsilon'+\log (1 / \delta')\right)$$ matrix-vector
multiplication queries such that given any square matrix $\mA$ and parameters $\eps',\delta'$, with probability at least $1-\delta'$,  the algorithm's output $t$ satisfies  
\begin{equation}
    |t - \Tr \mA| \leq \sqrt{\varepsilon'}\left\|\boldsymbol{A}-\boldsymbol{A}_{1 / \varepsilon'}\right\|_{F} \leq \varepsilon'\|\boldsymbol{A}\|_{*}.
\end{equation}
\end{lemma}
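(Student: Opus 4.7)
My plan is to reproduce the argument of Meyer--Musco--Musco--Woodruff, which combines randomized low-rank approximation with classical Hutchinson on the residual. First I would describe the algorithm: partition the budget as $N=2r+g$, draw a sketching matrix $\mS \in \R^{n\times r}$ with i.i.d.\ Rademacher (or Gaussian) entries, and compute $\mA\mS$ with $r$ queries. Let $\mQ \in \R^{n\times r}$ be an orthonormal basis for the column span of $\mA\mS$, obtained via a (query-free) QR factorization. A further $r$ queries give $\mA\mQ$, from which one reads off the exact value $T_1:=\Tr(\mQ^\top \mA\mQ)=\Tr(\mQ\mQ^\top\mA)$. Finally, draw $\mG=[\rvg_1 \cdots \rvg_g]$ with i.i.d.\ Rademacher columns, spend $g$ queries on $\mA\mG$, and output
\begin{equation*}
t \;=\; T_1 + \frac{1}{g}\sum_{i=1}^g \rvg_i^\top\bigl(\mA\rvg_i - \mQ\mQ^\top\mA\rvg_i\bigr).
\end{equation*}
Since $\Tr(\mA)=\Tr(\mQ\mQ^\top\mA)+\Tr((\mI-\mQ\mQ^\top)\mA)$ identically, the error $t-\Tr(\mA)$ reduces to Hutchinson's deviation on the residual $\mR:=(\mI-\mQ\mQ^\top)\mA$.

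The analysis then splits into two independent probabilistic statements. The first is a randomized range-finder bound: by the Halko--Martinsson--Tropp analysis, provided $r$ is taken to be the target rank $k$ plus a small oversampling term, $\|\mR\|_F \leq C\|\mA-\mA_k\|_F$ with probability at least $1-\delta'/2$. I would invoke this as a black box. The second is a Hanson--Wright concentration bound for Rademacher chaos: conditional on $\mS$ (and hence on $\mQ$), the matrix $\mR$ is fixed, and with probability at least $1-\delta'/2$,
\begin{equation*}
\Bigl| \tfrac{1}{g}\sum_{i=1}^g \rvg_i^\top \mR \rvg_i - \Tr(\mR) \Bigr| \;\leq\; C\|\mR\|_F \sqrt{\log(1/\delta')/g},
\end{equation*}
provided $g \geq c\log(1/\delta')$. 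A union bound over the two events completes the error estimate.

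Third, I would balance parameters. Using the elementary tail bound $\|\mA-\mA_k\|_F^2 \leq \sigma_{k+1}\|\mA\|_* \leq \|\mA\|_*^2/k$ (from $(k+1)\sigma_{k+1}\leq\|\mA\|_*$), the combined error is at most $C\|\mA\|_*\sqrt{\log(1/\delta')/(rg)}$. Minimizing $2r+g$ subject to $rg \geq \log(1/\delta')/\eps'^2$ and $g \geq c\log(1/\delta')$ yields $r=g=\Theta(\sqrt{\log(1/\delta')}/\eps')$ when $\eps'$ is small enough that the first constraint dominates, and otherwise the second constraint forces $g=\Theta(\log(1/\delta'))$; together these give the stated count $O(\sqrt{\log(1/\delta')}/\eps' + \log(1/\delta'))$ and error $O(\eps'\|\mA\|_*)$. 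To recover the sharper intermediate bound $\sqrt{\eps'}\|\mA-\mA_{1/\eps'}\|_F$, I would keep $\|\mR\|_F$ explicit in the Hanson--Wright estimate and use $r \geq 1/\eps'$ to conclude $\|\mR\|_F \leq C\|\mA-\mA_{1/\eps'}\|_F$ before substituting.

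The main obstacle is the joint calibration of $(r,g,k)$ together with the oversampling parameter of the range-finder: one must ensure that the constant slack in the low-rank approximation does not break the product constraint $rg \geq \log(1/\delta')/\eps'^2$, and that the Hanson--Wright regime $g \geq c\log(1/\delta')$ is attained without inflating the query count. Recovering the sharp Frobenius-tail form $\sqrt{\eps'}\|\mA-\mA_{1/\eps'}\|_F$ rather than the cruder $\eps'\|\mA\|_*$ is the most delicate step, since one must carry $\|\mR\|_F$ through to the end rather than immediately upper-bounding it by $\|\mA\|_*/\sqrt r$; a small subtlety is handling non-symmetric $\mA$ in Hanson--Wright, which I would resolve by decomposing $\mR$ into its symmetric and skew-symmetric parts (Hutchinson has zero bias on both).
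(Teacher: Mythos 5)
This lemma is imported as a black box from the Hutch++ paper and is not proved here, so there is no in-paper proof to compare against; your reconstruction is the standard argument (randomized range finder plus Hanson--Wright on the one-sided residual $(\mI-\mQ\mQ^\top)\mA$, with the budget balanced at $r=g=\Theta(\sqrt{\log(1/\delta')}/\eps')$ subject to $g\gtrsim\log(1/\delta')$), and it coincides with the proof the paper itself gives for the Schatten-$p$ generalization in \cref{thm:hpp-general}. One remark on the delicacy you flagged: with the stated query budget the Frobenius-tail form comes out as $O\bigl(\sqrt{\eps'}\log^{1/4}(1/\delta')\bigr)\|\mA-\mA_{1/\eps'}\|_F$ rather than exactly $\sqrt{\eps'}\|\mA-\mA_{1/\eps'}\|_F$ (forcing the clean intermediate constant would push $g$ up to $\Theta(\log(1/\delta')/\eps')$), but the product of the two balanced factors still gives the clean final bound $\eps'\|\mA\|_*$, which is the only consequence the paper ever uses, so your argument suffices for every application of the lemma.
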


Let $ \texttt{Hutch++}(\mA, \eps', \delta')$  denote the output of Hutch++ on matrix $\mA$ with parameters $\eps',\delta'$.  
It will be invoked with different parameters at different levels of the binary tree construction.
A   description of the algorithm is given by the pseudocode  Algorithm \ref{alg:main}, with a helper function   Algorithm  \ref{alg:main-helper}.

For simplicity of analysis, note that since we can add dummy matrices (say, extra copies of $\mA_1$), we assume that each group has size exactly $s=\lceil 1/(2\alpha) \rceil$ and $s$ is a power of two. This  blows up  the total  number of matrices by at most a constant factor.   

\begin{algorithm}[htb]\label{alg:main-bin}
\SetKwInOut{Input}{Input}
\SetKwInOut{Output}{Ouput}

\caption{Improved Dynamic Trace Estimation \label{alg:main}}
  \Input{A sequence of square matrices $(\mA_i)_{i=0}^m \in \mathbb R^{n\times n}$, failure rate $\delta$, error bound $\eps$ \\}
  \Output{Trace estimate $t_i$ for each matrix}
  \vspace*{0.5em}
  Partition the matrices into groups of size $s = \lceil 1/(2\alpha) \rceil$. \\
  For every $g \geq 0$ and $i \in \{0,1,\cdots, s-1\}$, let $\mA^{(g)}_i = \mA_{gs+i+1}$ denote the $i$-th matrix in the $g$-th group.
   
  \For{\text{each group} $\mA^{(g)}_0,\cdots, \mA^{(g)}_{s-1}$ independently}{
  
  Let $t_0 =  \texttt{Hutch++}(\mA_0^{(g)}, \eps/2,\delta/2))$\\
  \For{each level $\ell$ from $0$ to $\log_2 s -1$}{
  $\textsf{gap} = 2^{\ell}$\\
  \For{$k$ from $1$ to $(s-1)/\textsf{gap}$}{
    Compute  $t_{k,\ell} = \texttt{Hutch++}\left(\mA_{k\cdot\textsf{gap}}  -\mA_{(k-1)\cdot\textsf{gap}}, \eps'(\ell), \delta'\right)$, with   $\eps'(\ell) = \eps/(2^{\ell+1}\alpha \log_2 s)$ and $\delta' = \alpha\delta$.
  }
  }
  Output $t_{gs+i+1} = t_0 + \textsc{SumTree}(1,i,\log_2 s-1, t)$ for each $i \in [0,s-1]$.
 } 
\end{algorithm}

\begin{algorithm}[htb]
\SetKwInOut{Input}{Input}
\caption{\textsc{SumTree}:  Helper Function for Tracing the Binary Tree \label{alg:main-helper} }
  \Input{Indices $i,j$, level $\ell$,  binary tree node values $t$\\}
  \vspace*{0.5em}
  $\textsf{gap} = 2^{\ell}$\\
  \If{$j\leq i$}{\Return $0$.}
  \If{\textsf{gap} $= 1$}{\Return $t_{\ell,i}$.}
  \If{$j-i \geq \textsf{gap}$}{\Return $t_{\ell, \lfloor (j-1)/\textsf{gap} \rfloor} + \textsc{SumTree}(i+\textsf{gap},j,\ell-1,t)$.}
  \Else{\Return $\textsc{SumTree}(i,j,\ell-1,\textsf{gap})$.}
\end{algorithm}

\subsection{Analysis}
The analysis of the algorithm is rather lengthy and is delayed to \cref{sec:proof-apx-alg}. 
In addition, we give a general analysis of the algorithm under Schatten-$p$ norm assumption and the specific improved bounds for $p = 1$ in \cref{sec:gen-alg-apx} and show how to relax the bounded norm assumption in \cref{sec:relax-apx}.

\section{Lower Bounds for Adaptive Trace Estimation}\label{sec:lower}
In this section, we  provide  (nearly) optimal lower bounds for    trace estimation with adaptive matrix-vector multiplication queries, under general square matrices and Schatten-$p$ norm error.   

\subsection{Adaptive Lower Bound, Bit Complexity}
First, we show two separate lower bounds under bit complexity model, both proven via reductions from communication complexity problems. One shows an $\Omega(1/\eps^p)$ lower bound (\cref{thm:adap-lower}) and the other $\Omega(\log(1/\delta))$ (\cref{thm:adap-lower-2}), up to bit complexity terms. 
Combined together, they yield:
\begin{theorem}[Adaptive query lower bound, bit complexity]\label{thm:main-lowerbound}
 Any     algorithm  that accesses a   square matrix $\mA$ via matrix-vector multiplication queries  requires  at least $$\Omega\left(\frac{1}{\epsilon^p(k+\log(1/\epsilon)) }  + \frac{\log (1/\delta)}{k+\log\log(1/\delta)} \right)$$   queries  to output  an estimate $t$ such that with probability at least $1-\delta$, $|t - \Tr\mA| \leq \epsilon \|\mA\|_p$,   for any $p\in[1,2]$, where the query vectors may be adaptively chosen with  entries    specified by $k$ bits.
 \end{theorem}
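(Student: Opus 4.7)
The plan is to establish the two claimed terms separately and take the maximum, following the communication-complexity template: design a hard distribution of matrices determined by the inputs of two (or more) players, simulate any $q$-query adaptive estimator with $O(qnk)$ bits of communication, and then invoke known lower bounds for Gap-Equality and Approximate-Orthogonality.

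\textbf{The $\Omega(1/\epsilon^p)$ term via Gap-Equality.} I would first reduce from a (promise) Gap-Equality instance where Alice holds $x\in\{0,1\}^N$ and Bob holds $y\in\{0,1\}^N$ and they must decide between $x=y$ and Hamming distance $\geq N/3$. The construction I have in mind is a \emph{product} form: build a matrix $\mA = \mM_A \mM_B$ where $\mM_A$ depends only on $x$ and $\mM_B$ only on $y$, chosen so that $\Tr(\mA) = \Tr(\mM_A\mM_B) = \langle \mM_A^\top, \mM_B\rangle$ separates the two Gap-Equality cases by a gap of roughly $2\epsilon \|\mA\|_p$. A natural candidate is $\mM_A, \mM_B$ block-diagonal with blocks encoding the bits so that $\|\mA\|_p \asymp N^{1/p}$ and the trace gap scales with the Hamming distance. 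To simulate any $q$-query adaptive algorithm with $k$-bit entries, observe that each query $\rvv$ needs $\mA\rvv = \mM_A(\mM_B\rvv)$, so Bob computes $\rvu=\mM_B\rvv$ and sends its (rounded) entries to Alice who returns $\mM_A\rvu$; each round costs $O(n(k+\log(1/\epsilon)))$ bits after truncation, and one must verify the truncation error is absorbed into $\epsilon\|\mA\|_p$. Setting $N \asymp 1/\epsilon^p$ and invoking the $\Omega(N)$ randomized communication lower bound for Gap-Equality yields $q \cdot n(k+\log(1/\epsilon)) = \Omega(n/\epsilon^p)$, i.e.\ $q = \Omega\bigl(\epsilon^{-p}/(k+\log(1/\epsilon))\bigr)$.

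\textbf{The $\Omega(\log(1/\delta))$ term via Approximate-Orthogonality.} For this piece I would reduce from a direct-sum style Approximate-Orthogonality problem where Alice and Bob share $r = \Theta(\log(1/\delta))$ independent pairs $(x^{(i)},y^{(i)})$ and must decide whether $\langle x^{(i)},y^{(i)}\rangle$ is large or small on a \emph{randomly chosen} index $i$ with success probability $1-\delta$. The hard matrix is block-diagonal $\mA = \mathrm{diag}(\mA^{(1)},\dots,\mA^{(r)})$ where each block is again a product $\mM_A^{(i)}\mM_B^{(i)}$ encoding one pair; trace additivity gives $\Tr(\mA) = \sum_i \Tr(\mA^{(i)})$, and randomness over $i$ forces any $(1-\delta)$-correct estimator to effectively solve all $r$ instances. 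The Approximate-Orthogonality lower bound (as used in prior dimensionality-reduction lower bounds) scales with $r$, and the simulation argument gives $q\cdot n(k+\log\log(1/\delta)) = \Omega(nr)$, yielding $q = \Omega\bigl(\log(1/\delta)/(k+\log\log(1/\delta))\bigr)$. Combining with the first bound by taking the max completes the theorem.

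\textbf{Main obstacle.} The delicate step is the matrix construction: the product form $\mA=\mM_A\mM_B$ is what makes the reduction work for general (non-PSD) square matrices, but it has to simultaneously (i) encode the communication input faithfully in $\Tr(\mA)$, (ii) keep $\|\mA\|_p$ controlled so that an $\epsilon\|\mA\|_p$ error does not swamp the trace gap, and (iii) be compatible with truncating intermediate vectors $\mM_B\rvv$ to $O(k+\log(1/\epsilon))$ bits without destroying the accuracy of $\mA\rvv$. Getting the correct exponent $p$ in $\|\mA\|_p$ — rather than the easier $p=1$ or $p=2$ case — requires choosing block sizes/spectra so that the singular values of $\mA$ are essentially flat over $\Theta(\epsilon^{-p})$ coordinates, and verifying the error bookkeeping under adaptive querying is the most technical part of the argument.
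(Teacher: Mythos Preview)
You have the two communication problems \emph{swapped}, and this swap is fatal for the $\epsilon$-term. The paper obtains the $\Omega(1/\epsilon^p)$ bound from \textsc{Approximate-Orthogonality} and the $\Omega(\log(1/\delta))$ bound from \textsc{Gap-Equality}, not the other way around. The reason is that the randomized (public-coin) communication complexity of \textsc{Gap-Equality} with constant error is $O(1)$: Alice and Bob can hash with shared randomness and compare a single bit. The $\Omega(N)$ bound you invoke for \textsc{Gap-Equality} holds only for \emph{deterministic} protocols. So reducing from an $N\asymp 1/\epsilon^p$-size \textsc{Gap-Equality} instance to a constant-success-probability trace estimator gives no nontrivial lower bound at all. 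The paper instead uses \textsc{Approximate-Orthogonality} on $m=n^2$ bits with $n=\Theta(1/\epsilon^p)$; Alice reshapes $x$ into the rows of $\mA$, Bob reshapes $y$ into the columns of $\mB$, so $\Tr(\mA\mB)=\langle x,y\rangle$, and the known $\Omega(m)=\Omega(n^2)$ randomized lower bound for ORT at constant error (Chakrabarti--Regev) divided by the $O(n(k+\log(1/\epsilon)))$ cost per query yields $\Omega(n)=\Omega(1/\epsilon^p)$ queries. The Schatten-$p$ control comes not from a flat-spectrum block construction but from a simple Markov bound on $\|\mA\mB\|_F^2$ under the uniform input distribution together with $\|\cdot\|_p\le n^{1/p-1/2}\|\cdot\|_F$.

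For the $\log(1/\delta)$ term, your direct-sum-of-ORT sketch is at best incomplete: estimating $\sum_i \Tr(\mA^{(i)})$ to $\epsilon\|\mA\|_p$ does not let Bob recover an individual block's answer, and ``randomness over $i$ forces solving all $r$ instances'' is not an argument. The paper's route is quite different and avoids direct-sum machinery entirely. It takes the rank-one matrix $\mA=(x-y)(x-y)^\top$ with $n=\log(1/\delta)$; since $\delta\le 2^{-n}$, a union bound over all $2^n$ inputs makes the estimator \emph{deterministic} on this family. The key observation is that when $x=y$ every query returns the zero vector, so on these inputs any adaptive algorithm is effectively non-adaptive (its query sequence is fixed by the all-zero responses). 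Hence the queries form a fixed matrix $\mQ$ with $\mQ(x-y)=0$ iff $x=y$, and Alice can just send $\mQ x$. Now the \emph{deterministic} $\Omega(n)$ lower bound for \textsc{Gap-Equality} is legitimately applicable and gives $r(k+\log\log(1/\delta))=\Omega(\log(1/\delta))$. Your product-matrix simulation idea is the right mechanism, but it belongs in the $\epsilon$-term argument with ORT; the $\delta$-term hinges instead on this derandomization-by-union-bound plus the ``zero answers kill adaptivity'' trick.
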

The proofs of the theorems can be found in \cref{sec:adap-lower-pf}.
 
 \subsection{Adaptive Lower Bound, RAM}

Next, we prove a tight lower bound under the real RAM model (\cref{thm:lb_small}). The bounds hold for any Schatten-$p$ norm error. 
Our proof is via information-theoretic analysis of random Gaussian matrices and is delayed to \cref{sec:pf-lb-small}. 
\begin{theorem}[Lower Bound for Any Schatten Norm] 
\label{thm:lb_small}
    For all $p \in [1,2]$, $ \delta > 0$ and $0 < \epsilon < (\log(1/\delta))^{1/2 - 1/p}$, any algorithm that takes in any input matrix $\mA$ and succeeds with probability at least $1-\delta$ in outputting an estimate $t$ such that $| t - \tr(\mA)| \leq \epsilon \|\mA\|_p$  requires $$m =  \Omega\left( \left(\frac{\sqrt{\log(1/\delta)}}{\epsilon}\right)^p\right) $$ matrix-vector multiplication queries.
\end{theorem}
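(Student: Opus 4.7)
The plan is to use a Gaussian Orthogonal Ensemble (GOE) hard instance together with an information-theoretic argument: after $m$ adaptive matrix--vector queries, an $(n-m) \times (n-m)$ sub-block of $\mA$ remains random and contributes irreducible Gaussian noise to $\tr(\mA)$ that no algorithm can cancel. Concretely, let $\mA$ be the $n \times n$ GOE with off-diagonal entries $\N(0,1)$ and diagonal entries $\N(0,2)$, where $n = c_0\,(\sqrt{\log(1/\delta)}/\epsilon)^p$ for a sufficiently small constant $c_0 > 0$. The hypothesis $\epsilon < (\log(1/\delta))^{1/2 - 1/p}$ is equivalent to $n \geq \log(1/\delta)$, which is exactly what is needed so that, by sharp Wigner spectral concentration (or more crudely by Gaussian concentration of $\|\mA\|_F$ combined with the norm interpolation $\|\mA\|_p \leq n^{1/p - 1/2}\|\mA\|_F$ valid for $p \in [1,2]$), we have $\|\mA\|_p \leq C\, n^{1/p + 1/2}$ with probability $\geq 1 - \delta$ for an absolute constant $C$.

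The central technical step is an orthogonal-invariance reduction. Given any (possibly adaptive) algorithm issuing queries $v_1, \ldots, v_m$, let $u_1, \ldots, u_m$ be their Gram--Schmidt orthonormalization and extend $U = [u_1 \mid \cdots \mid u_m]$ to an orthonormal basis $[U \mid W]$ with $W \in \R^{n \times (n-m)}$. Orthogonal invariance of the GOE, applied inductively round by round, implies that the transcript is a deterministic function of the blocks $U^\top \mA U$ and $W^\top \mA U$, while the remaining block $W^\top \mA W$ is distributed as a fresh $(n-m)$-dimensional GOE independent of the transcript. Decomposing $\tr(\mA) = \tr(U^\top \mA U) + \tr(W^\top \mA W)$, the first summand is known from the transcript while the second is $\N(0, 2(n-m))$, independent of everything the algorithm has seen.

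Consequently, for any estimate $t^\star$ the algorithm produces, $\tr(\mA) - t^\star$ is $\N(\mu, 2(n-m))$ conditional on the transcript, for some algorithm-controlled $\mu$. A direct calculation with the Gaussian density shows that $\mu \mapsto \Pr[|\N(\mu, \sigma^2)| \geq z]$ is minimized at $\mu = 0$, so unconditionally $\Pr[\,|\tr(\mA) - t^\star| \geq z\,] \geq \Pr[\,|\N(0, 2(n-m))| \geq z\,]$, and by the standard Gaussian tail bound this exceeds $3\delta$ whenever $z \leq c_1 \sqrt{(n-m)\log(1/\delta)}$ for some small absolute constant $c_1$. Taking $m \leq n/2$ so $n - m \geq n/2$, and choosing $c_0$ small enough relative to $c_1$ and $C$, we obtain $c_1\sqrt{n\log(1/\delta)/2} > C\epsilon\, n^{1/p + 1/2} \geq \epsilon\|\mA\|_p$ on the $\|\mA\|_p$-concentration event. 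A union bound then gives algorithm failure with probability at least $3\delta - \delta = 2\delta > \delta$, contradicting the assumed success rate $1 - \delta$; hence $m > n/2 = \Omega\bigl((\sqrt{\log(1/\delta)}/\epsilon)^p\bigr)$.

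The main obstacle is making the adaptive orthogonal-invariance step rigorous. Because each query $v_k$ (and hence $u_k$) is measurable with respect to the prior transcript, which itself is a function of $\mA$, one must verify that $W^\top \mA W$ really is a fresh GOE conditional on the full transcript. The clean way is an induction on $k$: conditioned on the first $k-1$ responses, the new orthogonal direction $u_k$ is determined, and by symmetry the inner product $u_k^\top \mA u_j = u_j^\top \mA u_k$ is recoverable from earlier responses for every $j < k$, so the only fresh information carried by the $k$-th response is the projection of $\mA u_k$ onto $\mathrm{span}(u_1,\ldots,u_{k-1})^\perp$, which is a single row/column of the residual GOE on that subspace; by rotational invariance this leaves the complementary sub-block on $\mathrm{span}(u_1,\ldots,u_k)^\perp$ GOE-distributed. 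A secondary but essential technicality is that the $\|\mA\|_p$ concentration must hold with failure probability $\delta$, not just a constant, and this is precisely what forces $n = \Omega(\log(1/\delta))$, equivalently the hypothesis $\epsilon < (\log(1/\delta))^{1/2 - 1/p}$.
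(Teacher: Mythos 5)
Your proposal is correct and follows essentially the same route as the paper: a Wigner/GOE hard instance, the observation that after $m$ queries the conditional law of the unseen block is a fresh $(n-m)$-dimensional Wigner matrix (which the paper imports as Lemma 3.4 of Simchowitz et al., and which you re-derive by the round-by-round rotational-invariance induction), Gaussian anti-concentration of the residual trace of variance $\Theta(n-m)$, the bound $\|\mA\|_p = O(n^{1/p+1/2})$, and solving for $n$ with $m \leq n/2$. The only difference is presentational — you prove the conditional-distribution step and the shift-optimality of $\mu=0$ explicitly where the paper cites or elides them.
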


\section{Lower Bounds for Dynamic Trace Estimation}\label{sec:lower-dyn}
Using the query complexity lower bounds for adaptive trace estimation, we can now prove tight lower bounds for dynamic trace estimation. The recent work of Dharangutte and Musco \cite{dharangutte2021dynamic} only provides a \textit{conditional} lower bound, assuming that   Hutchinson's scheme is  optimal. We  remove this assumption and  make the lower bound unconditional. 
We additionally prove a lower bound by constructing an explicit hard instance in the dynamic setting. Our lower bounds hold     under a general Shatten norm assumption and nearly matches the guarantee of our algorithm.

\subsection{Lower Bounds via Static-to-Dynamic Reduction}
 We first show a   lower bound for    dynamic trace estimation under a Frobenius norm assumption. This immediately implies that the {DeltShift} algorithm due to \cite{dharangutte2021dynamic} is optimal for $p=2$. 
 
First, we cite a  static-to-dynamic reduction from \cite{dharangutte2021dynamic} and its implication.  The reduction shows how to solve a static instance using a dynamic trace estimation scheme, and therefore any hardness on the static problem translates to the dynamic setting as well. It holds generally for an error bound in any Schatten norm. For completeness, we give a proof in \cref{sec:cond-pf}.

\begin{lemma}[Conditional lower bound for dynamic trace estimation \cite{dharangutte2021dynamic}]\label{lem:cond-lb}
Suppose that any algorithm  that achieves \cref{eqn:static-p} for static trace estimation  must use $\Omega (r)$ matrix-vector product queries. Then any dynamic trace estimation algorithm requires $\Omega(r \alpha m)$ matrix-vector product queries under a general Schatten-$p$ norm assumption, when $\alpha = 1/(m-1)$. 
\end{lemma}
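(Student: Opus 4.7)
The plan is to prove the lemma by a direct static-to-dynamic reduction via linear interpolation. Fix a hard static instance $\mA$ with $\|\mA\|_p = 1$, which by hypothesis requires $\Omega(r)$ matrix-vector queries to estimate $\Tr \mA$ to error $\epsilon\|\mA\|_p = \epsilon$ with probability at least $1-\delta$. Define the dynamic sequence $\mA_i = \tfrac{i-1}{m-1}\,\mA$ for $i = 1,\dots,m$. I would first verify that this is a valid dynamic instance: $\|\mA_i\|_p = \tfrac{i-1}{m-1} \leq 1$, $\|\mA_{i+1} - \mA_i\|_p = \tfrac{1}{m-1} = \alpha$, and $\mA_m = \mA$. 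Consequently, any dynamic estimate $t_m$ satisfying the guarantee of (4) at the final step $m$ is automatically a valid static estimate of $\Tr \mA$ with the same $(\epsilon, \delta)$ parameters.

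The key step is a simulation argument. Given any dynamic algorithm $\mathcal{D}$ that uses $Q$ total matrix-vector queries across $\mA_1,\dots,\mA_m$, I would construct a static algorithm $\mathcal{S}$ with oracle access only to $\mA$ by running $\mathcal{D}$ internally: whenever $\mathcal{D}$ issues a query requesting $\mA_i v$, the simulator $\mathcal{S}$ issues one query $\mA v$ to its own oracle and returns $\tfrac{i-1}{m-1}(\mA v)$, which is exactly the correct response. Because each $\mA_i$ is a known scalar multiple of $\mA$, this simulation is perfect, preserving adaptivity and the entire output distribution of $\mathcal{D}$. After $\mathcal{D}$ halts, $\mathcal{S}$ outputs $t_m$, using at most $Q$ static queries in total. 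The hypothesized static lower bound then forces $Q \geq \Omega(r)$, and since $\alpha m = m/(m-1) = \Theta(1)$, this matches the claimed $\Omega(r\alpha m)$.

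I do not anticipate a substantive technical obstacle; the reduction is essentially bookkeeping. The one thing to flag is that the reduction invokes the correctness guarantee of $\mathcal{D}$ only at the single index $i = m$, so the $(\epsilon, \delta)$ parameters transfer verbatim without any union bound over the $m$ intermediate estimates. The argument is also insensitive to $p \in [1,2]$ since it uses only the positive homogeneity $\|c\mA\|_p = |c|\,\|\mA\|_p$ and the linearity of the trace, so the same construction establishes the lemma uniformly for every Schatten-$p$ norm considered in the paper.
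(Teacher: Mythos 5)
Your proposal is correct and follows essentially the same route as the paper: both construct the linear interpolation $\mA_i = \tfrac{i-1}{m-1}\mA$ from $\bm{0}$ to $\mA$ and observe that every query to any $\mA_i$ can be answered by a single query to $\mA$ (since each $\mA_i$ is a known scalar multiple of $\mA$), so the static lower bound transfers. Your write-up just makes the simulation step and the $\alpha m = \Theta(1)$ bookkeeping more explicit than the paper's terser version.
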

It follows immediately from this lemma and our adaptive query lower bound  (\autoref{thm:main-lowerbound}):
\begin{theorem}[Unconditional lower bound for dynamic trace estimation, bit]\label{thm:dy-lb-fro}
For all $p\in [1,2]$ and $\eps,\delta\in(0,1)$, any algorithm for dynamic trace estimation under a Schatten-$p$ norm assumption must use at least $$\Omega\left(\alpha m\left(\frac{1}{\epsilon^p(k+\log(1/\epsilon)) }  + \frac{\log (1/\delta)}{k+\log\log(1/\delta)} \right)\right)$$   matrix-vector multiplication queries, where each entry  of the query vectors is specified by $k$ bits.
\end{theorem}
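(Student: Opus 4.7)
The plan is to combine Lemma~\ref{lem:cond-lb} with the adaptive bit-complexity static lower bound in Theorem~\ref{thm:main-lowerbound} in a direct way. First I would fix the target parameters $\eps,\delta,p,k$ and read off from Theorem~\ref{thm:main-lowerbound} the static query lower bound
\[
r := \Omega\!\paren{\frac{1}{\eps^p(k+\log(1/\eps))} + \frac{\log (1/\delta)}{k+\log\log(1/\delta)}}.
\]
Since that theorem holds in the bit-complexity model, the same instance family will serve as our starting point for the dynamic reduction.

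Next I would unpack the reduction of Lemma~\ref{lem:cond-lb}, which in its base form embeds a single static hard matrix $\mA$ (with $\|\mA\|_p\le 1$) into a short dynamic block by linear interpolation: setting $\mA_i = (i/(s-1))\mA$ for $i=0,\ldots,s-1$ with $s = \lceil 1/\alpha\rceil + 1$, consecutive Schatten-$p$ differences are bounded by $\|\mA\|_p/(s-1) \le \alpha$, and $\tr(\mA_{s-1})=\tr(\mA)$. Any $\eps$-accurate dynamic estimator at the last index of the block thus solves the static $\eps\|\mA\|_p$ problem. Because a matrix-vector query to $(i/(s-1))\mA$ is simply a scalar multiple of $\mA x$, this embedding preserves the $k$-bit query model and any adaptive dynamic algorithm must spend $\Omega(r)$ queries on the block.

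To match the claimed dependence on $\alpha m$, I would tile the construction across the full stream. Concatenate $T = \lfloor (m-1)/s\rfloor = \Theta(m\alpha)$ blocks, where each block uses an independently sampled static hard instance; the resulting stream has length at most $m$ and consecutive $p$-norm difference at most $\alpha$. By the problem statement, the dynamic algorithm must output an $\eps$-accurate estimate at the last index of every block with probability at least $1-\delta$, so per-block success is exactly what is required. Because the static hard distributions are independent across blocks, queries issued in one block are conditionally independent of the matrices in all other blocks and carry no information about them; applying Theorem~\ref{thm:main-lowerbound} to each block separately, the total query count is at least $T\cdot r = \Omega(\alpha m \cdot r)$, which is the claimed bound.

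I do not anticipate a genuine obstacle here, since both heavy lifting pieces (the static bit-complexity lower bound and the static-to-dynamic reduction) are already established. The only bookkeeping is to verify that (i) the tiling gives $\Theta(m\alpha)$ blocks while respecting the $\alpha$ budget, (ii) independence across blocks allows per-block application of the static lower bound without any union-bound loss that would harm the optimal $\log(1/\delta)$ dependence, and (iii) the bit-complexity parameter $k$ is preserved since the embedding only performs deterministic scalar scaling. With these three points noted, the theorem follows by direct substitution of $r$ into the conclusion of Lemma~\ref{lem:cond-lb}.
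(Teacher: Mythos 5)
Your approach is the same as the paper's: Theorem~\ref{thm:dy-lb-fro} is obtained by feeding the static bit-complexity bound of Theorem~\ref{thm:main-lowerbound} into the static-to-dynamic reduction of Lemma~\ref{lem:cond-lb}, and your tiling of $\Theta(m\alpha)$ independent blocks is the natural way to extract the $\alpha m$ factor for general $\alpha$ (the paper's lemma as written only treats the single-block case $\alpha=1/(m-1)$). One detail in your construction is actually false as stated: if each block ramps from $0$ up to its hard instance $\mA^{(j)}$ with $\|\mA^{(j)}\|_p=1$ and the next block restarts at $0$, the step across the block boundary has Schatten-$p$ difference $1\gg\alpha$, violating the dynamic model's constraint. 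You must interpolate back down to $0$ (or up to the next instance) in $\lceil 1/\alpha\rceil$ further steps before starting the next block; this only halves the number of blocks and preserves the $\Omega(\alpha m\cdot r)$ conclusion. With that patch, and your (correct) observation that independence across blocks lets you apply the static lower bound per block without a union-bound loss, the argument goes through.
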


Combining the same reduction (\cref{lem:cond-lb}) with our previous real RAM lower bound (\cref{thm:lb_small}) in the static setting gives:
\begin{theorem}[Unconditional lower bound for dynamic trace estimation, RAM] \label{thm:dy-lb-nu}
For all $p \in [1,2]$, $ \delta > 0$ and $0 < \epsilon < (\log(1/\delta))^{1/2 - 1/p}$, any algorithm for dynamic trace estimation  under a Schatten-$p$ norm assumption must use at least $  \Omega\left(\alpha m \left({\sqrt{\log(1/\delta)}}/{\epsilon}\right)^p\right)$   matrix-vector multiplication queries.
\end{theorem}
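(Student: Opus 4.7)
\begin{proofof}{Theorem \ref{thm:dy-lb-nu} (proposal)}
The plan is to obtain this bound as an immediate consequence of our static RAM lower bound (\cref{thm:lb_small}) combined with the static-to-dynamic reduction of \cref{lem:cond-lb}. Since \cref{thm:lb_small} already establishes that, for every $p\in[1,2]$ and $0<\epsilon<(\log(1/\delta))^{1/2-1/p}$, any static Schatten-$p$ estimator that succeeds with probability $\ge 1-\delta$ requires $r=\Omega((\sqrt{\log(1/\delta)}/\epsilon)^p)$ matrix-vector queries, the heavy lifting has been done. The role of this proof is essentially to package the two ingredients together and check that the conditions on $\epsilon$, $\delta$, $p$, $\alpha$, and $m$ line up.

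First I would recall the reduction in \cref{lem:cond-lb}: starting from any single static instance $\mA$, one constructs a length-$m$ update sequence whose consecutive Schatten-$p$ differences are bounded by $\alpha=1/(m-1)$ and which forces a dynamic algorithm to essentially solve a fresh static estimation problem at each step. Thus a dynamic lower bound of $\Omega(\alpha m\cdot r)$ follows whenever the static problem requires $\Omega(r)$ queries. To extend this to arbitrary $\alpha\in(0,1)$ rather than only $\alpha=1/(m-1)$, I would repeat the hard static instance along the stream: partition the $m$ updates into $\Theta(m\alpha)$ blocks of length $\Theta(1/\alpha)$ and embed one copy of the reduction in each block (padding with copies of the current matrix to meet the length exactly). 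Because the blocks are independent and each one forces $\Omega(r)$ queries on average, the total lower bound becomes $\Omega(m\alpha\cdot r)$.

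Second, plugging in $r=\Omega((\sqrt{\log(1/\delta)}/\epsilon)^p)$ from \cref{thm:lb_small} yields the claimed bound $\Omega(\alpha m(\sqrt{\log(1/\delta)}/\epsilon)^p)$. The validity of this substitution requires only that the parameter regime $0<\epsilon<(\log(1/\delta))^{1/2-1/p}$ assumed in the theorem is exactly the regime in which \cref{thm:lb_small} applies, so no extra side condition has to be verified. I would also briefly note that the reduction preserves the error model: the static error guarantee $|t-\Tr\mA|\le\epsilon\|\mA\|_p$ translates, after the rescaling built into \cref{lem:cond-lb}, into the dynamic guarantee $|t_i-\Tr\mA_i|\le\epsilon$ that we actually want to force.

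The main obstacle is bookkeeping rather than ideas. The delicate point is ensuring that the block-repetition trick used to extend the reduction from $\alpha=1/(m-1)$ to general $\alpha$ does not artificially inflate the Schatten-$p$ norm of any intermediate $\mA_i$ or violate the $\|\mA_{i+1}-\mA_i\|_p\le\alpha$ assumption at the block boundaries, and that the failure probability $\delta$ is maintained per step rather than amortized across the whole stream (so we should apply the static hardness with the same $\delta$ inside each block, not a union-bound-weakened $\delta/m$). Once these checks are carried out, the bound follows immediately.
\end{proofof}
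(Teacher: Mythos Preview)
Your proposal is correct and follows exactly the approach the paper takes: the paper states \cref{thm:dy-lb-nu} as the direct combination of the static-to-dynamic reduction \cref{lem:cond-lb} with the static RAM lower bound \cref{thm:lb_small}, without further argument. If anything, you are more careful than the paper, since \cref{lem:cond-lb} is stated only for $\alpha=1/(m-1)$ while the theorem is phrased for general $\alpha$; your block-repetition patch is the natural (and standard) way to close that small gap.
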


\subsection{Lower Bound via Explicit Hard Instance}

Using the hard instance based on \textsc{Gap-Equality} in the static setting (from the proof of \cref{thm:adap-lower-2}), we give  an explicit  hardness construction against any dynamic trace estimation scheme. This yields the following lower bound, and its proof is in \cref{sec:pf-dynamic-hard-instance}.
\begin{theorem}\label{thm:dynamic-hard-instance}
For all $p\in [1,2]$ and   $\eps ,\delta \in (0,1/4)$, any algorithm for dynamic trace estimation  under Schatten-$p$ norm assumption must use at least $$  \Omega\left(m \min\left(1, \frac{\alpha}{\epsilon}\right) \frac{\log(1/\delta)}{k+ \log\log(1/\delta)} \right)$$   matrix-vector multiplication queries, where each entry of the query vectors is specified by $k$ bits. 
\end{theorem}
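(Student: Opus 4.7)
The plan is to turn the static Gap-Equality hard instance used in the proof of Theorem~\ref{thm:adap-lower-2} into an explicit dynamic hard sequence of length $m$ by embedding $K=\Theta(m\min\{1,\alpha/\epsilon\})$ independent copies, smoothly interpolated to satisfy the per-step $\alpha$-constraint. Concretely, I first extract from that proof a pair of matrices $H^0,H^1 \in \mathbb{R}^{n\times n}$ with $\|H^b\|_p\le 1$ and $|\tr(H^0)-\tr(H^1)|\ge c$ for an absolute constant $c>0$, such that any adaptive algorithm that distinguishes $H^0$ from $H^1$ with probability $\ge 1-\delta$ using queries of bit-precision $k$ must make $\Omega(\log(1/\delta)/(k+\log\log(1/\delta)))$ matrix-vector queries.

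Next, set $\lambda = 2\epsilon/c$ and $\tau = \max\{1,\lceil 4\epsilon/(c\alpha)\rceil\}$, so that $K=\lfloor m/\tau\rfloor = \Theta(m\min\{1,\alpha/\epsilon\})$. Sample $b_1,\dots,b_K\in\{0,1\}$ independently and uniformly, and define the sequence by fixing checkpoint matrices $\mA_{\ell\tau}=\lambda H^{b_\ell}$ for $\ell\in[K]$ and linearly interpolating between consecutive checkpoints in the matrix space. Then $\|\mA_{i+1}-\mA_i\|_p\le 2\lambda/\tau\le \alpha$ by the choice of $\tau$, so the sequence is a valid dynamic instance. At each checkpoint time $t_\ell=\ell\tau$ the two possible values of $\tr(\mA_{t_\ell})$ differ by $\lambda c=2\epsilon$, so any $\epsilon$-accurate trace estimate at that time correctly recovers $b_\ell$ with probability at least $1-\delta$. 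A dynamic estimator meeting the success guarantee at every step thus simultaneously solves $K$ independent copies of the Gap-Equality hard instance, each with error probability at most $\delta$.

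A direct-sum argument for Gap-Equality---either invoked from the communication-complexity literature, or by re-running the reduction underlying Theorem~\ref{thm:adap-lower-2} on $K$ parallel copies---then shows that correctly solving $K$ independent such instances with per-instance failure $\delta$ requires $\Omega(K\cdot \log(1/\delta)/(k+\log\log(1/\delta)))$ total matrix-vector queries. Substituting $K=\Theta(m\min\{1,\alpha/\epsilon\})$ yields the claimed bound.

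The main obstacle is the direct-sum step: queries issued while the matrix is in the interpolation window $(t_{\ell-1},t_\ell)$ depend jointly on $b_{\ell-1}$ and $b_\ell$, so the query budget cannot be cleanly partitioned among instances. I would address this by modifying the construction to hold $\mA_i$ constant on a ``stable'' prefix of each phase while confining interpolation to a short tail---so that queries during stable prefixes are unambiguously attributable to a single $b_\ell$---and then absorbing the few unavoidable cross-phase transition queries via an information-theoretic bound (e.g., a Fano-style inequality on $\sum_{\ell} I(b_\ell; \Pi_\ell)$, where $\Pi_\ell$ denotes the query transcript during phase $\ell$). This reduces the problem to the well-known direct-sum property of Gap-Equality and completes the argument.
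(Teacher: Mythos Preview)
There is a genuine gap in your construction. The static hard instance underlying Theorem~\ref{thm:adap-lower-2} is \emph{not} a fixed pair of matrices. In that proof the input is $\mA=(x-y)(x-y)^\top$ for a Gap-Equality instance $(x,y)$; the case $b=0$ corresponds to $\mA=\bm 0$, while the case $b=1$ corresponds to any one of exponentially many rank-$1$ matrices with trace $n/2$. No fixed pair $H^0,H^1$ can be hard to distinguish: a single query with any vector $v$ satisfying $H^0v\neq H^1v$ separates them. Since your checkpoints are literally $\mA_{\ell\tau}=\lambda H^{b_\ell}$ with the \emph{same} $H^0,H^1$ reused for every $\ell$, an algorithm can spend $O(1)$ queries at the first checkpoint to learn a distinguishing direction and then recover every subsequent $b_\ell$ with one query, defeating the lower bound. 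The fix is to draw a \emph{fresh, independent} Gap-Equality instance $(x_\ell,y_\ell)$ for each $\ell$ and set the $\ell$-th checkpoint matrix from it; only the randomness in $b_\ell$ is not enough.

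Even after that fix, the interpolation obstacle you flag remains and your proposed remedy (stable prefixes plus a Fano-type bound) is plausible but not yet a proof. The paper sidesteps this obstacle entirely with a different, cleaner construction: it works in ambient dimension $N=\min\{m,1/\alpha\}\cdot\log(1/\delta)$ and places each scaled instance $\mB_\ell=\alpha\,\mB_{x_\ell,y_\ell}$ in its \emph{own} $\log(1/\delta)\times\log(1/\delta)$ diagonal block, inserting block $\ell$ at step $\ell$ (and, when $\alpha\le\epsilon$, building the block over $\lceil\epsilon/\alpha\rceil$ substeps). Because block $\ell$ is absent from $\mA_1,\dots,\mA_{\ell-1}$, no query issued before time $\ell$ carries any information about $(x_\ell,y_\ell)$; hence the $\Omega(\log(1/\delta)/(k+\log\log(1/\delta)))$ queries needed to resolve $b_\ell$ must all be charged to step~$\ell$, and the direct-sum is immediate with no cross-phase leakage to account for. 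The trade-off is dimension: the paper pays $N=\Theta(K\log(1/\delta))$ to get orthogonal blocks, whereas your approach (once repaired) keeps dimension $\log(1/\delta)$ but must work harder for the direct sum.
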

\section{Experiments} \label{sec:experiment}
We experimentally validate  our algorithmic results. We compare  Algorithm \ref{alg:main},  with the following procedures on both synthetic and real datasets.
 More experimental details are  in    \cref{sec:exp_details}. 
\begin{itemize}
    \item Hutchinson's: Apply the classic Hutchinson's scheme for each $\Tr(A_i)$ independently.
    \item DiffSum: Approximate $t_1\approx \Tr(\mA_1)$ and each neighboring  difference $d_i \approx \Tr(\mA_i) - \Tr(\mA_{i-1})$ using Hutchinson's independently. Then output $t_i = t_1 + \sum_{j=2}^i d_j$.
    \item DeltaShift:  The main algorithm of \cite{dharangutte2021dynamic}. The experiments from \cite{dharangutte2021dynamic} demonstrate that DeltaShift outperforms DiffSum and other Hutchinson-based schemes on various datasets.
\end{itemize}

\paragraph{Synthetic data.} We simulate a dynamic trace estimation instance by first generating a (symmetric) random matrix $\mA^{n\times n}$ and then adding random perturbations over $T=100$ time steps. The    details and results  are found in \cref{sec:exp-synth}.

\paragraph{Counting triangles.} Our first  experiment  on a real-world dataset is on counting triangles in  dynamic undirected (simple) graphs. Note that the number of triangles in a graph equals $\frac{1}{6} \Tr \mA^3$, where $\mA$ is the adjacency matrix of the graph. Thus, triangle counting reduces to trace estimation.

We use two arXiv collaboration networks with $5,242$ and $9,877$ nodes \cite{leskovec2007graph}.\footnote{The first   is the collaboration network of arXiv General Relativity (ca-GrQc) and the second   High Energy Physics Theory (ca-HepTh). Both   are available at \url{https://sparse.tamu.edu/SNAP}.} The nodes represent authors, and edges indicate co-authorships. To simulate a real-world scenario,  we add a random clique of size at most $6$ to the graph in each step, indicating a   group of researchers jointly publishing a paper. We note that our algorithm significantly outperforms other methods (\cref{fig:arxiv}). 

\begin{figure}
\centering
\begin{subfigure}{.5\textwidth}
  \centering
  \includegraphics[scale=0.4]{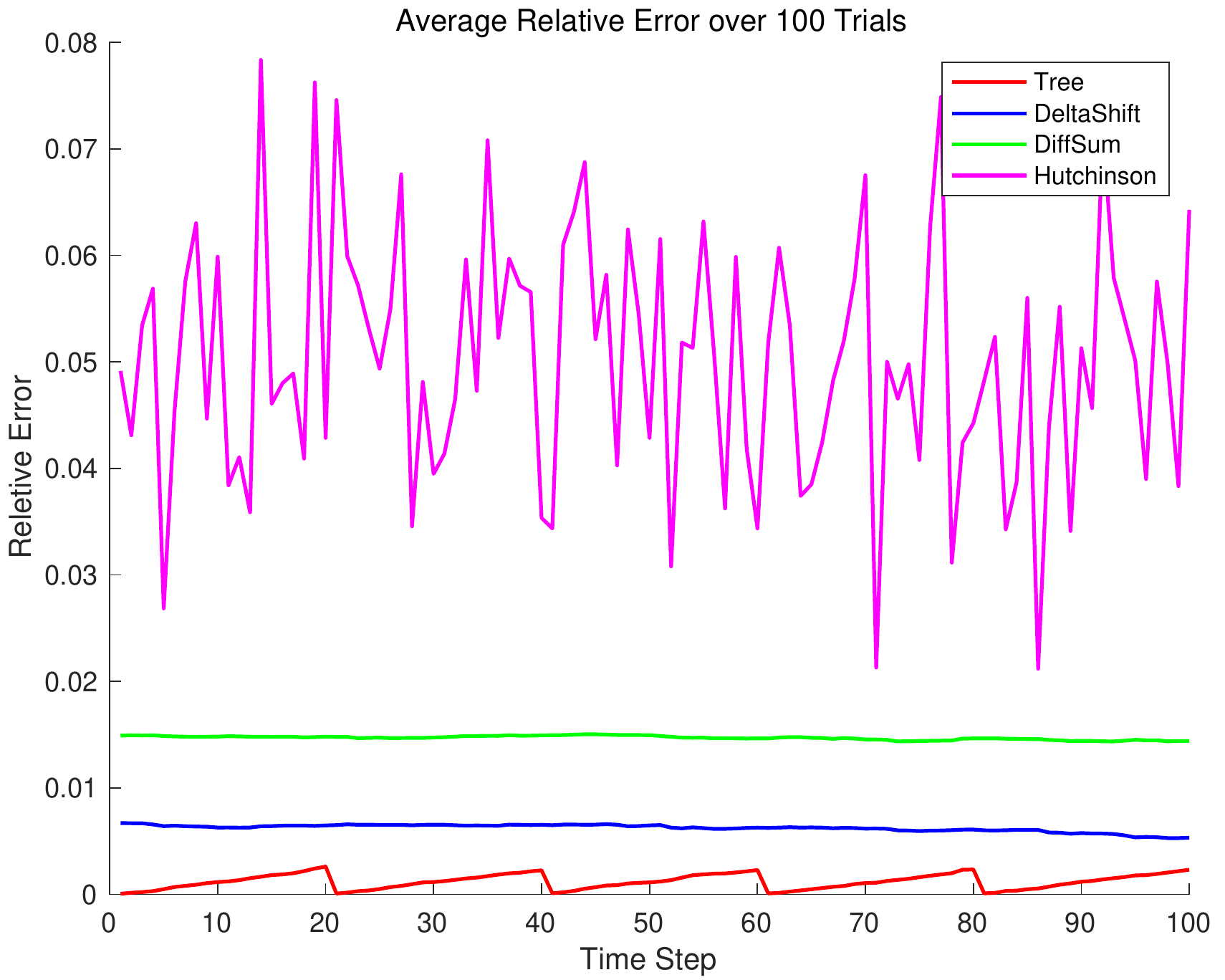}
  \caption{SNAP/ca-GrQc}
  \label{fig:grqc}
\end{subfigure}%
\begin{subfigure}{.5\textwidth}
  \centering
  \includegraphics[scale=0.4]{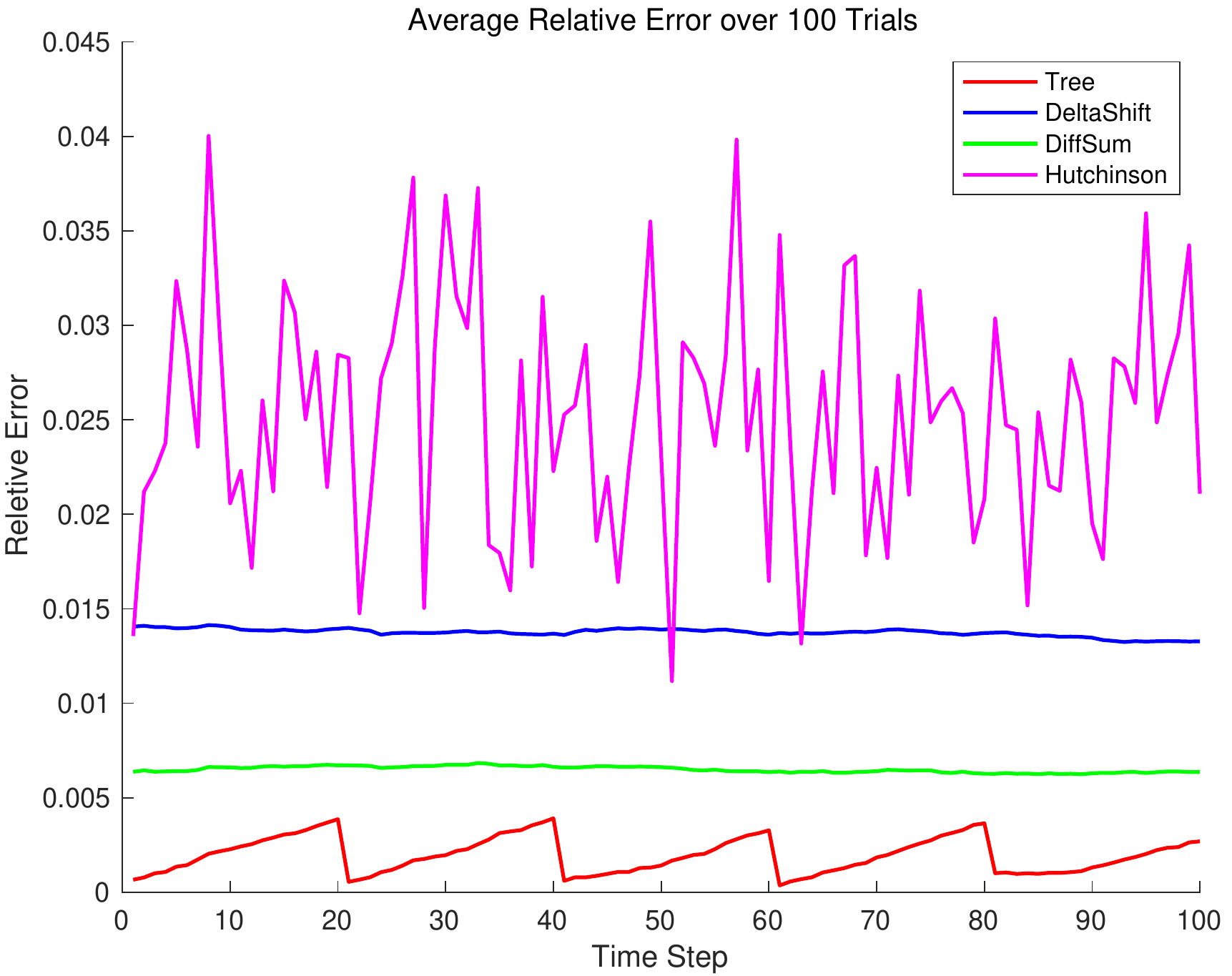}
  \caption{SNAP/ca-HepTh}
  \label{fig:hepth}
\end{subfigure}
\caption{ArXiv datasets. Query budget is $8,000$. In this experiment, the trace values are large, so we measure the performance of the algorithms by their relative   error $|t_i  - \Tr \mA_i^3|  / \max_i {\Tr \mA_i^3}$. }
\label{fig:arxiv}
\end{figure}

\paragraph{Neural network weight matrix.}
\begin{figure}
\centering
\begin{subfigure}{.5\textwidth}
  \centering
  \includegraphics[scale=0.4]{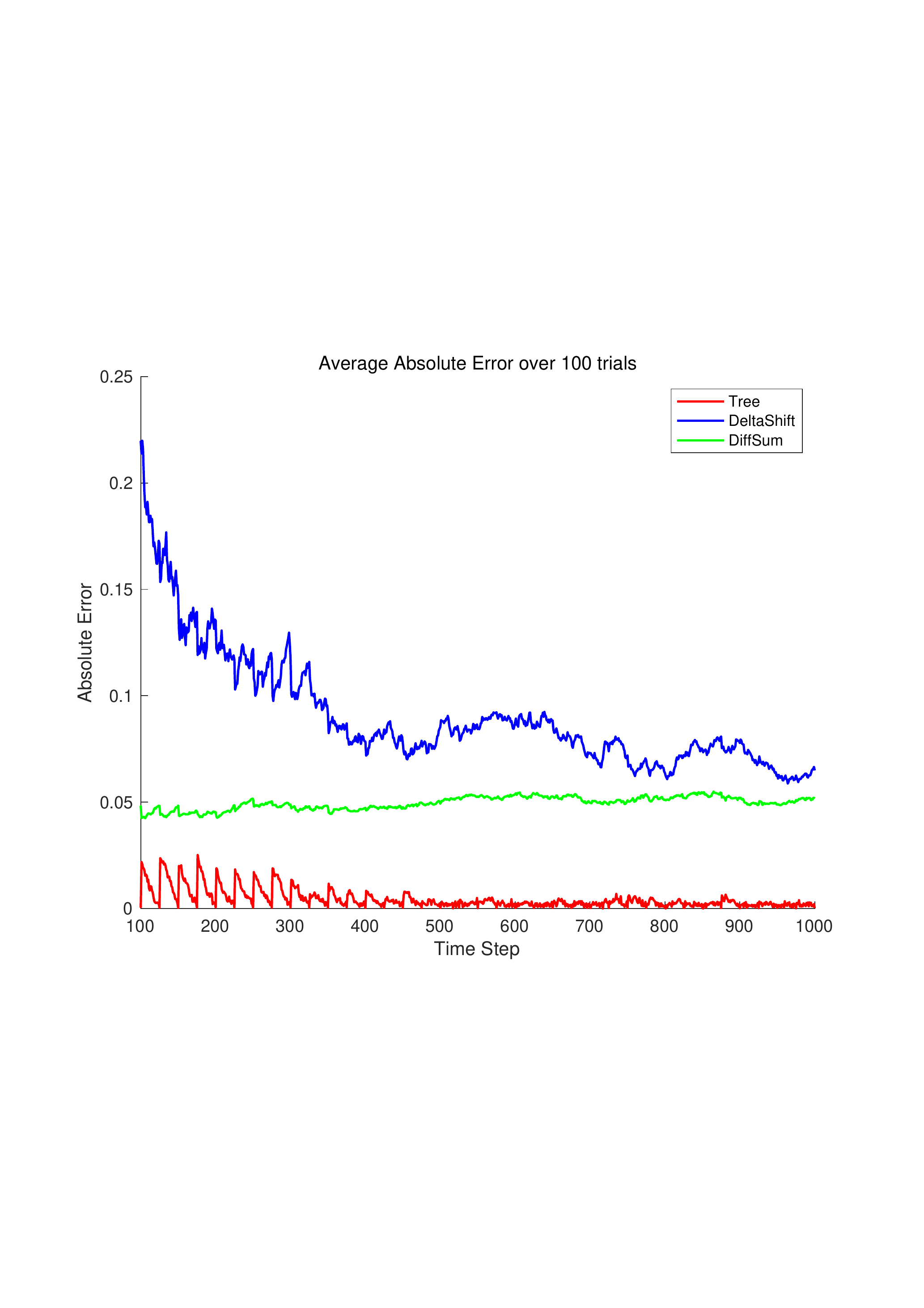}
  \caption{Error over last $900$ time steps}
  \label{fig:mnist}
\end{subfigure}%
\begin{subfigure}{.5\textwidth}
  \centering
  \includegraphics[scale=0.4]{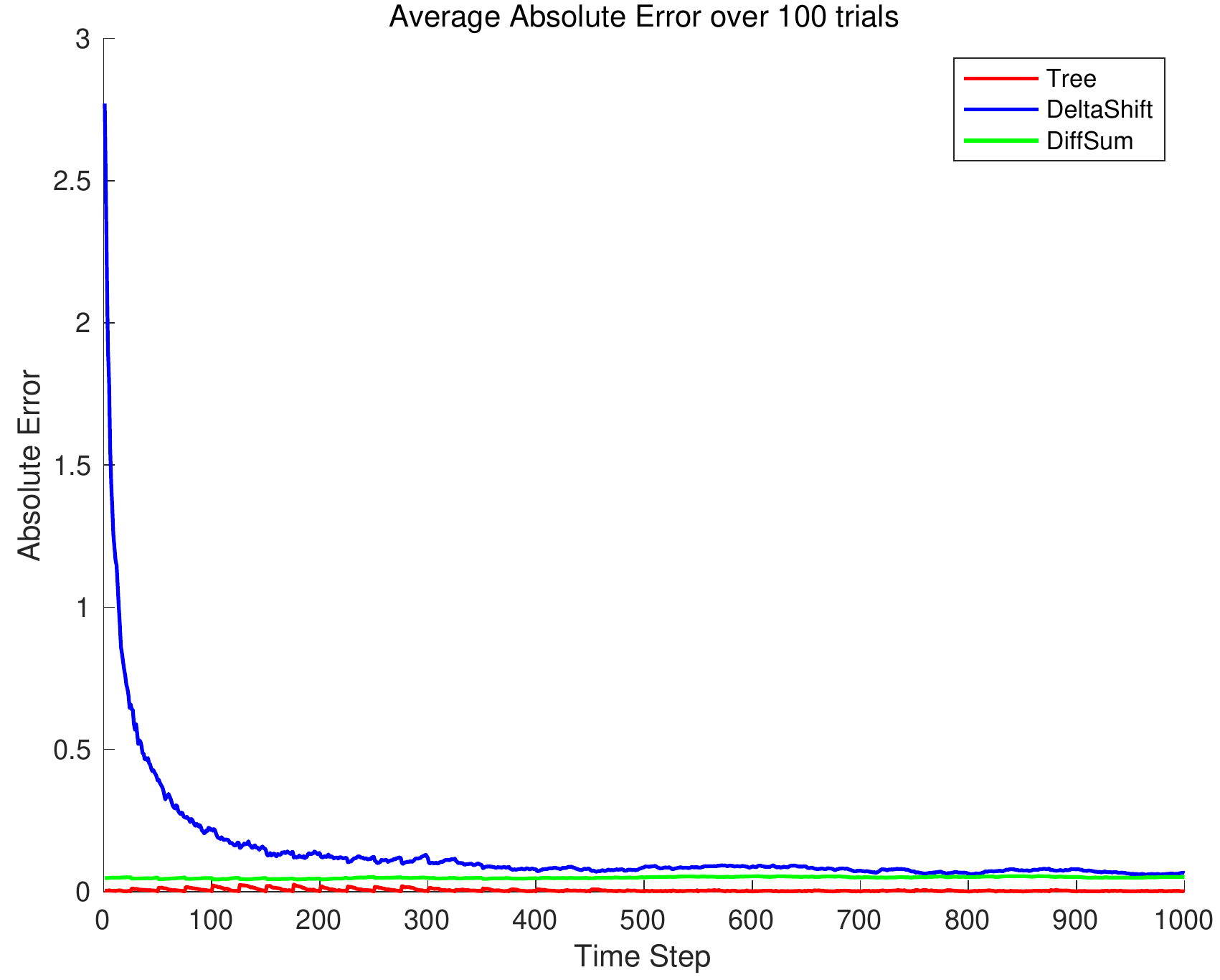}
  \caption{Error over entire $1,000$ time steps}
  \label{fig:mnist2}
\end{subfigure}
\caption{MNIST. Query budget is $50,000$.}
\label{fig:mnist-double}
\end{figure}

We evaluate the performance of the algorithms on a sequence of weight matrices of a neural network, generated during the training process. In particular, we choose a three-layer neural network with a hidden layer of $100\times 100$. We train the network on the MNIST dataset via mini-batch SGD and consider the first $1,000$ steps, when the weights  are changing most rapidly. Our algorithm achieves much smaller error than DiffSum and DeltaShift (\cref{fig:mnist-double}).

\bibliographystyle{alpha}
\bibliography{bib}

 
\newpage

\appendix
\section{Background on Communication Complexity}
Our lower bound proofs use communication complexity. In a communication problem, Alice and Bob receive inputs $x\in \{-1,1\}^m$ and $y \in \{-1,1\}^m$, repsectively, and wish to compute a function $f: \{-1,1\}^m \times \{-1,1\}^m \rightarrow \{-1,1\}$. The players communicate according to a protocol $P$ and end with an agreed-upon value $z$. The sequence of binary messages exchanged by the players is called the transcript of $P$, denoted $P(x,y)$. We say that the protocol computes $f$ with error $\delta$ if $\Pr(z \neq f(x,y) ) \leq \delta$. Let $\text{CC}(P)$ be the length (in bits) of the transcript $P(x,y)$. The communication complexity of $f$ is defined to be the minimum communication cost of any protocol with error $\delta$:
\begin{equation}
    \text{CC}_\delta(f) = \min \{ \text{CC}(P): P \text{ computes } f \text{ with error } \delta\}.
\end{equation}
\section{Proof Details of \texorpdfstring{\cref{sec:algo}}{Section 3}}\label{sec:sec2-pr}
\subsection{Proof of \texorpdfstring{\cref{thm:alg}}{Theorem 3.1}} \label{sec:proof-apx-alg}

To give a proof sketch, we consider a fixed group and a constant $\delta = \Theta(1)$. To bound the query complexity, we observe that within the group, each level of the binary tree incurs roughly the same number of matrix-vector product queries. 
Moreover, at the bottom level, there are $s$ calls (including the one computing $t_0$) to \texttt{Hutch++}, with $\eps' = \widetilde O(\eps/\alpha)$ and $\delta' = O(\alpha)$. By \cref{lem:hutch-pp}, each call uses $\widetilde O(\alpha /\eps)$ queries.  Hence, each level uses $\widetilde O(s\alpha /\eps)= \widetilde O(1/\eps)$ queries. Since there are $\log (1/\alpha)$ levels per group and $O(m\alpha)$ groups, this gives a bound of $\widetilde O(m\alpha / \eps)$ on the total number of queries, as claimed in \cref{eqn:alg-query}. A similar argument shows that the scheme achieves the desired error bound $\eps$ and failure probability $\delta$. We  formally prove \cref{thm:alg}:
\begin{proof}[Proof of \cref{thm:alg}]
Fix a group  $g$ and an index $i$. We first argue that the output $t_{gs+i+1}$ is an accurate estimate of the trace $\Tr (\mA_i^{(g)})$, namely, one which satisfies \cref{eqn:dynamic-thm}. By construction, the \textsc{SumTree} algorithm decomposes the $t_{gs+i+1} - t_0$ into at most $\log_2 s$ terms, one at each level $\ell$. Each term is an estimate $t_{k,\ell}$ of $\Tr\left(\mA^{(g)}_{2^\ell k} - \mA^{(g)}_{2^\ell(k-1)}\right)$, for some $\ell,k$.   By assumption, each increment  $\mA^{(g)}_i -\mA^{(g)}_{i-1}$ has Schatten-$1$ norm at most $\alpha$. Hence, by the triangle inequality, 
\begin{align*}
      \left\|\mA^{(g)}_{2^\ell k} -   \mA^{(g)}_{2^\ell(k-1)}\right\|_\star  = \left\| \sum_{j =2^\ell k +1}^{2^\ell (k-1)} \mA^{(g)}_{j} - \mA^{(g)}_{j-1} \right\|_\star \le 2^{\ell}  \alpha
\end{align*}
 By the guarantee of the Hutch++ estimator (\cref{lem:hutch-pp}) and the inequality above, we have that for all $\ell, k$
\begin{align}
    \left| t_{k, \ell} - \Tr \left(\mA^{(g)}_{2^\ell k} -  \mA^{(g)}_{2^\ell(k-1)}\right) \right| &\le \epsilon'(\ell)   \left\|\mA^{(g)}_{2^\ell k} -   \mA^{(g)}_{2^\ell(k-1)}\right\|_\star \nonumber   \\&\leq \epsilon'(\ell)  \cdot 2^\ell \alpha\nonumber  \\&= \eps/ (2\log_2 s)\label{eqn:success-one-call},
\end{align}
with probability at least $1-\delta'$. 
Again, by the guarantee of Hutch++, $t_0$ approximates $\Tr \mA^{(g)}_0$ up to an $(\eps/2)\|\mA^{(g)}_0\|_\star$ additive error. Therefore,  conditioned on \cref{eqn:success-one-call},  the total error of the estimate  $t_{gs+i+1}$ for all $g,i$ is bounded by 
\begin{align}
    \left| t_{gs+i+1} - \Tr\left(\mA^{(g)}_i\right)\right| &\leq   (\eps/2) \left\|\mA^{(g)}_0\right\|_\star+  (\log_2  s ) \cdot  \eps/ (2\log_2 s) \nonumber \\
    &\leq (\eps/2 )\left\|\mA^{(g)}_0\right\|_\star+ \eps/2\nonumber \\
    &\leq \eps\label{eqn:success-all}
\end{align}
where the first line follows since there are $\log_2 s$ levels and the last line since $\|\mA_0^{(g)}\|_\star \leq 1$ by assumption. To bound the failure rate, we note that for a fixed $g$ and $i$, \cref{eqn:success-all} holds if \cref{eqn:success-one-call} holds for all $t_{ k, \ell}$ that are accessed in computing $t_{gs+i+1}$  (via the \textsc{SumTree} procedure).  By construction, there are at most $\log_2 s$ of these terms, where the bound follows from the number of levels of the binary tree.  A simple union bound yields the desired guarantee \cref{eqn:dynamic-thm}.

It remains to prove the bound on the query complexity (\cref{eqn:alg-query}). 
Each group is treated identically, so we   consider any fixed group. 
 Within each group (of size $s$) and at each level $\ell$, we make $O(s/2^\ell )$ calls to \texttt{Hutch++}$(\mA, \eps'(\ell),\delta')$. 
 By \cref{lem:hutch-pp}, this leads to
\begin{equation*}
    O\left( \left(s/ 2^{\ell} \right) \cdot\left( \sqrt{\log (1 / \delta')} / \varepsilon'(\ell) +\log (1 / \delta') \right) \right)
\end{equation*}
queries at level $\ell$. Plugging in the values of $\eps'(\ell),\delta'$, this equals  
\begin{equation*}
       O \left( \log (1/\alpha)\sqrt{\log(1/
    (\alpha\delta))} /\eps + \frac{1}{2^{\ell} \alpha} \log(1/(\alpha\delta)) \right).
\end{equation*}
Summing over $\ell \in \{0,1, \cdots,\log_2 s-1\}$, where $s = O(1/\alpha)$, we have that within each group, the number of queries is bounded by
\begin{equation*}
    O \left( \log^2 (1/\alpha)\sqrt{\log(1/
    (\alpha\delta))} /\eps + \frac{1}{ \alpha} \log(1/(\alpha\delta)) \right).
\end{equation*}
There is a total of $\max \{ 1,O(m\alpha)\}$ groups. Also, we consider the case when $\alpha < \eps$, where the algorithm only needs to provide a fresh estimate every $\eps/\alpha$ time steps. Hence, the sequence length is reduced effectively to $m\alpha/\eps$. Therefore, the   query complexity of Algorithm \ref{alg:main} is at most
\begin{equation*}
    O \left( (m\alpha+1)\log^2 (1/\alpha)\sqrt{\log(1/
    (\alpha\delta))} /\eps + m \min\{1,\alpha/\eps\} \log(1/(\alpha\delta)) \right).
\end{equation*}
This finishes the proof.
\end{proof}

\subsection{General Schatten-\textit{p} Norm Analysis}\label{sec:gen-alg-apx}


To generalize trace estimation to $\epsilon \|\mA\|_p$ error for any $p \in [1,2]$, we need to revisit the variance reduction technique to  achieve $O(1/\epsilon)$ query complexity for the nuclear norm. The technique rewrites $\mA = \mB_k + \Delta_k$, where $\mB_k$ is a rank-$k$ matrix with a determined trace, and $\|\Delta_k\|_F \leq O(1)\|\mA - \mA_k\|_F$, where $\mA_k$ is the best rank-$k$ approximation to $\mA$ for some $k$. Then, we can approximate $\tr(\mA)$ by first explicitly calculating $\tr(\mB_k)$ and then approximating the trace of $\Delta_k$. The two step procedure requires a careful balancing for how queries are spent between the two components to minimize the total estimation error in the Schatten $p$ norm and results in a $O(1/\epsilon^p)$ query complexity.

\begin{theorem}[general Schatten-$p$ error analysis of Hutch++]\label{thm:hpp-general}
The Hutch++ estimator of rank $k$ generalizes to a matrix $\mA$ with any Schatten norm $p$ bound for $p \in [1,2]$, and satisfies $|\tr(\mA) - \texttt{Hutch++}(\mA)| \leq \epsilon \|\mA\|_p$ with a total matrix-vector query complexity of  

$$O\left(\left( \frac{\sqrt{\log(1/\delta)}}{\epsilon}\right)^p + \log(1/\delta) \right).$$

\end{theorem}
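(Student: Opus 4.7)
The strategy is to generalize the two-phase Hutch++ estimator of \cite{mmmw2020hutch_pp} by plugging in a Schatten-$p$ version of the Frobenius tail bound. Recall the Hutch++ structure: draw a Gaussian sketch $\mS\in\R^{n\times k}$, form the orthonormal basis $\mQ$ of $\mA\mS$ (cost: $k$ matrix-vector queries), evaluate $\tr(\mQ^{\top}\mA\mQ)$ exactly from the queries already used, and then run Hutchinson with $\ell$ fresh Gaussian queries on the deflated matrix $(\mI-\mQ\mQ^{\top})\mA(\mI-\mQ\mQ^{\top})$. The total query cost is $O(k+\ell)$, and the standard randomized-SVD and Hanson--Wright guarantees give that with probability $1-\delta$ the total error is at most $O\!\bigl(\sqrt{\log(1/\delta)/\ell}\bigr)\cdot\|\mA-\mA_{k}\|_{F}$, provided $\ell=\Omega(\log(1/\delta))$ so the sub-exponential tail in Hanson--Wright is absorbed.

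The key new ingredient is a Schatten-$p$ tail bound: for any $p\in[1,2]$,
\begin{equation*}
\|\mA-\mA_{k}\|_{F}\;\le\;\frac{\|\mA\|_{p}}{k^{1/p-1/2}}.
\end{equation*}
I would prove this by a short interpolation argument. Let $\sigma_{1}\ge\sigma_{2}\ge\cdots$ be the singular values of $\mA$. Averaging the top $k$ of $\sigma_{i}^{p}$ yields $\sigma_{k+1}\le\|\mA\|_{p}\,k^{-1/p}$. Since $p\le2$,
\begin{equation*}
\|\mA-\mA_{k}\|_{F}^{2}=\sum_{i>k}\sigma_{i}^{2}=\sum_{i>k}\sigma_{i}^{2-p}\sigma_{i}^{p}\;\le\;\sigma_{k+1}^{2-p}\sum_{i>k}\sigma_{i}^{p}\;\le\;\Bigl(\tfrac{\|\mA\|_{p}}{k^{1/p}}\Bigr)^{2-p}\|\mA\|_{p}^{p}=\frac{\|\mA\|_{p}^{2}}{k^{2/p-1}},
\end{equation*}
and the claimed bound follows.

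Plugging this tail bound into the Hutch++ error estimate, to achieve error at most $\epsilon\|\mA\|_{p}$ it suffices to take
\begin{equation*}
\ell\;\ge\;\frac{\log(1/\delta)}{\epsilon^{2}\,k^{2/p-1}}.
\end{equation*}
I would then balance $k$ against $\ell$: setting $k\asymp\ell$ gives $k^{2/p}\asymp\log(1/\delta)/\epsilon^{2}$, hence the optimal choice $k=\ell=\Theta\bigl((\sqrt{\log(1/\delta)}/\epsilon)^{p}\bigr)$. This produces a dominant term of order $(\sqrt{\log(1/\delta)}/\epsilon)^{p}$, and the mandatory floor $\ell=\Omega(\log(1/\delta))$ from Hutchinson's concentration contributes the additive $\log(1/\delta)$, yielding the stated total.

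The main delicacy will be the boundary case $p=2$, where the exponent $1/p-1/2$ vanishes and the tail bound degenerates to the trivial $\|\mA-\mA_{k}\|_{F}\le\|\mA\|_{F}$. Here the optimization formally returns $k=0$, so no low-rank phase is used and the algorithm reduces to vanilla Hutchinson with $\ell=\Theta(\log(1/\delta)/\epsilon^{2})$ samples, recovering the classical $O(\log(1/\delta)/\epsilon^{2})$ rate. For $p\in[1,2)$ the interpolation bound is nontrivial and yields the sublinear-in-$1/\epsilon^{2}$ behavior. A minor bookkeeping point is ensuring the randomized range finder achieves a constant-factor optimal rank-$k$ approximation with probability $1-\delta$ while using only $O(k+\log(1/\delta))$ queries, which follows from standard oversampled-sketch guarantees and does not change the final bound.
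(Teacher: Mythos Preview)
Your proposal is correct and follows essentially the same argument as the paper: the same Hutch++ decomposition, the same interpolation inequality $\|\mA-\mA_k\|_F^2\le \sigma_k^{2-p}\sum_{i\ge k}\sigma_i^p\le k^{1-2/p}\|\mA\|_p^2$ derived from $k\sigma_k^p\le\|\mA\|_p^p$, the same balancing $k=\ell$, and the same accounting of the additive $\log(1/\delta)$ from the range finder and Hutchinson floor. Your discussion of the $p=2$ boundary case is a nice addition the paper omits, but otherwise the proofs coincide.
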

\begin{proof}[Proof of \cref{thm:hpp-general}]
Let $\mA_k$ be the best rank-$k$ approximation to $\mA$. Then the $\texttt{Hutch++}$ estimator allows us to estimate the trace of $\mA$ by writing $\mA = \mA_k + \Delta$, where $\|\Delta\|_F \leq 2 \|\mA - \mA_k\|_F$. Then, we can directly calculate $\Tr(\mA_k)$ and use Hutchinson's method \cite{hut1990hutchisons_method} with $\ell$ matrix-vector multiplication queries, which gives a standard additive error guarantee of 
\[ C\sqrt{\frac{\log(1/\delta)}{\ell}} \|\Delta\|_F,\]
for some fixed constant $C$. Now, we use the fact that if $\sigma_i$ are the singular values of $\mA$, then by the definition of Schatten norms,
\[\|\Delta\|_F \leq 2 \|\mA - \mA_k\|_F \leq \sqrt{\sum_{i=k}^{n} \sigma_i^2} \leq  \sqrt{\sigma_k^{2-p} \sum_{i=k}^n \sigma_i^p}.\]

Note that we have the following inequality: $k \sigma_k^p \leq \|\mA\|_p^p$. Therefore, rearranging gives $\sigma_k^{2-p} \leq k^{1 - 2/p} \|\mA\|_p^{2-p}$. Finally, we conclude that the total error is bounded by
\begin{align*}
   \|\tr(\mA) - \texttt{Hutch++}(\mA) \| &\leq C \sqrt{\frac{\log(1/\delta)}{\ell}} \|\Delta\|_F \\
   &\leq C \sqrt{\frac{\log(1/\delta)}{\ell}} \sqrt{\sigma_k^{2-p}\sum_{i=k}^n \sigma_i^p} \\
   &\leq C \sqrt{\frac{\log(1/\delta)}{\ell}} \sqrt{k^{1-2/p}\|\mA\|_p^{2-p}\|\mA\|_p^p} \\
   &\leq C \sqrt{\frac{\log(1/\delta)}{\ell k^{2/p-1}}} \|\mA\|_p. 
\end{align*}
Since we want to set $k = \ell$ to minimize the query complexity, it follows that to reduce to error to $\epsilon $, we need a number of queries equal to:
$$k =\ell =  \left( \frac{\sqrt{\log(1/\delta)}}{\epsilon}\right)^p. $$

Finally, by the same analysis of \cite{mmmw2020hutch_pp}, using $O(k + \log(1/\delta))$ matrix-vector products suffices to obtain a constant-factor rank-$k$ approximation of $\mA$, and further, we want $l \geq \log(1/\delta)$. This concludes the proof. 
\end{proof} 
With this generalized analysis of Hutch++, one can  easily extend \cref{thm:alg} and obtain:
\begin{theorem}[general Schatten-$p$ norm analysis of Algorithm \ref{alg:main}]
\label{thm:alg-p}
 Let $\mA_1,\mA_2,\cdots, \mA_m$ be $n\times n$ be matrices such that (1) $\|\mA_i\|_p \leq 1 $ for all $i$, and (2) $\|\mA_{i+1} -\mA_{i}\|_p \leq \alpha$ for all $i \leq m-1$ and some $p\in[1,2]$. Given matrix-vector multiplication access to the matrices, a failure rate $\delta >0$, and an error bound $\eps > 0$, there is an algorithm that outputs a sequence of estimates $t_1,\cdots, t_m$ such that for each $i \in [m]$,
\begin{equation}\label{eqn:dynamic-thm-p2}
    |t_i - \Tr \mA_i| \leq \eps  , \text{ with probability at least }  1-\delta.
    \end{equation}
The algorithm uses a total of 
\begin{equation}\label{eqn:alg-query-p2}
  O \left ((m\alpha+1) \log^{1+p}(1/\alpha)\left ( \frac{\sqrt{\log(1/(\alpha\delta))}}{\eps}\right)^p + m \log(1/(\alpha\delta))  \right).
\end{equation}
matrix-vector multiplication queries to $\mA_1,\mA_2,\cdots, \mA_m$.
\end{theorem}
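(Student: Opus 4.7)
\medskip
\noindent\textbf{Proof proposal for Theorem 3.3.}
The plan is to re-run the proof of Theorem~3.1 essentially verbatim, but substitute the general Schatten-$p$ guarantee for Hutch++ from Theorem~3.2 in place of the nuclear-norm bound of Lemma~3.2, and re-balance the per-level accuracy parameter $\eps'(\ell)$ so that the error telescopes across the $\log_2 s$ levels of the binary tree. I will use exactly the same grouping into blocks of size $s=\lceil 1/(2\alpha)\rceil$, the same tree of Hutch++ calls at levels $\ell=0,\dots,\log_2 s-1$, and the same failure parameter $\delta'=\alpha\delta$. The only change to the algorithm is that \texttt{Hutch++} is now invoked with its general-$p$ analysis.

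For correctness, I would first observe that the triangle inequality in the Schatten-$p$ norm gives $\|\mA^{(g)}_{2^\ell k}-\mA^{(g)}_{2^\ell(k-1)}\|_p\le 2^\ell\alpha$ just as in the $p=1$ case (the triangle inequality holds for every Schatten norm). Applying Theorem~3.2 to each difference yields a per-node error of at most $\eps'(\ell)\cdot 2^\ell\alpha$, and choosing $\eps'(\ell)=\eps/(2^{\ell+1}\alpha\log_2 s)$ makes every node contribute $\eps/(2\log_2 s)$; summing the at most $\log_2 s$ nodes visited by \textsc{SumTree}, together with the $\eps/2$ error from the base estimate $t_0=\texttt{Hutch++}(\mA_0^{(g)},\eps/2,\delta/2)$ and the assumption $\|\mA_0^{(g)}\|_p\le 1$, gives total error at most $\eps$. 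A union bound over the at most $\log_2 s$ nodes with $\delta'=\alpha\delta$ handles the failure probability, since there are $O(m\alpha)$ groups contributing to an overall failure budget of $O(m\alpha\cdot\log(1/\alpha)\cdot\alpha\delta)\le\delta$ after adjusting constants.

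For the query complexity, Theorem~3.2 states that each call costs
\[
O\!\left(\bigl(\sqrt{\log(1/\delta')}/\eps'(\ell)\bigr)^{p}+\log(1/\delta')\right),
\]
so the cost at level $\ell$, where there are $s/2^\ell$ calls, is
\[
O\!\left(\tfrac{s}{2^\ell}\cdot 2^{p(\ell+1)}\alpha^{p}\log^{p}\!s\,\cdot\tfrac{\log^{p/2}(1/(\alpha\delta))}{\eps^p}+\tfrac{s}{2^\ell}\log(1/(\alpha\delta))\right).
\]
Summing over $\ell=0,\dots,\log_2 s-1$, the dominant task is to bound $\sum_{\ell}2^{\ell(p-1)}$; this is a geometric series in $\ell$, bounded by $(\log_2 s)\,s^{p-1}$ uniformly in $p\in[1,2]$. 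Combining this with $s\asymp 1/\alpha$ makes the factor $\alpha^{p}\cdot s^{p-1}\cdot(\log s)=\alpha\cdot\log(1/\alpha)$, and multiplying by $\log^p s$ yields the advertised $\log^{1+p}(1/\alpha)$ factor in the first term of \eqref{eqn:alg-query-p2}. The additive terms sum to $O(s\log(1/(\alpha\delta)))$ per group, which becomes $O(m\log(1/(\alpha\delta)))$ after multiplying by the $O(m\alpha)$ groups. Finally, as in Theorem~3.1, if $\alpha<\eps$ then the trace moves by at most $\eps$ over any $\eps/\alpha$ consecutive steps, so we can apply the whole scheme to only $m\alpha/\eps$ effective matrices and absorb the saving in the $\min(1,\alpha/\eps)$ factor (the Theorem~3.3 statement retains the $m\log(1/(\alpha\delta))$ form for simplicity).

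The main obstacle I expect is the careful handling of the geometric sum $\sum_\ell 2^{\ell(p-1)}$ uniformly for $p\in[1,2]$: the naive bound $s^{p-1}/(2^{p-1}-1)$ blows up at $p=1$, whereas the uniform bound $(\log_2 s)\,s^{p-1}$ recovers the $\log^2(1/\alpha)$ factor of Theorem~3.1 at $p=1$ and the $\log^{1+p}(1/\alpha)$ factor stated here for $p>1$. Once that geometric step and the rebalancing of $\eps'(\ell)$ are done, the rest of the argument is a mechanical substitution into the proof of Theorem~3.1.
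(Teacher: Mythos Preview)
Your proposal is correct and follows essentially the same approach as the paper: same grouping, same tree, same choices of $\eps'(\ell)$ and $\delta'$, triangle inequality in the Schatten-$p$ norm, and the general-$p$ Hutch++ bound plugged in. The only cosmetic difference is in handling the level sum: instead of bounding $\sum_\ell 2^{\ell(p-1)}\le(\log_2 s)\,s^{p-1}$ as you do, the paper observes that $2^{\ell+1}\alpha\le 1$ and $p\ge 1$ imply $(2^{\ell+1}\alpha)^p\le 2^{\ell+1}\alpha$, which makes the first term at each level already $O\bigl(\log^p(1/\alpha)(\sqrt{\log(1/(\alpha\delta))}/\eps)^p\bigr)$ independent of $\ell$; summing over the $\log_2 s$ levels then gives the extra $\log(1/\alpha)$ for the stated $\log^{1+p}(1/\alpha)$ factor.
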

The proof is via counting the number of queries differently using the general Hutch++ analysis \cref{thm:hpp-general}. 
\begin{proof}[Proof of \cref{thm:alg-p}]
 Our error analysis is almost identical to the nuclear norm case. We sketch it here for completeness. Consider any fixed group $g$.
 By assumption every increment  $\mA^{(g)}_i -\mA^{(g)}_{i-1}$ has Schatten-$p$ norm at most $\alpha$.  Hence, by the triangle inequality, 
\begin{align*}
      \left\|\mA^{(g)}_{2^\ell k} -   \mA^{(g)}_{2^\ell(k-1)}\right\|_p  = \left\| \sum_{j =2^\ell k +1}^{2^\ell (k-1)} \mA^{(g)}_{j} - \mA^{(g)}_{j-1} \right\|_p \le 2^{\ell}  \alpha
\end{align*}
 By the general Schatten-$p$ norm analysis of the Hutch++ estimator (\cref{thm:hpp-general}) and the inequality above, we get that for all $\ell, k$:
\begin{align}
    \left| t_{k, \ell} - \Tr \left(\mA^{(g)}_{2^\ell k} -  \mA^{(g)}_{2^\ell(k-1)}\right) \right| &\le \epsilon'(\ell)   \left\|\mA^{(g)}_{2^\ell k} -   \mA^{(g)}_{2^\ell(k-1)}\right\|_p \nonumber   \\&\leq \epsilon'(\ell)  \cdot 2^\ell \alpha\nonumber  \\&= \eps/ (2\log_2 s)\label{eqn:success-one-call-p},
\end{align}
with probability at least $1-\delta'$. 
Again, by the guarantees of Hutch++, $t_0$ approximates $\Tr \mA^{(g)}_0$ up to an $(\eps/2)\|\mA^{(g)}_0\|_p$ additive error. Therefore, conditioned on \cref{eqn:success-one-call-p}, the total error of the estimate $t_{gs+i+1}$ for all $g,i$ is bounded by 
\begin{align}
    \left| t_{gs+i+1} - \Tr\left(\mA^{(g)}_i\right)\right| &\leq   (\eps/2) \left\|\mA^{(g)}_0\right\|_p+  (\log_2  s ) \cdot  \eps/ (2\log_2 s) \nonumber \\
    &\leq (\eps/2 )\left\|\mA^{(g)}_0\right\|_p+ \eps/2\nonumber \\
    &\leq \eps\label{eqn:success-all-p}.
\end{align}
A union bound thus proves the accuracy guarantee (\autoref{eqn:dynamic-thm-p2}).

We now count the query complexity differently using \cref{thm:hpp-general}.
As before, within each group (of size $s$) and in each level $\ell$, we make $O(s/2^\ell )$ calls to \texttt{Hutch++}$(\mA, \eps'(\ell),\delta')$.
This leads to a total number of 
\begin{equation}
    O\left( \left(s/2^{\ell}\right) \cdot\left ( \left( \frac{\sqrt{\log(1/\delta')}}{\epsilon'(\ell)}\right)^p + \log(1/\delta')
    \right)\right)
\end{equation}
matrix-vector multiplication queries by \cref{thm:hpp-general}.  Substituting $\eps'(\ell)  = \eps / 2^{\ell+1}\alpha \log_2 s$ and $\delta' = \alpha \delta$, we have
\begin{equation*}
    O \left ( \log^p(1/\alpha)\left ( \frac{\sqrt{\log(1/(\alpha\delta))}}{\eps}\right)^p + + \frac{1}{2^{\ell} \alpha} \log(1/(\alpha\delta))  \right).
\end{equation*}

Note that we have used the fact that $\alpha 2^{l+1} \leq \alpha s \leq 1$ to simplify the expression. Summing over $\ell \in \{0,1, \cdots,\log_2 s-1\}$, where $s = O(1/\alpha)$, we obtain that within each group, the number of queries is bounded  by
\begin{equation*}
    O \left ( \log^{1+p}(1/\alpha)\left ( \frac{\sqrt{\log(1/(\alpha\delta))}}{\eps}\right)^p + \frac{1}{  \alpha} \log(1/(\alpha\delta))  \right).
\end{equation*}
Since there are $\max \{ 1, O(m\alpha)\}$ groups, the total query complexity is at most 
\begin{equation*}
    O \left ((m\alpha+1) \log^{1+p}(1/\alpha)\left ( \frac{\sqrt{\log(1/(\alpha\delta))}}{\eps}\right)^p + m \log(1/(\alpha\delta))  \right).
\end{equation*}
This completes the proof.
 \end{proof}

\subsection{Relaxing Assumptions}\label{sec:relax-apx}

Recall that for dynamic trace estimation, we generally require all matrices $\mA_i$ to have unit-bounded Schatten-$p$ norm. While it is often the case that the initial matrix $\mA_1$ has controlled norm, in practice it is unrealistic to assume a general bound on the matrix norm upon dynamic updates. Of course, note that due to the bounded difference assumption, we can always use a linear bound $\|\mA_i\|_p \leq 1 + \alpha i$. However, using this bound na\"ively with the analysis of other algorithms, such as Hutchinson's or its variance-reduced version of \cite{dharangutte2021dynamic}, introduces additional $\text{poly}(m, \alpha)$ terms in the query complexity. Instead, we show that our tree-based procedure without any initial partitioning still attains an optimal dependence on $m$ and $\alpha$ for the nuclear norm.

\begin{theorem}[general Schatten-$p$ norm analysis of non-partitioned Algorithm \ref{alg:main}]
\label{thm:alg-p-relaxed}
 Let $\mA_1,\mA_2,\cdots, \mA_m$ be $n\times n$ matrices such that (1) $\|\mA_1\|_* \leq 1 $ and (2) $\|\mA_{i+1} -\mA_{i}\|_* \leq \alpha$ for all $i \leq m-1$. Given matrix-vector multiplication access to the matrices, a failure rate $\delta >0$ and error bound $\eps$, there is an algorithm that outputs a sequence of estimates $t_1,\cdots, t_m$ such that for each $i \in [m]$,
\begin{equation}\label{eqn:dynamic-thm-p}
    |t_i - \Tr \mA_i| \leq \eps  , \text{ with probability at least }  1-\delta.
    \end{equation}
The algorithm uses a total of 
\begin{equation}\label{eqn:alg-query-p}
  O \left (m\alpha \log(m)^2\left ( \frac{\sqrt{\log(m\delta)}}{\eps}\right) + m \log(m\delta)  \right).
\end{equation}
matrix-vector multiplication queries to $\mA_1,\mA_2,\cdots, \mA_m$.
\end{theorem}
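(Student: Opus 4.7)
The plan is to analyze the version of Algorithm \ref{alg:main} that skips the initial partitioning step: one treats the full sequence $\mA_1,\dots,\mA_m$ (padded with copies of $\mA_1$ so that $m$ is a power of two) as a single ``group'' of size $s=m$, builds a single binary tree of depth $L=\log_2 m$, and runs \texttt{Hutch++} at each internal node exactly as in lines 5--8 of Algorithm \ref{alg:main}, with adjusted parameters $\eps'(\ell) = \eps/(2^{\ell+1}\alpha L)$ and $\delta' = \delta/(2L)$. Output is then computed via $t_i = t_0 + \textsc{SumTree}(1,i-1,L-1,t)$ with $t_0 = \texttt{Hutch++}(\mA_1,\eps/2,\delta/2)$. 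The key conceptual point is that even though $\|\mA_i\|_*$ may grow linearly in $i$, the quantities actually fed into \texttt{Hutch++} are either $\mA_1$ (which has $\|\mA_1\|_*\le 1$ by assumption) or differences $\mA_{2^\ell k}-\mA_{2^\ell(k-1)}$, whose nuclear norm is controlled purely by the consecutive-difference bound, independently of the absolute norms of the $\mA_i$.

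For the accuracy analysis, I would mirror the argument used for \cref{thm:alg}. By the triangle inequality applied to the assumed bound $\|\mA_{i+1}-\mA_i\|_*\le\alpha$, one obtains $\|\mA_{2^\ell k}-\mA_{2^\ell(k-1)}\|_* \le 2^\ell\alpha$, so \cref{lem:hutch-pp} gives $|t_{k,\ell}-\Tr(\mA_{2^\ell k}-\mA_{2^\ell(k-1)})| \le \eps'(\ell)\cdot 2^\ell\alpha = \eps/(2L)$ with probability at least $1-\delta'$. The estimate $t_0$ satisfies $|t_0-\Tr\mA_1|\le(\eps/2)\|\mA_1\|_*\le\eps/2$ with probability at least $1-\delta/2$. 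Since for any fixed $i$ the \textsc{SumTree} procedure touches at most $L$ internal nodes (one per level), a union bound over these $L$ nodes plus the $t_0$ estimate gives overall failure probability at most $\delta/2 + L\cdot\delta' \le \delta$, while the errors telescope to $\eps/2 + L\cdot\eps/(2L) = \eps$, as required.

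For the query count I would simply sum across the tree. Level $\ell$ contains $m/2^\ell$ nodes; by \cref{lem:hutch-pp} each call uses $O(\sqrt{\log(1/\delta')}/\eps'(\ell) + \log(1/\delta'))$ queries. Substituting $\eps'(\ell)=\eps/(2^{\ell+1}\alpha L)$ and $\delta'=\delta/(2L)$, the first term becomes $O(m\alpha L\sqrt{\log(m/\delta)}/\eps)$, which is independent of $\ell$, so summing over $\ell=0,\dots,L-1$ yields $O(m\alpha L^2\sqrt{\log(m/\delta)}/\eps) = O(m\alpha\log^2(m)\sqrt{\log(m/\delta)}/\eps)$. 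The second term sums to $\sum_{\ell=0}^{L-1}(m/2^\ell)\log(m/\delta) = O(m\log(m/\delta))$. Adding the two bounds recovers (\ref{eqn:alg-query-p}).

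There is no real obstacle here: the argument is a direct re-running of the proof of \cref{thm:alg} on a tree of depth $\log_2 m$ instead of $\log_2 s$. The only subtlety worth flagging is the observation that drives the theorem: the guarantee of \texttt{Hutch++} at each internal node depends only on the Schatten-$1$ norm of its \emph{input}, which for the tree-based construction is always a bounded-norm difference (or the initial matrix $\mA_1$); algorithms like DeltaShift \cite{dharangutte2021dynamic} that re-center against a ``base'' matrix incur an error term proportional to $\|\mA_i\|_*$ and therefore cannot avoid a $\mathrm{poly}(m,\alpha)$ penalty under the relaxed assumption, whereas the binary-tree construction only pays an extra $\log m$ factor from the additional depth.
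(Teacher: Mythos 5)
Your proposal is correct and follows essentially the same route as the paper: run the binary tree on the whole sequence as a single group of size $m$, observe that every \texttt{Hutch++} call is applied either to $\mA_1$ or to a difference whose nuclear norm is bounded by $2^\ell\alpha$ independently of the growth of $\|\mA_i\|_*$, and re-run the query count with $\eps'(\ell)=\eps/(2^{\ell+1}\alpha\log_2 m)$. The only (immaterial) deviation is your choice $\delta'=\delta/(2\log_2 m)$ versus the paper's $\delta'=\delta/m$; both suffice for the per-index union bound over the $\le\log_2 m$ tree nodes touched by \textsc{SumTree} and both give the same asymptotic bound.
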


The proof follows by grouping all queries into a group of size $m$, implying that there is a $\log(m)$ overhead by using the tree technique. Therefore, the main alteration to Algorithm \ref{alg:main} is to 1) avoid partitioning into $1/\alpha$ subgroups and 2) calling $\texttt{Hutch++}$ at each level with updated parameters: $\eps'(\ell)  = \eps / (2^{\ell+1}\alpha \log_2 m)$ and $\delta' =  \delta/m$.

\begin{proof}[Proof of \cref{thm:alg-p-relaxed}]
 
Compared with \cref{thm:alg}, the error and success rate analysis remains unchanged. We only need to count the query complexity differently using \cref{thm:hpp-general}. Note that in this case, there is only one group of size $s = m$. As before, at each level $\ell$, we make $O(s/2^\ell )$ calls to \texttt{Hutch++}$(\mA, \eps'(\ell),\delta')$.
This leads to a total number of 
\begin{equation}
    O\left( \left(s/2^{\ell}\right) \cdot\left ( \left( \frac{\sqrt{\log(1/\delta')}}{\epsilon'(\ell)}\right)^p + \log(1/\delta')
    \right)\right)
\end{equation}
matrix-vector multiplication queries by \cref{thm:hpp-general}.  Substituting $\eps'(\ell)  = \eps / 2^{\ell+1}\alpha \log_2 m$ and $\delta' =  \delta/m$, we have
\begin{equation*}
    O \left ( m\log(m)\alpha \left ( \frac{\sqrt{\log(m\delta)}}{\eps}\right) + \frac{m}{2^\ell} \log(m\delta)  \right).
\end{equation*}

Summing over $\ell \in \{0,1, \cdots,\log_2 m\}$, we obtain that for this large group, the number of queries is bounded  by
\begin{equation*}
    O \left (  m \log^{2}(m)\left ( \frac{\sqrt{\log(m\delta)}}{\eps}\right) + m \log(m\delta)  \right).
\end{equation*}

This completes the proof.
 \end{proof}

\section{Proof Details of \texorpdfstring{\cref{sec:lower}}{Section 4}} \label{sec:sec4-apx}
\subsection{Proof of \texorpdfstring{\cref{thm:main-lowerbound}}{Theorem 4.1}}\label{sec:adap-lower-pf}
We prove the two lower bounds separately. Together they imply \cref{thm:main-lowerbound}.
\subsubsection{Lower Bound  I} 
Let $\mA \in \mathbb R^{n\times n}$ be a general square matrix.  
Recall that the goal is to estimate its trace $\Tr \mA$ up to an additive $\varepsilon \|\mA\|_p$. 
We work under the bit complexity model, where the query vectors $q_1, q_2, \cdots q_r \in \mathbb R^n$ have entries specified by $k$ bits.  To lower bound $r$, the number of queries, we reduce the communication problem of the \textsc{Approximate-Orthogonality} to trace estimation. 
 
 The \textsc{Approximate-Orthogonality} problem is a two-party communication problem defined on inputs in $\{-1,1\}^{m} \times  \{-1,1\}^{m}$  by the Boolean function 
 \begin{equation}\label{eqn:ort-def}
\operatorname{ORT}_{b, m}(x, y)= \begin{cases}1, & \text { if  }|\langle x, y\rangle| \leq b \sqrt{m} \\ -1, & \text { otherwise. }\end{cases}
\end{equation}
 The problem  is known to have $\Omega{(m)}$ communication complexity, under the uniform distribution.  Let 
 \begin{equation}
\operatorname{tail}(x)=\frac{1}{\sqrt{2 \pi}} \int_{x}^{\infty} e^{-x^{2} / 2} d x.
\end{equation}
be the tail probability of the standard normal.
 \begin{lemma}[Communication complexity of ORT, Theorem 4.2 of \cite{chakrabarti2012information}]\label{lem:CC-ORT}
 Let $b>1/5$ be a     constant and $\theta=\operatorname{tail}(2.01 \max \{66, b\})$. Then we have $\text{CC}_\theta (\operatorname{ORT}_{b, m}) = \Omega(m)$. The lower bound holds even when the inputs are drawn uniformly from $\{-1,1\}^{m} \times  \{-1,1\}^{m}$.
 \end{lemma}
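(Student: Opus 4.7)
The plan is to reduce $\operatorname{ORT}_{b,m}$ to the Gap Hamming Distance (GHD) problem and then invoke the classical $\Omega(m)$ communication complexity lower bound for GHD. First, I would observe that for $x, y \in \{-1,1\}^m$,
\[
\langle x, y\rangle \;=\; m - 2\, d_H(x, y),
\]
where $d_H$ denotes the Hamming distance after the identification $\pm 1 \leftrightarrow \{0,1\}$. Hence $|\langle x, y\rangle| \le b\sqrt{m}$ is equivalent to $|d_H(x,y) - m/2| \le b\sqrt{m}/2$, and any protocol computing $\operatorname{ORT}_{b,m}$ distinguishes Hamming distances in a band of half-width $\Theta(\sqrt{m})$ around $m/2$ from those outside it, which is exactly the GHD promise problem at the canonical scale.

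Second, I would prove (or invoke) the Chakrabarti--Regev information-complexity lower bound for GHD. The strategy is to take the uniform product distribution on $\{-1,1\}^m \times \{-1,1\}^m$ (under which $\langle x,y\rangle/\sqrt{m}$ is approximately standard normal by the CLT, so the two sides of the promise have non-trivial probability) and to lower bound the external information cost $\operatorname{IC}_\mu(\Pi)$ of any correct protocol $\Pi$. The heart of the argument is an information-theoretic direct sum: the information revealed about $(x,y)$ is at least the sum over coordinates of the information revealed about each $(x_i,y_i)$, so it suffices to prove that a single-coordinate protocol with vanishing information cost cannot produce output whose bias correlates with $d_H(x,y) - m/2$ more than a tiny amount controlled by a squared Hellinger distance. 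Combining this per-coordinate estimate with subadditivity of squared Hellinger distance under product distributions then shows that achieving advantage $\Omega(1)$ over the baseline guess (whose error is $\approx 2\,\operatorname{tail}(b)$) requires $\Omega(m)$ bits of information, hence $\Omega(m)$ bits of communication.

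The main obstacle is calibrating the Gaussian tail, concentration, and Hellinger-distance estimates so that the allowable error is precisely $\theta = \operatorname{tail}(2.01\max\{66, b\})$ rather than some unspecified small constant. The factor $2.01$ provides a concentration slack beyond the natural threshold $b$ (it governs how much the ``large inner product'' side of the promise is pushed into the Gaussian tail so that the two cases are statistically separable in total variation), and the uniform floor of $66$ absorbs finite-sample CLT error when $b$ is small, ensuring that the single-coordinate Hellinger bound is strong enough for the direct sum to close. Once these quantitative bounds line up, the $\Omega(m)$ lower bound for GHD transfers back to $\operatorname{ORT}_{b,m}$ under the uniform distribution via the bijective reduction described above, yielding $\operatorname{CC}_\theta(\operatorname{ORT}_{b,m}) = \Omega(m)$.
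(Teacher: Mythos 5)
The paper does not prove this lemma at all: it is quoted verbatim as Theorem 4.2 of Chakrabarti--Regev, so the only thing to assess is whether your sketch is a viable route to that theorem. Your first step (the affine identification $\langle x,y\rangle = m - 2 d_H(x,y)$, so that $\operatorname{ORT}_{b,m}$ is a two-sided Gap-Hamming-Distance instance at gap $\Theta(\sqrt m)$ around $m/2$) is correct and is indeed how ORT and GHD are interchanged in the literature. The problem is the second step.

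The information-complexity direct-sum argument you propose --- lower bound the per-coordinate information about $(x_i,y_i)$ via a squared-Hellinger estimate, then sum over coordinates --- is precisely the approach that is known \emph{not} to work for GHD, and its failure is the reason the $\Omega(m)$ bound was open for years after the analogous bound for disjointness. The direct-sum paradigm needs the target function to decompose coordinatewise (e.g.\ $\mathrm{DISJ} = \mathrm{OR}$ of per-coordinate ANDs) so that a single hard coordinate can be embedded while the others are fixed by public randomness. $\operatorname{ORT}_{b,m}$ is a threshold of the aggregate statistic $\sum_i x_i y_i$ with a promise gap of $\Theta(\sqrt m)$; fixing all but one coordinate leaves a single-coordinate problem whose two cases differ by $\pm 2$, far inside the promise gap, so the embedding is vacuous and the per-coordinate Hellinger bound you would need is $0$. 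There is no known repair of this scheme. The actual proof of Chakrabarti--Regev Theorem 4.2 takes a different route: an anti-concentration theorem showing that for any sufficiently large sets $A,B \subseteq \{-1,1\}^m$, for most $y \in B$ the projection $\langle x, y\rangle/\sqrt m$ with $x$ uniform on $A$ retains Gaussian-like tails, hence no large rectangle can be nearly monochromatic for $\operatorname{ORT}_{b,m}$ at error $\operatorname{tail}(2.01\max\{66,b\})$; the $\Omega(m)$ bound then follows from a rectangle/corruption-style argument under the uniform distribution (later simplified by Vidick and, via a different reduction, by Sherstov). If you want to include a proof sketch rather than a citation, it should follow one of those arguments; as written, your plan hinges on a lemma that provably cannot deliver the bound.
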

 
 We now prove our adaptive trace estimation lower bound for general matrices, by connecting it with the \textsc{Approximate-Orthogonality} problem. It implies that the classic Hutchinson's estimator is optimal for constant success probability.  
\begin{theorem}[Adaptive query lower bound, I]\label{thm:adap-lower}
Any algorithm that accesses a square matrix $\mA$ via matrix-vector multiplication queries  requires at least $\Omega\left(\frac{1}{\epsilon^p(k+\log(1/\epsilon))} \right)$   queries  to output  an estimate $t$ such that with probability at least $1-\delta/2$, $|t - \Tr\mA| \leq \epsilon \|\mA\|_p$, for $p\in[1,2]$ and $\delta = \tail(2.01\cdot 66) = \Theta(1)$,  where the query vectors may be adaptively chosen and their  entries  are specified by $k$ bits.
 \end{theorem}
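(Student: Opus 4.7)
The plan is to reduce the static trace estimation problem to the $\textsc{Approximate-Orthogonality}$ communication problem $\operatorname{ORT}_{b,n^2}$ from \cref{lem:CC-ORT}. Fix a constant $b>1/5$ and let $\theta=\tail(2.01\cdot 66)$ so that $\operatorname{CC}_\theta(\operatorname{ORT}_{b,n^2}) = \Omega(n^2)$. Given Alice's input $X \in \{-1,1\}^{n\times n}$ and Bob's input $Y \in \{-1,1\}^{n\times n}$ drawn uniformly, set $\mA = X Y^\top$. Then
\[
\Tr(\mA) \;=\; \sum_{i,j} X_{ij}Y_{ij} \;=\; \langle \vec X,\vec Y\rangle,
\]
exactly the ORT inner product on the $n^2$-bit vectorizations. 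This is the ``product of two matrices'' simulation alluded to in the introduction.

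Next, I would bound $\|\mA\|_p$. Standard random matrix theory gives $\|X\|_{\mathrm{op}},\|Y\|_{\mathrm{op}} = O(\sqrt n)$ except with probability $e^{-\Omega(n)}$, so conditioning on this event, $\|\mA\|_{\mathrm{op}} = O(n)$ and $\operatorname{rank}(\mA)\le n$, hence $\|\mA\|_p \le n^{1/p}\|\mA\|_{\mathrm{op}} = O(n^{1+1/p})$. I would choose $n = \lfloor (c/\epsilon)^p\rfloor$ for a sufficiently small absolute constant $c$. Then the error budget $\epsilon\|\mA\|_p = O(\epsilon\cdot n^{1+1/p}) = O(c\cdot n)$ becomes an arbitrarily small constant multiple of the ORT threshold $b\sqrt{n^2} = bn$. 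Since $\langle \vec X,\vec Y\rangle$ is approximately $\mathcal N(0,n^2)$ under the uniform distribution, the probability that $|\langle \vec X,\vec Y\rangle|$ lies within $\epsilon\|\mA\|_p$ of $bn$ is $O(c)$, which I force below $\theta/3$ by shrinking $c$.

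For the communication simulation, Alice and Bob share public coins and Alice runs the adaptive trace estimator. For each query $q_i\in\mathbb R^n$, she sends $q_i$ to Bob in $nk$ bits, Bob returns $Y^\top q_i$, and Alice locally computes $\mA q_i = X(Y^\top q_i)$. Since each coordinate of $q_i$ fits in $k$ bits and $Y$ has $\pm 1$ entries, each coordinate of $Y^\top q_i$ is an integer of magnitude at most $n\cdot 2^k$, so it encodes in $k+\lceil\log_2 n\rceil = O(k+\log(1/\epsilon))$ bits. Each query thus costs $O\bigl(n(k+\log(1/\epsilon))\bigr)$ bits, giving total communication $O\bigl(rn(k+\log(1/\epsilon))\bigr)$ over $r$ queries. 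To decide ORT, Alice outputs $1$ iff $|t|\le bn$; outside the gray zone and the negligible spectral bad event, this is correct, so the ORT protocol fails with probability at most (estimator failure $\delta/2$) $+$ (gray zone $\theta/3$) $\le \theta$. Combining with $\operatorname{CC}_\theta(\operatorname{ORT}_{b,n^2})=\Omega(n^2)$ yields
\[
r \;=\; \Omega\!\left(\frac{n}{k+\log(1/\epsilon)}\right) \;=\; \Omega\!\left(\frac{1}{\epsilon^p\bigl(k+\log(1/\epsilon)\bigr)}\right).
\]

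The main obstacle will be item (ii) of the argument: obtaining a tight, high-probability Schatten-$p$ bound for the random matrix product $\mA = XY^\top$ with constants sharp enough that the ``gray zone'' around $bn$ has probability strictly less than the allowed ORT error $\theta$, uniformly in $p\in[1,2]$. A secondary technical point is making sure the coordinate-wise bit accounting for $Y^\top q_i$ yields the additive $k + \log(1/\epsilon)$ factor (rather than a multiplicative $k\log(1/\epsilon)$), which is exactly what delivers the extra $\log(1/\epsilon)$ term in the denominator of the final bound.
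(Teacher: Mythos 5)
Your proposal is correct and follows essentially the same route as the paper: a reduction from $\operatorname{ORT}_{b,n^2}$ in which the hard matrix is the product of Alice's and Bob's $\pm 1$ matrices (so its trace equals their inner product), with $n=\Theta(1/\eps^p)$ chosen so that $\eps\|\mA\|_p$ is a small constant fraction of the ORT threshold, and with each simulated adaptive query costing $O(n(k+\log(1/\eps)))$ bits against the $\Omega(n^2)$ communication bound. The differences are minor: you bound $\|\mA\|_p$ via operator-norm concentration plus the rank-$n$ bound instead of Markov on $\|\mA\|_F^2$, and your explicit anti-concentration treatment of the ``gray zone'' around the threshold is in fact more careful than the paper's one-sided correctness check.
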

 \begin{proof}[Proof of \cref{thm:adap-lower}]
 Let $\mathcal{A}$ be a possibly adaptive algorithm for trace estimation using matrix-vector multiplication queries. Suppose it takes at most $r(n)$ queries to solve the problem, on any $n$-by-$n$ square matrix, with success rate at least $1-\delta = \Omega(1)$. 
Consider an instance of \textsc{Approximate-Orthogonality} with $b=2$, where $(x,y)$ is drawn uniformly from  $\{-1,1\}^{m} \times  \{-1,1\}^{m}$.  

The proof proceeds by reducing the problem of computing $\text{ORT}_{b,m}(x,y)$ to trace estimation via $\mathcal{A}$. 
Let $n= \frac{\delta^{p/2}}{2^{p/2} \varepsilon^p} =\Theta(1/\varepsilon^p) $ and $m = n^2$.
The reduction and its resulting communication protocol are given as follows. First, given $x\in\{-1,1\}^{m} $,  Alice creates a square matrix $\mA$, where the rows of $\mA$ correspond to the entries of $x$ in order. Similarly,  given $y\in\{-1,1\}^{m}$, Bob creates a square matrix $\mB$, where the columns of $\mB$ correspond to the entries of $y$ in order.  Then  the protocol repeats the following steps for $r(n)$ rounds. 
\begin{enumerate}[(i)] 
\item In the $i$-th round from $i=1$,  Alice creates the first query $q_i$, according to $\mathcal{A}$, given all previous query values $\left\{q_j^\top \mA\mB\right\}_{j<i}$. She computes $q_i^\top \mA$  and sends it to Bob. 
\item  Bob computes $q_i^\top \mA \mB$ and sends it back to Alice. 
\end{enumerate}
At the end of the protocol, with probability at least $1-\delta/2$, Alice and Bob obtain an estimate $t$ such that  
\begin{equation}\label{eqn:A-guarantee}
    |t - \Tr(\mA\mB)| \leq \epsilon \|\mA\mB\|_p,
\end{equation}
by the guarantee of algorithm $\mathcal{A}$. Finally, they output $z=1$ if $t \leq 3\sqrt{m} $ and $z=-1$ otherwise. 

We   argue that the above protocol   computes $\text{ORT}_{b,m}$ with error at most $\delta = \text{tail}(2.01\cdot 66)$.   First, note that by construction of steps (i) and (ii),  we have $\Tr (\mA\mB) = \langle x, y\rangle$. 
Therefore, by \cref{eqn:A-guarantee}, 
\begin{equation}\label{eqn:gap1-guarantee}
 \Pr_\mathcal{A} ( |t - \langle x,y\rangle | \leq \epsilon \|\mA\mB\|_p ) = \Pr_\mathcal{A} ( |t - \Tr (\mA\mB) |\leq \epsilon \|\mA\mB\|_p )   \geq 1-\delta/2.
\end{equation} 
It now suffices to show that the error term $\eps \|\mA\mB\|_p$ is small. Note that since $x,y$ are drawn uniformly at random, it follows that $ \E(\mA\mB)_{i,j}^2  = n$ for all $i,j \in [n]$. By linearity of expectation, $\E \|\mA\mB\|_F^2 = n^3$. By Markov's inequality, $$\Pr (\|\mA\mB\|_F^2  > t n^3) \leq 1/t$$ for any $t> 0$, and therefore,   \[\Pr (\eps \|\mA\mB\|_F  >\eps \sqrt{t} n^{3/2}) \leq 1/t.\]
Since $\|\mX\|^{p} \leq n^{1/p-1/q}\|\mX\|_q$ for any $n\times n$ matrix $\mX$, it follows that 
\[\Pr (\eps \|\mA\mB\|_p  >\eps \sqrt{t} n^{1/p -1/2}\cdot n^{3/2}) = \Pr (\eps \|\mA\mB\|_p  >\eps \sqrt{t} n^{1/p +1} ) \leq 1/t.\]
Plugging in the value of $\eps = \frac{1}{ n^{1/p}} \sqrt{\frac{\delta}{2}}$ and setting $t= 2/\delta$, we get  
\begin{equation}\label{eqn:gap2-guarantee}
\Pr_{x,y} \, (\eps \|\mA\mB\|_p  > n) \leq 1/t =  \delta / 2
\end{equation}
Combining \cref{eqn:gap1-guarantee} and \cref{eqn:gap2-guarantee} and using a union bound,
\begin{equation}
    \Pr ( | t- \langle x,y\rangle| \leq  \sqrt{m} ) \geq 1-\delta.
\end{equation}
Therefore, whenever    $\langle x, y \rangle \leq 2\sqrt{m}$, we have $t \leq 3\sqrt{m}$, and so the protocol outputs $z=1$ correctly. This proves that the protocol solves $\text{ORT}_{b,m}$ with error at most $\delta$ (for $b=2$).

To complete the proof, we account for the total communication cost of the protocol. For that, we simply note that each message from Alice or Bob is a vector of $n$ dimensions. It suffices to specify each entry with $k + \log (n/\eps)$ bits. Hence, the protocol solves $\text{ORT}_{b,m}$ with communication cost   $r(n) \cdot O(n (k+\log (1/\eps)))$. By the communication lower bound \cref{lem:CC-ORT}, it is required that
\begin{equation*}
    r(n) \cdot O(n (k+\log (1/\eps))) \geq m =  n^2.
\end{equation*}
Rearranging and using $n = \Theta(1/\eps^p)$, we have $$r(n) \geq \Omega\left(\frac{1}{\epsilon^p(k+\log(1/\epsilon))} \right),$$ as desired. This completes the proof.
 \end{proof}

 \subsubsection{Lower Bound II}
 We now give a second lower bound that yields the correct dependence on the failure probability $\delta$. The bound holds for any Schatten-$p$ norm error guarantee, so we state it generally.  In particular, we show:
 \begin{theorem}[Adaptive query lower bound, II]\label{thm:adap-lower-2}
Any     algorithm  that accesses a   square matrix $\mA$ via matrix-vector multiplication queries  requires  at least $\Omega\left(\frac{\log (1/\delta)}{k+\log\log(1/\delta)} \right)$   queries  to output  an estimate $t$ such that with probability at least $1-\delta$, $|t - \Tr\mA| \leq 0.1 \|\mA\|_p$, for any $p$ and any $\delta \in (0,1)$,  where the query vectors may be adaptively chosen and their  entries  are specified by $k$ bits.
 \end{theorem}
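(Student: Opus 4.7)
The plan is to reduce from the \textsc{Gap-Equality} communication problem, following a structural template analogous to the proof of \cref{thm:adap-lower} but with a different underlying communication primitive. In \textsc{Gap-Equality}, Alice and Bob each receive an input drawn from a distribution supported on ``$x = y$'' and ``$x,y$ far apart'' with equal probability, and must decide which case holds; it is a classical fact that the private-coin randomized communication complexity of this task with failure probability $\delta$ is $\Omega(\log(1/\delta))$.

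I would first encode the inputs into square matrices: Alice constructs $\mA$ and Bob constructs $\mB$, both $n \times n$, such that $\Tr(\mA\mB)$ is noticeably larger in the equal case than in the far-apart case---specifically, the gap exceeds a constant multiple of $\|\mA\mB\|_p$. A natural way to obtain such an encoding with controllable $n$ is via a fingerprinting-style construction with public randomness, where each possible input is mapped to a small random matrix so that its self-inner-product is close to $1$ and its inner product with any other input's matrix concentrates near $0$. This guarantees that any estimator $t$ satisfying $|t - \Tr(\mA\mB)| \leq 0.1 \|\mA\mB\|_p$ with failure probability $\delta$ also solves \textsc{Gap-Equality} with failure probability $\delta$, by simply thresholding $t$.

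Next I would simulate the adaptive algorithm $\mathcal{A}$ in a two-party protocol exactly as in the proof of \cref{thm:adap-lower}: in round $i$, Alice uses $\mathcal{A}$'s state to form the query $q_i$, computes $q_i^\top \mA$, and transmits it to Bob using $O(n(k + \log n))$ bits, since each entry has magnitude at most $n \cdot 2^k$; Bob replies with $q_i^\top \mA\mB$ at the same bit cost. After $r$ rounds both parties know the estimate $t$ and output their decision. The total communication is $O(r \cdot n(k + \log n))$ bits, which by the \textsc{Gap-Equality} lower bound must be at least $\Omega(\log(1/\delta))$.

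The main obstacle is balancing the matrix dimension $n$: it must be large enough both to make matrix-vector trace estimation nontrivial (when $n = O(1)$ the algorithm learns $\mA\mB$ in $O(1)$ queries) and to embed an instance whose communication lower bound is sufficiently large, yet small enough that $n(k + \log n) = O(k + \log\log(1/\delta))$. The right compromise is to take $n = \Theta(\log^c(1/\delta))$ for a suitable small constant $c$, so that $\log n = O(\log\log(1/\delta))$, and then to use a direct-sum argument to lift the single-instance $\Omega(\log(1/\delta))$ bound to an $\Omega(n \log(1/\delta))$ communication lower bound on the embedded problem. Putting the pieces together gives $r \cdot n(k + \log n) = \Omega(n \log(1/\delta))$, hence $r = \Omega(\log(1/\delta)/(k + \log\log(1/\delta)))$ as claimed.
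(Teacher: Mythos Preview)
Your proposal has a genuine gap at the direct-sum step, and this gap is fatal. Trace estimation produces a \emph{single} scalar output $t$; it cannot simultaneously solve $n$ independent \textsc{Gap-Equality} instances, so there is no clear way to ``lift'' the $\Omega(\log(1/\delta))$ single-instance bound to $\Omega(n\log(1/\delta))$ for the embedded problem you describe. Without that lift, your simulation gives only $r\cdot n(k+\log n)=\Omega(\log(1/\delta))$, and with $n=\Theta(\log^{c}(1/\delta))$ this yields at best $r=\Omega(\log^{1-c}(1/\delta)/(k+\log\log(1/\delta)))$, strictly weaker than the claim. The fingerprinting construction is also in tension with small $n$: getting $\delta$-failure concentration on random inner products already forces $n^2=\Omega(\log(1/\delta))$, so $c\geq 1/2$ and you lose a $\sqrt{\log(1/\delta)}$ factor.

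The paper's argument sidesteps both issues by \emph{not} following the two-matrix product template of \cref{thm:adap-lower}. It takes $n=\log(1/\delta)$ and works with a \emph{single} rank-one matrix $\mA=(x-y)(x-y)^{\top}$, which is either $\bm{0}$ (when $x=y$) or has $\Tr\mA=\|\mA\|_p=n/2$. Two short observations replace your direct-sum step: (i) a union bound over the $\leq 2^{n}=1/\delta$ promise inputs turns the $\delta$-failure algorithm into one that is correct on \emph{all} inputs, so the \emph{deterministic} $\Omega(n)$ lower bound for \textsc{Gap-Equality} applies; (ii) because every query to $\bm{0}$ returns $0$, the adaptive query sequence is effectively non-adaptive on these inputs, and the resulting one-way protocol has Alice send only the $r$-dimensional vector $\mQ x$. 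The communication is therefore $r(k+\log n)=r(k+\log\log(1/\delta))$ --- with no factor of $n$ --- and comparing with $\Omega(n)=\Omega(\log(1/\delta))$ gives the stated bound directly.
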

 
 Our proof leverages another   communication problem, \textsc{Gap-Equality}. In this problem, Alice holds $x\in \{0,1\}^n$ and Bob holds $y\in \{0,1\}^n$, under the promise that either $x=y$ or $\|x-y\|_2^2 = n/2$. They wish to compute 
 \begin{equation}
     \text{EQ}_n(x, y)= \begin{cases}1, & \text { if  } x=y \\ -1, & \text { otherwise. }\end{cases}
 \end{equation}
The problem requires linear communication complexity for any deterministic protocol \cite{buhrman1998quantum}.
 \begin{lemma}[Communication complexity of  \textsc{Gap-Equality}\label{lem:eq} \cite{buhrman1998quantum}]
 Any deterministic protocol for computing $\text{EQ}_n$ requires $\Omega(n) $ bits of communication.
 \end{lemma}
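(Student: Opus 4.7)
The plan is to prove the lower bound via the classical fooling set method for deterministic communication complexity. The target object is a large subset $S \subseteq \{0,1\}^n$ such that every pair of distinct elements $x, y \in S$ has Hamming distance exactly $n/2$, i.e.\ lies on the ``non-equal'' side of the promise. Given such an $S$, I would take the diagonal pairs $\{(x,x) : x \in S\}$ as candidate fooling set inputs, on all of which $\mathrm{EQ}_n$ evaluates to $1$.

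Next I would invoke the rectangle property of deterministic protocols: if the protocol produces the same transcript on $(x,x)$ and $(y,y)$ for distinct $x,y \in S$, then it also produces that transcript (and hence the same output) on the cross pair $(x,y)$. By construction $(x,y)$ lies in the promise, yet $\mathrm{EQ}_n(x,y)=-1$, giving a contradiction. Therefore a correct deterministic protocol must use at least $|S|$ distinct transcripts, and so the communication complexity is at least $\log_2 |S|$ bits.

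The crux is then a lower bound on the number $|S|$ that can be attained, which is exactly the main obstacle. The naive route hits an immediate wall: identifying $\{0,1\}^n$ with $\{\pm 1\}^n$ converts a pairwise-distance-$n/2$ set into a pairwise-orthogonal set in $\mathbb{R}^n$, of which at most $n$ exist, yielding only an $\Omega(\log n)$ lower bound. To reach the claimed $\Omega(n)$, I would move from the fooling-set number to the stronger partition/cover number lower bound, and bound it by studying the chromatic structure of the ``Hamming-distance-$n/2$ graph'' on $\{0,1\}^n$: each monochromatic 1-rectangle in the communication matrix of $\mathrm{EQ}_n$ corresponds to a set with no two elements at distance exactly $n/2$ (an independent set in this graph), and any partition of the diagonal into such rectangles yields a proper coloring of the graph. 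An $\Omega(n)$ lower bound on $\log_2$ of the chromatic number of this graph then closes the argument.

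The hard step will be establishing that combinatorial chromatic bound. I would follow the line of attack in~\cite{buhrman1998quantum}, which resolves the question by exploiting the expansion/density of the distance-$n/2$ graph (so that no small number of ``distance-avoiding'' sets can cover $\{0,1\}^n$), and translating this back to the partition number of the partial communication matrix to obtain the required $\Omega(n)$ deterministic lower bound.
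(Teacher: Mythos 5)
The paper does not prove this lemma at all---it is imported verbatim from \cite{buhrman1998quantum}---so the comparison is really against the argument in that reference, and your sketch reconstructs it faithfully. You correctly observe that a literal fooling set fails (pairwise distance exactly $n/2$ means pairwise orthogonal $\pm1$ vectors, hence at most $n$ of them, giving only $\Omega(\log n)$), and you correctly pivot to the cover-number argument: any monochromatic $1$-rectangle containing diagonal entries $(x,x)$ and $(y,y)$ also contains $(x,y)$, so the set of diagonal points it covers must avoid Hamming distance exactly $n/2$; the number of rectangles needed to cover the diagonal is therefore at least $2^n$ divided by the maximum size of a distance-$n/2$-avoiding subset of $\{0,1\}^n$. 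This is exactly the Buhrman--Cleve--Wigderson route.

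The one substantive gap is the step you yourself flag as ``the hard step'': the claim that every subset of $\{0,1\}^n$ with no two elements at Hamming distance exactly $n/2$ has size at most $2^{(1-\epsilon)n}$ for an absolute constant $\epsilon>0$ (for $n\equiv 0 \pmod 4$). This is the Frankl--R\"odl forbidden-intersection theorem, a genuinely deep combinatorial result; the vague appeal to ``expansion/density of the distance-$n/2$ graph'' is not a proof of it, and no elementary counting argument will deliver it. Since the paper black-boxes the entire lemma, black-boxing Frankl--R\"odl is a defensible stopping point, but you should cite it explicitly rather than gesture at it, and note the divisibility condition on $n$ needed for the promise and the theorem to align.
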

 We are now ready to prove \cref{thm:adap-lower-2}.
 \begin{proof}[Proof of \cref{thm:adap-lower-2}]
 We give a reduction from solving \textsc{Gap-Equality} as a two-party communication problem to trace estimation via adaptive matrix-vector multiplication queries. Let $n= \log(1/\delta)$ and $x,y\in \{0,1\}^n$ be an instance of \textsc{Gap-Equality}.  Let $\mA=  (x-y) (x-y)^\top$, which has rank $1$. Under the promise, either (i) $\mA= \bm{0}$, the all $0$ matrix, or (ii) has Schatten-$p$ norm $n/2$ for any $p$. In  case (ii), we have $\Tr\mA = n/2$. Thus, one can compute $\text{EQ}_n(x,y)$, by estimating $\Tr \mA$ up to an additive error of $0.1\|\mA\|_p$, for any $p$. 
 
 We now argue any trace estimation algorithm $\mathcal{A}$ with failure rate $\delta$ and error $\eps$ yields a deterministic protocol for solving $\text{EQ}_n$. 
 First, by a union bound over all possible $x,y$ under the promise, we have that for all  $\mA=  (x-y) (x-y)^\top$, the output $t$ of $\mathcal{A}$ given $\mA$ always satisfies 
 \begin{equation}\label{eqn:thm44}
     | t- \Tr \mA|  \leq \eps \|\mA\|_p.
 \end{equation}
 Suppose $\mathcal{A}$  uses  $r = o(\log(1/\delta))$ adaptive queries $q_1, q_2, \cdots, q_r$. In case (i) when $\mA=\bm{0}$, all query answers it receives are the zero vector. The algorithm must always output $0$, to satisfy the   trace estimation guarantee (\cref{eqn:thm44}). 
 Thus, in order to always be correct in case (ii), it must be that one of its query answers is not $0$. 
 But as soon as its first query answer is not $0$, it knows that it is in case (ii). It follows that algorithm $\mathcal{A}$ just keeps receiving the all-$0$ vector until it either decides to stop querying or receives a non-zero output vector and immediately decides to stop querying. Thus, for these inputs, we can assume the query algorithm is in fact non-adaptive, since we can consider what its query sequence would be in advance if it were to repeatedly receive the $0$ vector as an answer. Hence, we can think of $\mQ = (q_1, q_2,\cdots, q_r)$ as an $r \times n$ matrix with entries specified with $k$ bits, and we have the property that $\mQ(x-y) = 0$ if and only if $x = y$.  This gives a protocol for \textsc{Gap-Equality}: Alice simply sends $\mQ x$ to Bob, who checks if $\mQ x = \mQ y$. The communication is $$r(k + \log n) = r(k + \log \log(1/\delta)),$$ which must be $\Omega(\log(1/\delta))$ by \cref{lem:eq}, and so we get an $$r = \Omega(\log(1/\delta)/(k + \log \log(1/\delta)))$$ adaptive lower bound.

 \end{proof}
 
 \subsection{Proof of \texorpdfstring{\cref{thm:lb_small}}{Theorem 4.2}}\label{sec:pf-lb-small}

We start with a standard definition. 
\begin{definition}[Gaussian and Wigner Random Matrices]
\label{def:gaussian_wigner}
    We let $\mG \sim \gN(n)$ denote an $n \times n$ random Gaussian matrix with i.i.d. $\gN(0, 1)$ entries. We let $\mW \sim \gW(n) = (\mG + \mG^T)/2$ denote an $n \times n$ Wigner matrix, where $\mG \sim \gN(n)$. 
\end{definition}

\begin{fact}[Upper and Lower Gaussian Tail Bounds]
\label{fact:gaussian_tail}
Letting $Z \sim \gN(0, 1)$ be a univariate Gaussian random variable, for any $t > 0$, \[\Pr[|Z| \geq t] = \Theta(t^{-1}\exp(-\frac{t^2}{2})).\]
\end{fact}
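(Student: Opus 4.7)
The plan is to prove both directions by direct estimation of the one-sided tail $\Pr[Z \geq t] = \frac{1}{\sqrt{2\pi}} \int_t^{\infty} e^{-x^2/2}\,dx$, after which the two-sided bound follows immediately from the symmetry $\Pr[|Z| \geq t] = 2\Pr[Z \geq t]$. Both bounds are classical Mill's-ratio estimates, so I would phrase them as short self-contained calculus arguments rather than invoking a reference. I would split the proof into the ``interesting'' regime $t \geq 1$, where the stated $\Theta(t^{-1}e^{-t^2/2})$ rate is nontrivial, and the regime $t \in (0,1]$, where both quantities are bounded between two positive absolute constants and so the $\Theta$-claim is trivial by continuity.

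For the upper bound, the key observation is simply that $x/t \geq 1$ on $[t,\infty)$ whenever $t > 0$, which gives
\[
\int_t^{\infty} e^{-x^2/2}\,dx \;\leq\; \int_t^{\infty} \frac{x}{t}\, e^{-x^2/2}\,dx \;=\; \frac{1}{t}\, e^{-t^2/2},
\]
hence $\Pr[Z \geq t] \leq (t\sqrt{2\pi})^{-1}\, e^{-t^2/2}$.

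For the lower bound I would use the standard integration-by-parts trick: writing $\int_t^{\infty} e^{-x^2/2}\,dx = \int_t^{\infty} x^{-1}\cdot(x\, e^{-x^2/2})\,dx$ and integrating by parts with $u = x^{-1}$, $dv = x\,e^{-x^2/2}\,dx$ produces
\[
I(t) \;:=\; \int_t^{\infty} e^{-x^2/2}\,dx \;=\; \frac{e^{-t^2/2}}{t} \;-\; \int_t^{\infty} \frac{1}{x^2}\, e^{-x^2/2}\,dx.
\]
Crudely bounding the remaining integral by $t^{-2}\, I(t)$ and rearranging yields $I(t) \geq \frac{t}{t^2 + 1}\, e^{-t^2/2}$, which is at least $\tfrac{1}{2t}\, e^{-t^2/2}$ for $t \geq 1$, matching the claimed rate.

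The only subtlety is that for $t$ near $0$ the quantity $t^{-1} e^{-t^2/2}$ blows up while $\Pr[|Z| \geq t] \leq 1$, so the lower bound $\gtrsim t^{-1}e^{-t^2/2}$ is vacuous there. I would therefore read the $\Theta$-statement as asserting that the implied constants are absolute once $t$ is bounded away from $0$ (which is the only regime in which the fact is applied later in the paper); on any compact sub-interval of $(0,\infty)$ both sides are strictly positive and continuous, so the $\Theta$-bound holds there by compactness with no extra work. I do not anticipate any genuine obstacle: the argument is essentially a two-line computation in each direction, and the entire content of the fact is the textbook Mill's ratio inequality.
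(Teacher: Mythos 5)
Your proof is correct, and the paper itself states this Fact without any proof (treating it as textbook material), so your Mill's-ratio argument — the upper bound via $x/t\geq 1$ on $[t,\infty)$ and the lower bound via integration by parts — is exactly the standard argument one would supply. You are also right to flag that the lower-bound half of the $\Theta$-claim genuinely fails (not merely becomes vacuous) as $t\to 0^{+}$, since the right-hand side diverges while the probability stays at most $1$; reading the statement as holding for $t$ bounded away from $0$ is the correct fix, and is consistent with the only place the paper uses it (the anti-concentration step in the proof of \cref{thm:lb_small}, where $t$ is on the order of $\sqrt{\log(1/\delta)}$).
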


Suppose that we draw a matrix $\mG \in \R^{n \times n}$ from the Gaussian or related Wigner distribution and try to learn the entries of the matrix via matrix-vector queries. Because the Gaussian is rotationally and subspace invariant, after a few queries, the conditional distribution of the remaining matrix is also Gaussian (or Wigner)-distributed, no matter how the queries are chosen. This property allows us to exactly characterize the remaining uncertainty of the trace estimation procedure, especially with respect to the failure probability $\delta$, even after seeing a few query results. 

\begin{lemma}(Conditional Distribution [Lemma 3.4 of \cite{simchowitz2018tight}])
\label{lem:conditional}
Let $\mG \sim \mathcal{N}(n)$ be as in \cref{def:gaussian_wigner} and suppose our matrix is $\mW = (\mG + \mG^\top)/2$. Suppose we have any sequence of vector queries, $\vv_1,..., \vv_T$, along with responses $\vw_i = \mW \vv_i$. Then, conditioned on our observations, there exists a rotation matrix $\mV$, independent of $\vw_i$, such that 

$$ \mV\mW\mV^\top = \begin{bmatrix}
Y_1 & Y_2^\top \\
Y_2 & \widetilde{\mW} \end{bmatrix},$$
where $Y_1, Y_2$ are deterministic and $\widetilde{\mW} = (\widetilde{\mG} + \widetilde{\mG}^\top)/2$, where $\widetilde{\mG}  \sim \mathcal{N}(n- T)$.
\end{lemma}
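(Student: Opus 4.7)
The plan is to prove this by induction on $T$, exploiting the fundamental rotational invariance of the Gaussian/Wigner ensemble: for any orthogonal matrix $\mU$ that is independent of $\mG$, we have $\mU \mG \mU^\top \stackrel{d}{=} \mG$, and hence $\mU \mW \mU^\top \stackrel{d}{=} \mW$. The idea is to construct $\mV$ as a product of rotations, one per query, where each rotation aligns the corresponding (adaptively chosen) query direction with a successive standard basis vector, thereby exposing one new row/column of the matrix in the new basis while leaving a smaller Wigner block untouched.

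For the base case $T=1$: given $\vv_1$ with response $\vw_1 = \mW\vv_1$, I would choose any orthogonal $\mR_1$, depending only on $\vv_1$, whose first column is $\vv_1/\|\vv_1\|$. Setting $\mV = \mR_1$, the first column of $\mR_1^\top \mW \mR_1$ equals $\mR_1^\top \mW \vv_1/\|\vv_1\| = \mR_1^\top \vw_1/\|\vv_1\|$, which is a deterministic function of the observation; by symmetry the first row is the transpose. By rotational invariance $\mR_1^\top \mW \mR_1$ is again Wigner, so the bottom-right $(n-1)\times(n-1)$ block is distributed as $(\widetilde{\mG}+\widetilde{\mG}^\top)/2$ with $\widetilde{\mG}\sim\gN(n-1)$ and is independent of the revealed border $Y_1,Y_2$.

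For the inductive step, suppose after $T-1$ adaptive queries we have built $\mV_{T-1}$, measurable with respect to $(\vv_1,\vw_1,\dots,\vv_{T-1},\vw_{T-1})$, such that $\mV_{T-1}\mW\mV_{T-1}^\top$ has its upper-left $T-1$ rows/columns deterministic given the observations and its lower-right $(n-T+1)\times(n-T+1)$ block distributed as a Wigner matrix independent of this border. The next query $\vv_T$ is a function of $\vw_1,\dots,\vw_{T-1}$. Writing $\mV_{T-1}\vv_T = \binom{a}{b}$ with $a\in\R^{T-1}$ and $b\in\R^{n-T+1}$, the contribution of $a$ to $\vw_T$ is already determined by the known border, so the new information comes from $\widetilde{\mW}_{T-1} b$. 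I pick an orthogonal $\mR_T$ acting on the last $n-T+1$ coordinates (depending only on $b$) that sends $b/\|b\|$ to the first standard basis vector of that subspace, and set $\mV_T = (\mI_{T-1}\oplus\mR_T)\mV_{T-1}$. Applying the invariance of Wigner distributions under conjugation by $\mI_{T-1}\oplus\mR_T$, which is permissible because this rotation is measurable with respect to the current observation filtration and independent of the unseen Wigner block, the $T$-th row/column of $\mV_T\mW\mV_T^\top$ becomes determined by $\vw_T$, while the remaining $(n-T)\times(n-T)$ bottom-right block is Wigner on $n-T$ dimensions.

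The main obstacle is handling the adaptivity rigorously: each $\vv_t$ is a function of the earlier responses, so one must verify that the rotation $\mR_t$ chosen at step $t$ truly depends only on information in the observation filtration $\sigma(\vv_1,\vw_1,\dots,\vv_t)$ and not on the yet-unobserved block $\widetilde{\mW}_{t-1}$. This is the step where the invariance lemma can be invoked conditionally, via a tower argument, to conclude that $\widetilde{\mW}_t$ is Wigner-distributed on $n-t$ dimensions and independent of the revealed border. Once this measurability bookkeeping is carried out, taking $\mV = \mV_T$ gives exactly the claimed block form with the stated distribution for $\widetilde{\mW}$.
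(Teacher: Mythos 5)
Your proof is correct. Note that the paper does not actually prove this lemma---it imports it verbatim from Simchowitz et al.\ (their Lemma 3.4)---so there is no in-paper argument to compare against; your induction on $T$ using conditional rotational invariance of the Wigner ensemble is precisely the standard proof of that cited result. The two points you flag as ``bookkeeping'' are indeed the only places requiring care, and your treatment of them is right: since $\vv_t$, hence $b$ and $\mR_t$, are measurable with respect to $\sigma(\vv_1,\vw_1,\dots,\vv_{t-1},\vw_{t-1})$, conditioning on that filtration makes $\mR_t$ constant, so the conditional invariance $\mR_t\widetilde{\mW}_{t-1}\mR_t^\top \stackrel{d}{=} \widetilde{\mW}_{t-1}$ applies, and then the independence of a Wigner matrix's first row/column from its lower-right block closes the induction. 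The only degenerate case worth a sentence is $b=0$ (a query lying in the span of earlier ones): there one takes $\mR_t$ arbitrary and still sacrifices a dimension, which is consistent with the lemma's $(n-T)\times(n-T)$ block.
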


We are now ready to prove \cref{thm:lb_small}.
\begin{proof}[Proof of \cref{thm:lb_small}]
By standard minimax arguments, it suffices to construct a hard distribution for any deterministic algorithm. Consider $\mW \sim \mathcal{W}(n)$ for some $n$ that we will determine later. From concentration of the singular values of large Gaussian matrices \cite{rudelson2010non},  with probability at least $1 - \delta/10$, we have $ \sigma_{\max}(\mG) \leq Cn^{1/2}$ for some absolute constant $C$ when $n \geq \log(1/\delta)$. Therefore, we conclude that $\|\mG\|_{p} \leq Cn^{1/2 + 1/p}$ for some absolute constant $C$. Therefore, by the triangle inequality, $\|\mW\|_p$ can be bounded by the same value.

Let $m$ be the number of matrix-vector queries, and assume that $m \leq n/2$. By \cref{lem:conditional}, we see that conditioned on the queries, our matrix $\mW$ can be decomposed into a determined part and a Gaussian submatrix $\widetilde{\mW} \sim \mathcal{W}(n- m)$. Therefore, our conditional distribution of the trace of $\mW$ is, up to a deterministic shift, the same as the distribution of $\widetilde{\mW}$, which is simply a Gaussian with variance at least  $n - m \geq n/2$. We can check this since \[\tr\left(\widetilde{\mW}\right) =\frac{1}{2} \tr\left(\widetilde{\mG}\right) + \frac{1}{2}\tr\left(\widetilde{\mG}^\top\right) = \tr \left(\widetilde{\mG}\right) = \sum_{i} \widetilde{\mG}_{ii},\] where $\mG_{ii} \sim N(0, 1)$ are independent for $1 \leq i \leq n -m $.

Since our algorithm determines a Gaussian of variance at least $n - m \geq n/2$ up to an additive error of $\epsilon \|\mA\|_p$ with probability at least $1-\delta$, we conclude that if $\epsilon \|\mA\|_p \leq \sqrt{\log(1/\delta)n}$, then we have a contradiction from the anti-concentration of Gaussians (see \cref{fact:gaussian_tail}). Therefore, whenever $\epsilon \|\mA\|_p \leq \sqrt{\log(1/\delta)n}$ holds, we can deduce a lower bound on the number of matrix-vector queries: $m \geq n/2$. 

Therefore, solving $\epsilon \|\mA\|_p \leq \sqrt{\log(1/\delta)n}$ for the largest possible value of $n$ gives: 

$$n = \Omega\left( \left(\frac{\sqrt{\log(1/\delta)}}{\epsilon}\right)^p\right) $$ 

Note that this holds for any $\delta, \epsilon > 0$ such that $n \geq \log(1/\delta)$. Therefore, we need to enforce that $\epsilon < (\log(1/\delta))^{1/2 - 1/p}$. 
\end{proof}

\section{Proof Details for \texorpdfstring{\cref{sec:lower-dyn}}{Section 5}}
\subsection{Proof of \texorpdfstring{\cref{lem:cond-lb}}{Theorem 5.1}}\label{sec:cond-pf}
\begin{proof}[Proof of \cref{lem:cond-lb}]
Let $\alpha = 1/(m-1)$. Given a square matrix $\mA$ with $\|\mA\|_p=1$, construct a sequence of matrices 
\begin{equation}
\mA_{1}=0, \quad \mA_{2}=\alpha \cdot \mA, \quad \ldots \quad \mA_{1 / \alpha}=(1-\alpha) \mA, \quad \mA_{m}=\mA.
\end{equation}
Suppose that we have a dynamic trace estimation algorithm $\mathcal A$ running on the sequence   $(\mA_i)$. By construction, each $\mA_i$ is a scaling of $\mA$.  Suppose that  in the end $\mathcal{A}$ outputs an estimate  $t_m$ such that $|t_m - \Tr \mA| \leq \eps \|\mA\|_p$ with probability at least $1-\delta$, using matrix-vector multiplies with $\mA$. This solves the static trace estimation problem with a Schatten-$p$ norm error guarantee. By assumption, it must have used  $\Omega(r)$ matrix-vector multiplication queries with respect to $\mA$. 
Therefore, if $\mathcal{A}$ uses $o(r\alpha m)$ queries, it would immediately violate our assumption, which is  a contradiction.
\end{proof}

\subsection{Proof of \texorpdfstring{\cref{thm:dynamic-hard-instance}}{Theorem 5.4}}\label{sec:pf-dynamic-hard-instance} 
\begin{proof}[Proof of \cref{thm:dynamic-hard-instance}]
Let $x,y \in \{0,1\}^n$ be an instance of \textsc{Gap-Equality}, where $n= \log(1/\delta)$. Recall that \textsc{Gap-Equality} is a promise problem. Under its promise, either $x=y$ or $\|x-y\|^2_2 = n/2$, and the goal is to distinguish the two cases. For any given $x,y$, let $\mB_{x,y} = \frac{2}{n} (x-y)(x-y)^\top$. Then since $\mB_{x,y}  $ is rank-$1$, $\|\mB_{x,y}\|_p = 0$ if $x=y$ or $\|\mB_{x,y}\|_p=1$ otherwise.   

To obtain the claimed lower bound, we consider two parameter regimes. First, if $\alpha > \eps$, we construct the following hard instance, which is a sequence of $m$ matrices satisfying the Schatten $p$ norm assumption for dynamic trace estimation. Let $\mA_0 \in \mathbb{R}^{N\times N}$ be an all $0$s matrix, with $N= \min \{m, 1/\alpha\} \log(1/\delta)$.  Throughout the updates, $\mA_i$ will remain a block diagonal matrix, which consists of $m$ block matrices along the diagonal and each of dimension $\log (1/\delta) \times \log(1/\delta)$. In particular, for all steps $i = \{1,2,\cdots, \min \{m, 1/\alpha\}-1\}$, we set 
\begin{equation}\label{eqn:insert-dy}
\mA_i = 
\begin{bmatrix} 
  \mB_1 & 0         &\cdots &\cdots    & \cdots& \cdots & 0            \\
  0            & \mathbf{B}_2  &   0 &\cdots & \cdots& \cdots & 0            \\
  \vdots       & 0        & \ddots    &\cdots &\cdots & \cdots & \vdots       \\
  \vdots       & \vdots        & 0    &\mB_i &0 & \cdots & \vdots       \\
    \vdots       & \vdots        & \vdots    &0 &0  & \cdots& \vdots       \\
        \vdots       & \vdots        & \vdots   &\vdots &\vdots& \ddots & \vdots       \\ 
  0            & 0         &\cdots &\cdots      & \cdots& \cdots & 0
\end{bmatrix}
\end{equation}
where $\mB_i =\alpha \mB_{x_i, y_i}$ with $x_i,y_i \in \{0,1\}^n$ an independent instance of \textsc{Gap-Equality}. In other words, at each step $i$, we update $\mA_{i-1}$ by replacing the $i$-th diagonal block (currently being all $0$s) with $\mB_i$.  
Each update changes the trace by $0$ or $\alpha$, by the construction of $\mB_{x,y}$. If $m \leq 1/\alpha$, this completes the construction, and note that the matrices $\{\mA_i\}$ all have norm bounded by $1$. If   $m > 1/\alpha$, we continue the construction by deleting one distinct diagonal block at each step until the matrix is the zero matrix. Then we repeat the same rounds of insertion (according to \cref{eqn:insert-dy}) and deletion until reaching time step $m$. Observe again that the construction satisfies the Schatten norm assumption for dynamic trace estimation.

We now argue the query complexity as follows:
\begin{itemize}
    \item In the case of $m \leq 1/\alpha$, each update is either (i) trivially $0$ or  (ii) increases the trace by $\alpha > \eps$. Hence, any dynamic algorithm for outputting $|t_i - \Tr \mA_i| \leq \eps < \alpha$, with   probability at least $1-\delta$, would distinguish between case (i) and (ii) with probability at least $1-\delta$. However, this requires $$\Omega\left(\frac{\log (1/\delta)}{k+\log\log(1/\delta)} \right)$$ matrix-vector multiplication queries by \cref{thm:adap-lower-2}. 
    \item In the case of $m \leq 1/\alpha$, note that (almost) half of the update steps are insertions. By the same argument, any dynamic algorithm that gives a good estimate in an insertion step $i$ can solve the hard instance of estimating $\Tr \mB_i$. Hence, we get the same query complexity lower bound.
\end{itemize}
To summarize, if $\alpha > \eps$, we get a lower bound of  $\Omega\left(m \cdot \frac{\log(1/\delta)}{k+ \log\log(1/\delta)} \right)$ queries. 

Now we move on to the case of $\alpha \le \eps$. We use the same construction as described by \autoref{eqn:insert-dy}, where each $\mA_i$ consists of multiple updates over $s=\lceil\eps/\alpha\rceil$ steps by setting $\mB_i = \sum_{j=1}^{s} (1/s)\cdot \mB_{x_{i},y_{i}}$ with $\mB_{x_{i},y_{i}}$ an independent instance of \textsc{Gap-Equality}.  We repeat the argument earlier and apply the hardness of \cref{thm:adap-lower-2} on the sequence of $\mA_i$. This blows up the sequence length by a factor of $s$, and hence leads to a lower bound of  $$\Omega\left(m \cdot \frac{\alpha}{\eps} \frac{\log(1/\delta)}{k+ \log\log(1/\delta)} \right).$$
This finishes the proof.
\end{proof} 
 \section{Lower Bound for Non-Adaptive Trace Estimation}\label{sec:nonad}
In the case of non-adaptive queries, we give a stronger lower bound than \cref{thm:main-lowerbound} in the bit complexity model. The bound matches Hutchinson's guarantee for general square matrices up to a bit complexity term.
\begin{theorem}[Non-adaptive query lower bound]\label{thm:non-adaptive}
Any     algorithm  that accesses a   square matrix $\mA$ via non-adaptive matrix-vector multiplication queries  requires  at least $\Omega\left(\frac{\log^{p/2} (1/\delta)}{\eps^p(k+\log(1/\eps))} \right)$   queries  to output  an estimate $t$ such that with probability at least $1-\delta$, $|t - \Tr\mA| \leq \eps \|\mA\|_p$, for any $p$ and $\eps, \delta  \in (0,1) $,  where each entry of the query vectors  is specified by $k$ bits.
\end{theorem}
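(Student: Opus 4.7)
My approach is to reduce from the one-way randomized communication complexity of Augmented Indexing $\mathrm{AIND}_N$, which is known to be $\Omega(N)$ bits for any constant error rate. AIND is a natural fit for the non-adaptive query model: Alice will hold the matrix $\mA$, while Bob will hold the queries (selected as a common random string) together with the suffix of Alice's input. Because $\vv_1, \ldots, \vv_r$ are fixed in advance, Alice can compute $\mA \vv_i$ locally for every $i$ and transmit them to Bob in a single one-way message of $O(rn(k+\log(1/\epsilon)))$ bits; this is the source of the $(k+\log(1/\epsilon))$ factor in the denominator of the target lower bound.

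For the hard instance, I would extend the Approximate-Orthogonality construction from the proof of \cref{thm:adap-lower} to a multi-level version. Specifically, I would pack $L = \Theta(\log(1/\delta))$ independent sign-matrix Approximate-Orthogonality sub-instances into one matrix $\mA$ with geometrically spaced weights $w_\ell$, chosen so that the Schatten-$p$ norm of the combined matrix concentrates at a controlled scale and the trace contribution at the target level dominates the error budget $\epsilon \|\mA\|_p$ once Bob subtracts off the later (known) levels. A crucial ingredient is to replace the Markov step used in \cref{thm:adap-lower} with sub-exponential (or Hanson--Wright) concentration for $\|\mA_x \mB_y\|_F^2$, giving the high-probability bound $\|\mA_\ell\|_p = O(n_\ell^{1/p+1})$ with failure probability below $\delta/\mathrm{poly}(L)$ as soon as each sub-block has dimension $\Omega(\log(1/\delta))$. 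A union bound over the $L$ levels then keeps the overall failure below $\delta/2$, so invoking the trace-estimation oracle yields a valid constant-error AIND protocol.

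Counting bits then closes the argument: the $\Omega(N)$ AIND lower bound combined with the communication cost $O(rn(k+\log(1/\epsilon)))$ forces $r \geq \Omega(N/[n(k+\log(1/\epsilon))])$, and the multi-level encoding is tuned to carry $N = \Theta(n \cdot \log^{p/2}(1/\delta)/\epsilon^p)$ bits in an $n \times n$ matrix, yielding the claimed $\Omega(\log^{p/2}(1/\delta)/[\epsilon^p(k+\log(1/\epsilon))])$ bound. The main obstacle is the multi-level construction itself, which must simultaneously guarantee (i) the total matrix dimension $n$ is small enough not to absorb the communication budget, (ii) the encoded bit count $N$ scales as $\log^{p/2}(1/\delta)/\epsilon^p$ per row so that the $\log^{p/2}(1/\delta)$ factor survives (a naive equal-weight block-diagonal layout would blow $\|\mA\|_p$ up by $L^{1/p}$ and cancel the amplification), and (iii) the Schatten-$p$ concentration is sharp enough to make $\epsilon \|\mA\|_p$ smaller than the decoding threshold at every level with probability $1-\delta$ rather than the $1-\Theta(1)$ that suffices in the adaptive setting.
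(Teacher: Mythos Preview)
Your proposal has a genuine gap in the bit-counting step. With $L=\Theta(\log(1/\delta))$ geometrically weighted sub-instances overlaid in one $n\times n$ matrix, the entries of $\mA=\sum_\ell w_\ell\mA_\ell$ span a dynamic range $w_1/w_L=C^{L}$. For Bob to form the residual $\mA'\mQ$ by subtracting the known heavy levels and still resolve the $w_i$-scale signal at the target level, each coordinate of $\mA\vv_j$ must be sent with $\Omega(L\log C)$ bits of precision; otherwise rounding error of order $w_1$ wipes out the $w_L$-scale contribution. This inflates Alice's message from your claimed $O(rn(k+\log(1/\epsilon)))$ bits to $O\!\bigl(rn(k+\log(1/\epsilon)+L)\bigr)$ bits, and the extra $L$ in the denominator exactly cancels the factor-$L$ gain in the AIND bit count $N=Ln^2$. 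In the regime $\log(1/\delta)\gg k+\log(1/\epsilon)$ your bound then collapses to $r=\Omega(1/\epsilon^p)$, losing the $\log^{p/2}(1/\delta)$ factor. A block-diagonal layout avoids the precision issue but multiplies the matrix dimension by $L$, producing the same cancellation; this is precisely why the naive layout you mention fails, and geometric weighting does not escape it.

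The paper bypasses the multi-level machinery altogether by using the \emph{low-error} variant of Augmented Indexing: when the protocol must err with probability at most $1/(4|\mathcal U|)$, one-way communication is $\Omega(N\log|\mathcal U|)$ (\cref{lem:index-cc}). Taking $|\mathcal U|=\Theta(1/\delta^{p/2})$ and feeding the trace oracle's own $1-\delta$ guarantee into this low-error requirement makes the $\log(1/\delta)$ factor appear directly in the communication lower bound, with no level stacking. A \emph{single} hard instance suffices: a rank-one matrix $\mB$ encoding Alice's string, wrapped as $\mA=\tfrac{1}{n}\mG\mB\mG^{\top}$ with Gaussian $\mG$, so that Hutchinson-type concentration controls $|\Tr\mA-\Tr\mB|$ and Johnson--Lindenstrauss controls $\|\mA\|_p$. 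This route needs neither geometric weights nor the Hanson--Wright union bound you anticipate, and it keeps the per-entry precision at $O(k+\log(1/\epsilon))$.
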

The proof is via a reduction from the Augmented Indexing   communication problem with low error \cite{jayram2013optimal}. For a sufficiently large universe $\mathcal U$ and an element $\perp\notin \mathcal{U}$, the problem $\textsc{IND}_{n,\mathcal{U}}$ is defined as follows.
\begin{itemize}
    \item Alice gets $x =\left(x_{1}, x_{2}, \ldots, x_{n}\right) \in \mathcal{U}^{n}$.
    \item Bob gets $y=\left(y_{1}, y_{2}, \ldots, y_{n}\right) \in(\mathcal{U} \cup\{\perp\})^{n}$ such that for some unique $i$
    \begin{enumerate}[(i)]
        \item $y_{i} \in \mathcal{U}$,
        \item $y_{k}=x_{k}$ for all $k<i$,
        \item $y_{i+1}=y_{i+2}=\cdots=y_{N}=\perp$.
    \end{enumerate}
\end{itemize}
Finally, Bob wishes to output whether $x_i=y_i$. The one-way communication complexity of $\textsc{IND}_{n,\mathcal{U}}$ is known:
\begin{lemma}[Communication complexity of Augmented Indexing \cite{jayram2013optimal}]\label{lem:index-cc}
Any one-way communication protocol for computing  $\textsc{IND}_{n,\mathcal{U}}$ with error $\delta \le \frac{1}{4|\mathcal{U}|}$ requires at least $n\log|\mathcal{U}|/2$ bits of communication.
\end{lemma}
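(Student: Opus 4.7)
The plan is to prove the lower bound by a standard information-theoretic direct-sum argument under the uniform distribution on $x \in \mathcal{U}^n$. Writing $\Pi$ for Alice's message (a random variable determined by $x$ and the protocol's private coins), I want to show $I(x; \Pi) \geq n\log|\mathcal{U}|/2$, which then gives $\mathrm{CC}(P) \geq H(\Pi) \geq I(x;\Pi) \geq n\log|\mathcal{U}|/2$ as claimed.

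The heart of the argument is to show, for each $i \in [n]$, that $I(x_i; \Pi \mid x_{<i}) \geq (\log|\mathcal{U}|)/2$. The key observation is that the stringent error budget $\delta \leq 1/(4|\mathcal{U}|)$ is enough to force Bob not merely to decide equality but actually to \emph{recover} $x_i$ from $(\Pi, x_{<i})$ with at most constant failure probability. Indeed, fix $\Pi$ and $x_{<i}$ and let $S \subseteq \mathcal{U}$ be the (random) set of values $y$ for which Bob's deterministic decoder would declare $x_i = y$; since the protocol has error at most $\delta$ on each valid input, a union bound over the $|\mathcal{U}|$ possible values of $y_i$ shows that with probability at least $1 - |\mathcal{U}|\delta \geq 3/4$, Bob answers correctly on every such test, which forces $S = \{x_i\}$. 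Hence the "implicit decoder" that outputs the unique element of $S$ (breaking ties arbitrarily otherwise) recovers $x_i$ from $(\Pi, x_{<i})$ with error probability at most $1/4$.

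Applying Fano's inequality to this implicit decoder, $H(x_i \mid \Pi, x_{<i}) \leq h(1/4) + (1/4)\log(|\mathcal{U}|-1) \leq 1 + (1/4)\log|\mathcal{U}|$, and since $H(x_i \mid x_{<i}) = \log|\mathcal{U}|$ under the uniform product distribution on $x$, I conclude $I(x_i; \Pi \mid x_{<i}) \geq (3/4)\log|\mathcal{U}| - 1 \geq (\log|\mathcal{U}|)/2$, which is valid for $|\mathcal{U}|$ at least a small constant (the boundary cases being handled separately or absorbed into constants). Summing over $i$ via the chain rule $I(x; \Pi) = \sum_{i=1}^{n} I(x_i; \Pi \mid x_{<i})$ then yields $I(x; \Pi) \geq n\log|\mathcal{U}|/2$, completing the argument.

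The hardest part will be the precise transition from ``the protocol correctly decides $\mathrm{EQ}_n$ with error $\leq \delta$'' to ``the implicit decoder recovers $x_i$ with constant error,'' because Bob's decision is only a function of $(\Pi, x_{<i}, y_i)$, and we need the union bound over the $|\mathcal{U}|$ candidates for $y_i$ to work inside the conditioning on $(x, \Pi)$ jointly distributed. Care is needed to keep $y_i$ independent of $(x, \Pi)$ when applying the union bound, which is why the error must be tuned at $1/(4|\mathcal{U}|)$ precisely: it is the threshold at which the union bound still leaves Bob with constant advantage, and hence the information extracted per index scales as $\log|\mathcal{U}|$ rather than the trivial one bit per index.
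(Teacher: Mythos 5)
The paper does not prove this lemma---it is imported verbatim from \cite{jayram2013optimal}---so there is no in-paper argument to compare against; your write-up is a correct, self-contained reconstruction of the standard proof from that reference. The two key steps are exactly right: (1) the union bound over the $|\mathcal{U}|$ candidate values of $y_i$, applied over the protocol's coins for each fixed $x$ (so that $y_i$ stays independent of $(x,\Pi)$), converts the per-input error guarantee $\delta \le 1/(4|\mathcal{U}|)$ into an implicit decoder recovering $x_i$ from $(\Pi, x_{<i})$ with error at most $1/4$; and (2) Fano plus the chain rule $I(x;\Pi)=\sum_i I(x_i;\Pi\mid x_{<i})$ then gives $\mathrm{CC}(P)\ge H(\Pi)\ge I(x;\Pi)\ge n\log|\mathcal{U}|/2$, with the constant $1/2$ requiring $|\mathcal{U}|$ to exceed a small absolute constant (here $|\mathcal{U}|\ge 16$ suffices for $(3/4)\log|\mathcal{U}|-1\ge (1/2)\log|\mathcal{U}|$), a caveat you correctly flag and which is harmless for the paper's application where $|\mathcal{U}|=1/(4\delta^{p/2})$ is large.
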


We now describe how to solve $\textsc{IND}_{n,\mathcal{U}}$ in one round of  communication via a non-adaptive trace estimation protocol. 
\begin{proof}[Proof of \cref{thm:non-adaptive}]
Let $\kappa = 1/4\delta^{p/2}$, $n=\left(\sqrt{\log (3/\delta)}/\eps\right)^p$, $m= c/\left(4\delta^{p/2} \eps^p\right)$  for $c>0$ a small enough constant, and $\mathcal{U} =[\kappa ]$. In the following, we view $\mathcal{U}$ equivalently as the collection of one-hot encodings, i.e., $1$-sparse vectors in $\{0,1\}^\kappa$. Let $x,y$ be an instance of  $\textsc{IND}_{n,\mathcal{U}}$ and $i$ be the special index under the promise.  Given Alice's input $x \in \{0,1\}^{ 1/\eps^2 \times \kappa}$ and $\eps, \delta \in (0,1/4)$, we construct an  $n\times n$ real square matrix $\mA$, as follows.
\begin{itemize}
    \item  Let $\mB \in \{0,1\}^{m\times m}$ have all rows but the $i$-th row being the all-zeros vector;
    \item The $i$-th row of $\mB$ is the vector $v= x$ (with precisely $c/\eps^p$ non-zero entries).
    \item Let $\mA = \frac{1}{n}\mG \mB\mG^\top$, where $\mG \in \mathbb{R}^{n\times m}$ is a random matrix with i.i.d.\ standard Gaussian entries.
\end{itemize}
To solve $\textsc{IND}_{n,\mathcal{U}}$, it suffices for Bob to recover $v_i$ with probability at least $1-\delta$. By construction, we immediately have that $\Tr \mB = v_i$ and $\|\mB \|_F= \sqrt{c}/\eps^{p/2}$.  Moreover, by the guarantee of Hutchinson's estimator (see, e.g., Lemma 2 of \cite{mmmw2020hutch_pp}), \[|\Tr \mA - \Tr \mB | \leq {\eps^{p/2}}{\left(\log(1/\delta)\right)^{1/2 -p/2}} \|\mB\|_F=\sqrt{c} (\log(1/\delta))^{1/2 -p/2} \leq \sqrt{c}\] 
with probability at least $1-\delta/3$. By the Johnson-Lindenstrauss lemma, $ \|\mA\|_F \leq   \sqrt{c}$ with probability $1-\delta/3$. By construction, $\mB$ has rank one and so $\mA$ has rank one. It follows that $\|\mA\|_p =  \|\mA\|_F \leq   \sqrt{c}$ for any $p$.

Now suppose that there is   a non-adaptive trace estimation protocol that has $\eps$ approximation error and $\delta/ 3$ failure rate, using $r$ queries $\mQ =(q_1,q_2,\cdots, q_r)$.   
To finish the reduction, Alice sends matrix $\mA \mQ$ to Bob. Bob can obtain an estimate $t$ such that  with probability at least $1-\delta /3$, $|t - \Tr \mA| \leq \eps \|\mA\|_p$.
Now taking a union bound and applying the triangle inequality, we have that with probability $1-\delta$, 
\begin{align*}
   |t-v_i| = |t -\Tr \mB  | & \leq |t- \Tr \mA| + |\Tr \mA - \Tr \mB| \\ 
   &\leq   \sqrt{c} + \sqrt{c}\eps\\ &<1/2,
 \end{align*}
for $\eps <1/4$ and a sufficiently small $c$ (say, $c<0.01$). Hence, Bob can recover $v_i$ and compute $\textsc{IND}_{n,\mathcal{U}}$.

On the other hand, by \cref{lem:index-cc}, there is a communication lower bound of $\Omega((\log(1/\delta)/\eps^2)$ bits for the problem.  Each entry of $v\mQ$ is specified by $ O(\log(1/\eps) + k) $ bits, so the total communication of sending $\mA\mQ$ is $O(\log(1/\eps) + k) \cdot r$. This leads to a query lower bound of \[r \geq \Omega((\log(1/\delta))/(\eps^2 (\log(1/\eps) + k))),\] as claimed.
\end{proof}

\section{Experimental Details}\label{sec:exp_details} 
\subsection{Experimental Results on Synthetic Data}\label{sec:exp-synth}
We follow a similar experimental set-up as in \cite{dharangutte2021dynamic} and consider  small and large perturbations. 
We also report the average absolute error over all time steps and all trials.
In the small perturbation regime, our algorithm achieves errors   (average error: $0.0104$)  that are   negligible in comparison with DeltaShift (average error: $1.9804$) and other procedures.
In the high perturbation regime, our algorithm (average error: $1.6607$)  outperforms   Hutchinson's and Diffsum and is comparable with Deltashift (average error: $1.5868$).
We notice, across a variety of regimes,  that Hutchinson's estimator and Diffsum tend to accumulate estimation error over the dynamic updates, whereas our algorithm  and DeltaShift remain stable.
\begin{figure}[h]
\centering
\begin{subfigure}{.5\textwidth}
  \centering
  \includegraphics[scale=0.4]{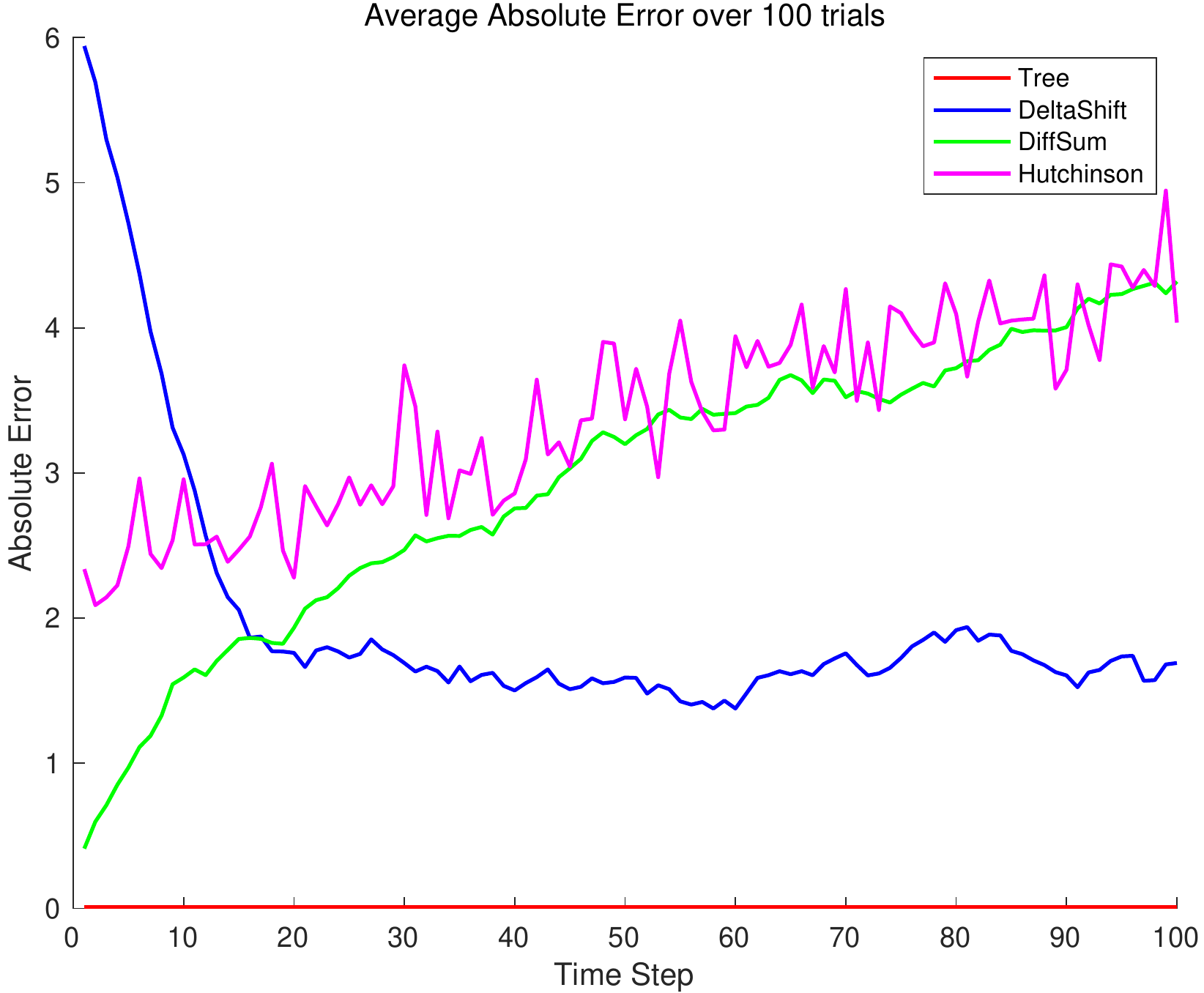}
  \caption{Low perturbation}
  \label{fig:sub1}
\end{subfigure}%
\begin{subfigure}{.5\textwidth}
  \centering
  \includegraphics[scale=0.4]{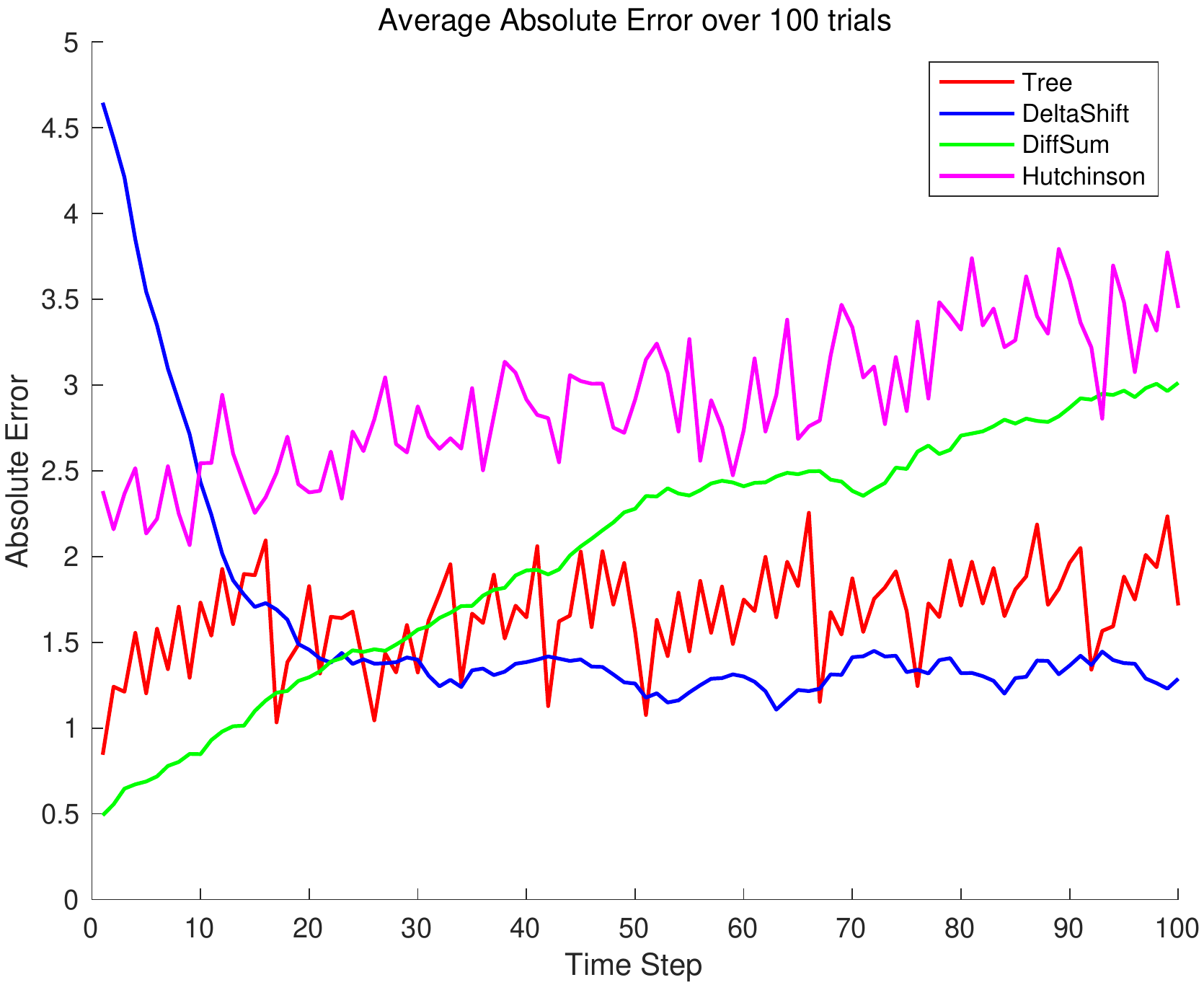}
  \caption{High perturbation}
  \label{fig:sub2}
\end{subfigure}
\caption{Synthetic data, Tree refers to our algorithm (Algorithm \ref{alg:main}). Query budget is $8,000$. We measure the error at step $i$ simply by absolute error $|t_i - \Tr \mA_i|$ }
\label{fig:test}
\end{figure}

\subsection{Experimental Setup}
\paragraph{Allocation of query budget.}
We allocate the same query budget in each time step of DeltaShift and in Hutchinson's estimator. For DiffSum, we allocate $1/5$ of the budget for estimating $\Tr\mA_1$ and an equal number of queries among the remaining steps. To optimize performance,   the number of groups in our algorithm is tuned.  

\paragraph{Experiments on synthetic data.} On both small and large perturbation experiments, we choose  the dimension to be $n=1000$. The first matrix $\mA_1$ in the sequence is a  symmetric matrix with random (unit-norm) eigenvectors and eigenvalues drawn uniformly from $[-1,1]$.  In the small perturbation regime, a random rank-$1$ matrix $\Delta_j = 5e^{-5}rgg^\top$ is  added in each time step, where $r$ is a random sign and $g$ is a   standard Gaussian in $n$ dimensions. 
In the large perturbation regime, each update is a random rank-$20$ positive semidefinite matrix.

\paragraph{Neural network weight matrices.} 
The network consists of two hidden layers of the same size, with standard ReLU activations. The mini-batch size is set to $60$ and learning rate is set to $0.01$.

We optimized the performance of our trace estimation algorithm by choosing its number of groups to be $20$.
\end{document}